\newtheorem{theorem}{Theorem}[section]
\newtheorem{definition}[theorem]{Definition}
\newtheorem{remark}[theorem]{Remark}
\newtheorem{lemma}[theorem]{Lemma}
\newtheorem{corollary}[theorem]{Corollary}
\begin{document}

\date{}

\title{Reasoning about proof and knowledge}

\author{Steffen Lewitzka
\thanks{Instituto de Matem\'atica e Estat\'istica,
Departamento de Ci\^encia da Computa\c c\~ao,
Universidade Federal da Bahia UFBA,
40170-110 Salvador -- BA,
Brazil,
e-mail: steffenlewitzka@web.de}}
\maketitle

\begin{abstract}
In previous work [Lewitzka, Log. J. IGPL 2017], we presented a hierarchy of classical modal systems, along with algebraic semantics, for the reasoning about intuitionistic truth, belief and knowledge. Deviating from G\"odel's interpretation of IPC in S4, our modal systems contain IPC in the way established in [Lewitzka, J. Log. Comp. 2015]. The modal operator can be viewed as a predicate for intuitionistic truth, i.e. proof. Epistemic principles are partially adopted from Intuitionistic Epistemic Logic IEL [Artemov and Protopopescu, Rev. Symb. Log. 2016]. In the present paper, we show that the S5-style systems of our hierarchy correspond to an extended Brouwer-Heyting-Kolmogorov interpretation and are complete w.r.t. a relational semantics based on intuitionistic general frames. In this sense, our S5-style logics are adequate and complete systems for the reasoning about proof combined with belief or knowledge. The proposed relational semantics is a uniform framework in which also IEL can be modeled. Verification-based intuitionistic knowledge formalized in IEL turns out to be a special case of the kind of knowledge described by our S5-style systems.
\end{abstract}

MSC: 03B45; 03B42; 03F45; 03G10\\ 

Keywords: modal logic, epistemic logic, intuitionistic logic, proof predicate, BHK interpretation, general frame, non-Fregean logic, self-reference\\

\section{Introduction}

In previous research \cite{lewjlc1, lewjlc2, lewsl, lewigpl}, we studied Lewis-style modal logics which have the property that strict equivalence $(\varphi\equiv\psi) := \square(\varphi\leftrightarrow\psi)$ satisfies the \textit{axioms of propositional identity}, i.e. certain identity axioms coming from Suszko's non-Fregean logics \cite{blosus, sus1}.\footnote{These natural axioms ensure that the identity connective $\equiv$ is a congruence relation modulo any given theory. We read $\varphi\equiv\psi$ as ``$\varphi$ and $\psi$ have the same meaning (denotation, \textit{Bedeutung})" or ``$\varphi$ and $\psi$ denote the same proposition". We strictly distinguish between formulas (syntactical objects) and propositions (semantic entities): a formula denotes a proposition. This is in accordance with our non-Fregean, intensional, view on logics: the semantics of a formula is, in general, more than its truth value. The Fregan Axiom $(\varphi\leftrightarrow\psi)\rightarrow (\varphi\equiv\psi)$ is not valid (see \cite{sus1}).} The condition `strict equivalence = propositional identity', satisfied by some logics in the vicinity of S1, particularly by S3--S5, warrants completeness w.r.t. a non-Fregean-style, algebraic semantics \cite{lewjlc1}. Logic $L$, introduced in \cite{lewjlc2}, has that property and combines classical propositional logic CPC with intuitionistic propositional logic IPC in the following sense:  
\begin{equation}\label{0}
\Phi\vdash_{IPC}\varphi \Leftrightarrow \square\Phi\vdash_{L}\square\varphi, \text{ where }\square\Phi:=\{\square\psi\mid\psi\in\Phi\},
\end{equation}
for any set of \textit{propositional} formulas $\Phi\cup\{\varphi\}$. In particular, $\varphi$ is a theorem of IPC iff $\square\varphi$ is a theorem of $L$. Thus, the map $\varphi\rightarrow\square\varphi$ is an embedding of IPC into $L$. $L$ is a classical modal logic for the reasoning about intuitionistic truth. The modal operator $\square$, applied to propositional formulas, can be regarded as a truth predicate for \textit{intuitionistic} truth in IPC. More precisely, for any prime theory $\Phi$ (i.e. rooted Kripke model) of IPC there is a model of $L$ such that for any propositional $\varphi$, $\square\varphi$ is classically true in the model of $L$ iff $\varphi$ is intuitionistically true in the corresponding Kripke model, i.e. $\varphi\in\Phi$. The other way round, every model of $L$ gives rise to such a corresponding prime theory of IPC.

More generally, the scheme $\square\varphi\leftrightarrow (\varphi\equiv\top)$ is valid for \textit{all formulas} $\varphi$ (see Theorem \ref{110} below), where $\varphi\equiv\top$ means that $\varphi$ holds intuitionistically. That scheme recalls the T-scheme (Convention T) of Tarski's truth theory. In our setting, the truth predicate $\square$ stands for intuitionistic truth and is an element of the object language.

Recall that the standard way to interpret IPC in a classical modal logic is given by G\"odel's translation $\varphi\mapsto\varphi'$ of IPC into S4, where $\varphi'$ results from $\varphi$ by prefixing every subformula with $\square$ (sometimes is used a different, though equivalent, translation). G\"odel showed that if $\varphi$ is a theorem of IPC, then $\varphi'$ is a theorem of S4. His conjecture that also the converse holds, i.e.
\begin{equation}\label{2}
\vdash_{IPC}\varphi \Leftrightarrow \text{ }\vdash_{S4}\varphi',
\end{equation}
was finally proved by McKinsey and Tarski (see, e.g., \cite{artbek, ottfei} for further historical background). The equivalence \eqref{2} particularly means that constructive reasoning is encoded in classical modal system S4 and can be recovered from it. In fact, G\"odel considered S4 as a provability calculus. Artemov \cite{art1} developed a logic of explicit proofs, called LP, giving an adequate semantics to G\"odel's provability calculus S4. LP represents an exact formalization of the BHK interpretation of IPC. Note, however, that S4 contains IPC only in codified form. That is, intuitionistic reasoning is not mirrored explicitly. In contrast, logic $L$ contains, in the sense of \eqref{0}, a faithful copy of IPC and reflects intuitionistic reasoning in a direct way. The modal operator is a predicate for instuitionistic truth: For any given model of $L$, `$\square\varphi$ holds classically $\Leftrightarrow$ $\varphi$ is intuitionistically true'. Of course, that equivalence does not hold in the context of S4. We believe these are properties that count in favor of $L$ as a calculus for the reasoning about proof as intuitionistic truth.

In this paper, we will argue that $L5$, i.e. the S5-style extension of $L$, is an adequate logic for the reasoning about proof in the sense of BHK semantics. This may appear surprising in view of the above mentioned classical results due to G\"odel, McKinsey, Tarski and Artemov which all rely on modal system S4 as a classical modal interpretation of intuitionistic logic. By results of \cite{lewjlc2}, IPC is contained in $L$ as a faithful copy via the embedding $\varphi\mapsto\square\varphi$. Intuitively, $L$ contains its own proof predicate: each formula $\square\varphi$ reads ``$\varphi$ is true in a constructive sense", even if $\square$ occurs in $\varphi$. In this paper, we shall describe that intuition by an \textit{extended} BHK interpretation with a proof-reading clause for the proof predicate itself, i.e. for formulas of the form $\square\varphi$. It turns out that the extended BHK semantics not only validates the axioms of $L$ but also the S5-style modal principles of the stronger logic $L5$. This means in particular that $L$ and its extensions $L3$ and $L4$ are not strong enough to formalize all reasoning principles valid under (extended) BHK semantics. A main result of our research is the construction of a relational semantics, based on intuitionistic general frames, which combines constructive reasoning with modal principles of system S5. This semantics can be viewed as a formal counterpart of the extended BHK interpretation. Our S5-style logics ($L5$ and its epistemic extensions) turn out to be sound and complete w.r.t. that relational semantics. This result formally confirms $L5$ as an adequate and complete classical logic for the reasoning about proof as intuitionistic truth. 

The existence of a proof predicate in the object language suggests an explicit distinction between \textit{actual proofs}, i.e. effected constructions, and \textit{possible proofs} as a kind of hypothetical constructions. $\square\varphi$ reads ``$\varphi$ has an actual proof", i.e. $\varphi$ is intuitionistically true. We say that ``$\varphi$ has a possible proof" (or ``a proof of $\varphi$ is possible"), notation: $\Diamond\varphi$, if $\neg\varphi$ has no actual proof, i.e. $\neg\square\neg\varphi$ holds classically. In this sense, the \textit{possibility of a proof} of a proposition means the absence of an actual proof of its falsehood. In a positive sense, we may understand a possible proof also as `conditions on a construction' instead of a construction itself.\footnote{We are inspired by discussions on the hypothetical judgment given in \cite{att1, att2}.} These conditions must not be in conflict with effected constructions. For example, consider $\square\neg\neg\varphi$ versus $\neg\square\neg\varphi$. The former formula says that $\neg\neg\varphi$ is intuitionistically true, intuitively: ``$\varphi$ cannot be false" or, in other words, ``$\neg\varphi$ has no possible proof". Of course, if $\neg\varphi$ has no possible proof, then $\neg\varphi$ has no actual proof. In classical logic $L$, we are able to formalize that intuitive fact by 
\begin{equation*}
\neg\Diamond\neg\varphi\rightarrow\neg\square\neg\varphi, \textit{ i.e.  }\square\neg\neg\varphi \rightarrow \neg\square\neg\varphi.
\end{equation*}
Actually, that formula is a theorem of $L$ (see the general case in item (iii) of Theorem \ref{130} below). Now, observe that that implication cannot be expressed in IPC, even if $\varphi$ is a pure propositional formula: while $\square\neg\neg\varphi$ corresponds to intuitionistic truth of $\neg\neg\varphi$, the weaker statement $\neg\square\neg\varphi$, ``$\neg\varphi$ is not intuitionistically true" =``$\varphi$ has a possible proof", corresponds to no formula in IPC. The more expressive logic $L$ makes the distinction between actual proof and possible proof explicit.\\

In \cite{lewigpl}, we enriched $L3$--$L5$ with epistemic axioms which are inspired by principles of Intuitionistic Epistemic Logic IEL introduced by Artemov and Protopopescu \cite{artpro}. IEL relies on the intuition that proof, as the strictest kind of a \textit{verification}, yields verification-based belief and knowledge. This is expressed by the axiom of intuitionistic co-reflection $\varphi\rightarrow K\varphi$. Furthermore, the classical knowledge axiom of reflection, $K\varphi\rightarrow\varphi$, is replaced by intuitionistic reflection $K\varphi\rightarrow\neg\neg\varphi$, which reads ``known propositions cannot be intuitionistically false".
IEL is in line with BHK semantics of IPC. G\"odel's translation $\varphi\mapsto\varphi'$ of IPC into classical modal logic extends to IEL, where, e.g., the axiom of intuitionistic co-reflection $\varphi\rightarrow K\varphi$ is translated as $\square(\square\varphi'\rightarrow\square K\square\varphi')$. This is in accordance with the proposed BHK clause for $K\varphi$ as ``a proof of $K\varphi$ is conclusive evidence of verification that $\varphi$ has a proof" \cite{artpro}, i.e. a proof of $K\varphi$ is a proof of a verification that $\varphi$ has a proof.\footnote{We would like to thank the anonymous referee for helpful comments.} According to that clause, $K\varphi$ can be read as ``it is verified that $\varphi$ has a proof". However, as pointed out in \cite{artpro}, IEL also captures the following possible reading: ``it is verified that $\varphi$ holds in some not specified constructive sense". We tend to the latter interpretation which seems to better harmonize with the informal and formal aspects of our approach. Accordingly, we will propose a weaker BHK clause for epistemic formulas $K\varphi$, one that is still compatible with basic principles of IEL, though is independent of the constraints imposed by G\"odel translation. 

Moreover, we propose in this paper a \textit{justification-based} interpretation of the kind of belief and knowledge modeled by our epistemic extensions of $L3$--$L5$ originally introduced in \cite{lewigpl}, in contrast to the \textit{verification-based} approach of IEL presented in \cite{artpro}. The agent believes/knows a proposition for some reason or justification. What the agent recognizes as a reason for her/his belief and knowledge depends on her/his internal conditions such as reasoning capabilities, experience, awareness etc. We assume the agent is rational enough to recognize an effected construction (actual proof) as a justification for belief and knowledge. However, the agent may be unable to associate the mere possibility of a proof of $\varphi$, i.e. the lack of an actual proof of $\neg\varphi$, with possible belief or knowledge of $\varphi$. That is, the mere possibility of a proof is not necessarily accepted as a justification. Under these assumptions, we have to reject intuitionistic co-reflection $\varphi\rightarrow K\varphi$ as a valid principle. Instead, its weaker classical version $\square\varphi\rightarrow \square K\varphi$, proposed in \cite{lewigpl}, is valid and ensures that actual proofs yield belief and knowledge. Nevertheless, our justification-based view on belief and knowledge is compatible with the verification-based one. If the agent accepts possible proofs as justifications for her/his possible belief and knowledge, then intuitionistic co-reflection $\varphi\rightarrow K\varphi$ is valid and all properties of IEL are restored. In the general case, however, a not perfectly rational agent may be unable or unwilling to recognize the mere possibility of a proof as a justification. 

In this sense, the verification-based approach turns out to be a special case of our justification-based approach. Actually, this is mirrored in our semantic formalization which is based on intuitionistic general frames: Both IEL and the epistemic extensions of $L5$ can be modeled and studied within the same framework of relational semantics. The exact relationship between both approaches now becomes explicit. In this framework of relational semantics, our modal version of co-reflection $\square\varphi\rightarrow\square K\varphi$ is valid. Intuitionistic co-reflection $\varphi\rightarrow K\varphi$, however, corresponds to a semantic condition that must be imposed as an additional constraint.

\section{An informal model of reasoning}

The aim of this section is to informally discuss a model that reflects our intuitions on (classical) reasoning about actual and possible proof, belief and knowledge.\footnote{We would like to point out that our modeling does not involve the concept of time or any dynamic behavior. A given model describes, from its classical point of view, a static situation regarding actual and possible proof, truth, belief and knowledge.} A formalization will be given in terms of the S5-style logics and their semantics presented in subsequent sections. Our purpose here is to illustrate that reasoning in these S5-style logics harmonizes with (a refined and extended version of) BHK interpretation. Recall that under  standard BHK interpretation,
\begin{itemize}
\item a proof of $\varphi\wedge\psi$ consists in a proof of $\varphi$ and a proof of $\psi$
\item a proof of $\varphi\vee\psi$ consists in a proof of $\varphi$ or a proof of $\psi$
\item a proof of $\varphi\rightarrow\psi$ consists in a construction that for any proof of $\varphi$ returns a proof of $\psi$
\item there is no proof of $\bot$.
\end{itemize}

Note that the proof-reading clause for implication involves implicitly the concept of `hypothetical proof' (see, e.g., \cite{att1, att2} for discussions). The establishment of $\varphi\rightarrow\psi$, in general, depends on `hypothetical constructions' for $\varphi$ and $\psi$, respectively. The distinction between actual and hypothetical proof becomes explicit in our more expressive, classical modal logics where $\square$ represents a proof predicate. In addition to the above example `$\square\neg\neg\varphi$ versus $\neg\square\neg\varphi$', we consider here the following two formulas: 
\begin{equation}\label{3}
\square(\varphi\rightarrow\psi)\text{ versus }\square\varphi\rightarrow\square\psi.
\end{equation}
While the former claims that a construction is established that converts any proof of $\varphi$ into a proof of $\psi$, the latter expresses a classical implication: the existence of an effected construction for $\varphi$ implies the existence of such a construction for $\psi$. Of course, the former statement is stronger than the latter. Hypothetical constructions are irrelevant in the second statement of \eqref{3}. That statement cannot be expressed by a formula in IPC.
We explicitly distinguish between the concepts of \textit{actual proof} as an effected construction (=intuitionistic truth) and \textit{possible proof} as a kind of hypothetical construction. If $\varphi$ has no actual proof, then the second statement of \eqref{3} is true. However, even if $\varphi$ has no actual proof, a proof of $\varphi$ may be possible and the first statement of \eqref{3} may be false ($\psi$ may have no possible proof). Recall that we say that $\varphi$ has a possible proof (or that a proof of $\varphi$ is possible) if there is no actual proof of $\neg\varphi$. Intuitively, a proposition has a possible proof if it is consistent with the actually proved propositions. In a constructive sense, we regard a possible proof as a hypothetical construction consistent with the effected constructions. More precisely, we consider a possible proof as a set of \textit{conditions on a construction} in a similar way as suggested by van Atten \cite{att1, att2}. Those conditions must not be in conflict with actual proofs. It will be helpful to view on possible proofs as concrete entities specified in that way.\footnote{As argued in \cite{att1, att2}, the notion of `hypothetical construction' can be avoided if one considers `conditions on a construction' instead of the construction itself: ``In order to establish $A\rightarrow B$, one has to conceive $A$ and $B$ as conditions on constructions, and to show that from the conditions specified by $A$ one obtains the conditions specified by $B$, according to transformations whose composition preserves mathematical constructibility." \cite{att1, att2}. Accordingly, we regard a \textit{possible proof} as a set of conditions that are not in conflict with effected constructions, i.e. with proved propositions. Of course, any actual proof also constitutes a possible proof.} 

Now, we are going to describe our informal model which, unsurprisingly, builds upon the intuition of Kripke semantics for IPC. We are given a set $\Phi_p$ of proved propositions, i.e. the propositions which have an actual proof (in this informal setting, we do not distinguish between formulas and propositions). Of course, all intuitionistic tautologies have actual proofs and are contained in $\Phi_p$. Moreover, since `actual proof = intuitionistic truth', we may think of $\Phi_p$ as a prime theory of IPC. $\Phi_p$ then represents the root of an intuitionistic Kripke model given by all prime theories that extend $\Phi_p$. According to our definition, a formula $\varphi$ has a possible proof iff $\neg\varphi$ has no actual proof iff $\neg\varphi\notin\Phi_p$ iff $\varphi$ belongs to some prime theory extending $\Phi_p$ iff $\varphi$ is contained in some maximal theory extending $\Phi_p$. 

Let us now further suppose that the propositional language contains a predicate $\square$ for actual proof. $\square\varphi$ reads ``there is an actual proof of $\varphi$", and $\Diamond\varphi:=\neg\square\neg\varphi$ reads ``there is a possible proof of $\varphi$". In order to model \textit{classical} reasoning about \textit{intuitionistic truth}, we choose a designated maximal theory $\Phi_m\supseteq\Phi_p$. $\Phi_m$ is the set of classically true propositions of our underlying model of reasoning. The predicate $\square$ for intuitionistic truth (=actual proof) should satisfy the following basic condition. For any $\varphi$,
\begin{equation}\label{5}
\square\varphi\in\Phi_m\Leftrightarrow \varphi\in\Phi_p.
\end{equation}
That is, `$\square\varphi$ is classically true' iff `$\varphi$ is intuitionistically true'. This biconditional is formalized in the object language by $\square\varphi\leftrightarrow (\varphi\equiv\top)$, see Theorem \ref{110} below.

We show that the following principles now are validated \textit{classically}, i.e. the respective formulas belong to $\Phi_m$:\\

(A1) $\square(\varphi\vee\psi)\rightarrow(\square\varphi\vee\square\psi)$. \eqref{5}, together with the fact that $\Phi_p$ is a prime theory, implies that all instances of (A1) belong to $\Phi_m$. Note that (A1) is in accordance with the BHK reading of disjunction. \\

(A2) $\square\varphi\rightarrow\varphi$. ``Actual proof implies classical truth." This follows from \eqref{5}. In fact, if $\square\varphi\in\Phi_m$, then $\varphi\in\Phi_p\subseteq \Phi_m$. Of course, (A2) is also plausible by our intuition on classical truth and actual proofs as effected constructions.\\

(K) $\square(\varphi\rightarrow\psi)\rightarrow(\square\varphi\rightarrow\square\psi)$. This is the principle of distribution for $\square$. Of course, $\varphi\rightarrow\psi\in\Phi_p$ together with $\varphi\in\Phi_p$ implies $\psi\in\Phi_p$. Hence, all instances of (K) belong to $\Phi_m$. \\

We expect that those principles are also \textit{intuitionistically} acceptable. In order to verify this by means of BHK semantics, we need an appropriate proof-reading clause for formulas of the form $\square\varphi$. We extend the standard BHK interpretation by the following clause for the proof predicate:
\begin{itemize}
\item A proof of $\square\varphi$ consists in presenting an actual proof of $\varphi$.
\end{itemize}

Since $\square\varphi$ reads ``there is an actual proof of $\varphi$", the above clause is plausible. We use the concept of `\textit{presenting} an actual proof' in an intuitive sense and leave its concrete meaning open. Nevertheless, we assume that it essentially relies on a \textit{proof-checking procedure} in the sense of Artemov \cite{art1} where proof-checking is established as a valid operation on proofs. We also assume that the procedure of checking a given \textit{actual} proof $s$ of $\varphi$ depends only on $s$ and $\varphi$. Then that procedure constitutes itself an actual proof, i.e. an effected construction. Consequently, by the above clause, a possible proof of $\square\varphi$ consists in an actual proof (see also Theorem \ref{130} (v) below). It follows that $\square\varphi$ has an actual proof or a proof of $\square\varphi$ is impossible. The latter means that $\neg\square\varphi$ has an actual proof. Thus: \\

\textit{Either there is an actual proof of $\square\varphi$ or there is an actual proof of $\neg\square\varphi$.}\\

This principle is also established by the following argumentation. Either there is an actual proof of $\varphi$ (an accessible object) or there is no actual proof of $\varphi$ (in other words: either $\varphi$ is intuitionistically true or $\varphi$ is not intuitionistically true). An actual proof of $\varphi$ yields (via proof-checking) an actual proof of $\square\varphi$. On the other hand, if $\varphi$ has no actual proof, then $\square\varphi$ cannot have any possible proof (in fact, a necessary condition on a construction for $\square\varphi$ is that $\varphi$ has an actual proof), and therefore $\neg\square\varphi = \square\varphi\rightarrow\bot$ has an actual proof: we may choose the identity function as an immediately given effected construction. Standard BHK semantics of disjunction then implies that $\square\varphi\vee\neg\square\varphi$ always has an actual proof, i.e. $\square (\square\varphi\vee\neg\square\varphi)$ is valid (see also Theorem \ref{130} (vii) below).\\

In the following, we list further examples of principles which are validated by extended BHK interpretation. Recall that we regard a proof of $\square\varphi$ as an effected construction that returns a proof-checked actual proof of $\varphi$. \\

(A4) $\square\varphi\rightarrow\square\square\varphi$.
Let $t$ be a proof of $\square\varphi$. As argued above, $t$ must be an actual proof. Then $t$ can be presented (proof-checked) by a procedure $u$. By definition, $u$ is a proof of $\square\square\varphi$. We have described a construction that converts any proof $t$ of $\square\varphi$ into a proof $u$ of $\square\square\varphi$.\\

(A5) $\neg\square\varphi\rightarrow\square\neg\square\varphi$. We describe a construction that for any proof $s$ of $\neg\square\varphi$ returns a proof $t$ of $\square\neg\square\varphi$. Let $s$ be a proof of $\neg\square\varphi$. Then $s$ is an actual proof or $s$ is a possible proof. If $s$ is only a possible proof, i.e. a set of conditions on a construction for $\neg\square\varphi =\square\varphi\rightarrow \bot$, then $s$ includes the condition that $\square\varphi$ has no proof. By definition, this condition must not be in conflict with actual proofs. By the discussion above, either $\square\varphi$ has an actual proof or $\neg\square\varphi$ has an actual proof. It follows that $\neg\square\varphi$ has an actual proof. In any case, $s$ implies the existence of an \textit{actual} proof of $\neg\square\varphi =\square\varphi\rightarrow \bot$. Then we have a concrete example of such a proof, namely the identity function as a trivial, immediately given, effected construction. Its presentation (proof-checking) is a procedure $t$ whose construction depends only on the given data. By definition, $t$ is a proof of $\square\neg\square\varphi$. Finally, a proof of $\neg\square\varphi\rightarrow\square\neg\square\varphi$ now is given by the function that for any (actual or possible) proof $s$ of $\neg\square\varphi$ returns the procedure $t$.\\

Also the principles (A1) and (A2) can be justified by extended BHK interpretation. This is clear for (A1) if one takes into account that an \textit{actual} proof of $\varphi\vee\psi$ requires, of course, \textit{actual} proofs of $\varphi$ or of $\psi$, respectively. Towards (A2), recall that a proof of $\square\varphi$ is a procedure that returns a proof-checked actual proof of $\varphi$. Then it is evident that there is a construction that for any proof of $\square\varphi$ returns a proof of $\varphi$. 

Finally, we not only justify distribution principle (K) but we show that the following stronger principle (A3) of modal logic S3 is intuitionistically acceptable in the sense of the extended BHK interpretation. We use here the following notation: if $s$ is a proof of $\square\chi$, then we also write $s.t$ instead of $s$ in order to express that $s$ is a construction that returns the proof-checked actual proof $t$ of $\chi$. \\

(A3) $\square(\varphi\rightarrow\psi)\rightarrow\square(\square\varphi\rightarrow\square\psi)$. Let $s.t$ be a proof of $\square(\varphi\rightarrow\psi)$, where $t$ is an actual proof of $\varphi\rightarrow\psi$. We show that $t$ gives rise to an actual proof of $\square\varphi\rightarrow\square\psi$. Let $u.v$ be a proof of $\square\varphi$, where $v$ is an actual proof of $\varphi$. Then $t(v)$ is an actual proof of $\psi$ and results in a proof $r.[t(v)]$ of $\square\psi$. We have described a function $q$, $s.t\mapsto q_{s.t}$, that for any proof $s.t$ of $\square(\varphi\rightarrow\psi)$ returns a function $q_{s.t}$. $q_{s.t}$ converts any proof $u.v$ of $\square\varphi$ into a proof $r.[t(v)]$ of $\square\psi$. Hence, $q_{s.t}$ is a proof of $\square\varphi\rightarrow\square\psi$. This shows that $\square(\varphi\rightarrow\psi)\rightarrow(\square\varphi\rightarrow\square\psi)$ is intuitionistically acceptable in the sense of extended BHK reading. All involved proofs are actual proofs. In particular, construction $q_{s.t}$ is an actual proof, and its presentation (proof-checking) yields a proof $p.[q_{s.t}]$ of $\square(\square\varphi\rightarrow\square\psi)$. This results in a construction that converts any proof $s.t$ of $\square(\varphi\rightarrow\psi)$ into a proof $p.[q_{s.t}]$ of $\square(\square\varphi\rightarrow\square\psi)$. Thus, (A3) is intuitionistically acceptable, too.\\

\textit{The modal axioms (A1)--(A5) are validated by the extended BHK interpretation. In this sense, they represent adequate principles for the reasoning about proof.} \\

However, (A4) and (A5) are not valid in our original model of reasoning. In order to fix this, we strengthen condition \eqref{5} to the following:
\begin{equation}\label{7}
\begin{split}
&\varphi\in\Phi_p\Leftrightarrow\square\varphi\in\Phi_p\\
&\varphi\not\in\Phi_p\Leftrightarrow\neg\square\varphi\in\Phi_p.
\end{split}
\end{equation}

Note that the biconditionals \eqref{5} remain valid. \eqref{7} implies $\square\varphi\vee\neg\square\varphi\in\Phi_p$, i.e. $\square\varphi\vee\neg\square\varphi$ is intuitionistically true, see Theorem \ref{130} (vii) below. Moreover, all instances of (A1)--(A5) are now intuitionistically true in the updated model. We show this only for the case of (A4). Let $\Psi\supseteq\Phi_p$ be a prime theory and suppose $\square\varphi\in\Psi$. Then by \eqref{7}, $\square\varphi\in\Phi_p$ and $\square\square\varphi\in\Phi_p\subseteq \Psi$. Thus, $\square\varphi\rightarrow\square\square\varphi\in\Phi_p$.\\

\textit{(A1)--(A5) represent adequate laws for the reasoning about proof, validated by extended BHK semantics and by our intuitive model of reasoning. Logic $L5$ contains the axioms (A1)--(A5) and can be seen as a formalization of that intuitive reasoning. $L5$ is (strongly) sound and complete w.r.t. a semantics of algebraic models that essentially correspond to our intuitive model of reasoning. Moreover, we will prove soundess and completeness of $L5$ w.r.t. a relational semantics based on intuitionistic general frames, i.e. a formal counterpart of informal BHK interpretation. In this sense, $L5$ is an adequate and complete modal logic for the classical reasoning about proof as intuitionistic truth.}\\

Intuitionistic Epistemic Logic (IEL) relies on the assumption that belief and knowledge are \textit{products of verification}, where verification is intuitively understood as ``evidence considered sufficiently conclusive for practical purposes" \cite{artpro}. Proof, as ``the most strict kind of verification", then yields \textit{verification-based} belief and knowledge. This is expressed by the axiom of intuitionistic co-reflection $\varphi\rightarrow K\varphi$. On the other hand, known propositions cannot be proved to be false. This is the principle of intuitionistic reflection, axiomatized by $K\varphi\rightarrow\neg\neg\varphi$. In the intuitionistic setting, those epistemic principles are in accordance with the BHK interpretation of IPC. Under the classical reading, however, they imply equivalence between knowledge and classical truth: $K\varphi\leftrightarrow\varphi$. The question arises in which way our classical modal logics $L3$--$L5$ can be extended by epistemic axioms such that intuitive principles of IEL are preserved or mirrored in some adequate way, and the metalogical implications \textit{intuitionistic truth} $\Rightarrow$ \textit{knowledge} $\Rightarrow$ \textit{classical truth} remain strict. We presented such epistemic extensions of $L3$--$L5$ in \cite{lewigpl} and proved completeness w.r.t. algebraic semantics. In a conceptual sense, however, it remained open which kind of belief and knowledge is described by those logics. We propose here a \textit{justification-based} interpretation of belief and knowledge as a generalization of the \textit{verification-based} approach to intuitionistic belief and knowledge given in \cite{artpro}. A proposition is believed or known for a given reason or justification.\footnote{Actually, this is the view on epistemic concepts that underlies certain Justification Logics, see \cite{artfit} for an overview.} What the agent accepts as a justification or reason is determined by her/his internal state. In any case, the agent should accept effected constructions as justifications, and justifications should be closed under Modus Ponens (otherwise, the agent would be too irrational). However, the agent may have very little confidence in possible proofs, and a proposition that has only a possible proof may appear unbelievable to the agent. Possible proofs are not necessarily recognized as justifications for (possible) belief and knowledge, intuitionistic co-reflection is not valid. However, if the agent recognizes possible proofs as justifications, then intuitionistic co-reflection is restored. In this sense, our justification-based view on belief and knowledge can be seen as a generalization of the verification-based approach of IEL presented in \cite{artpro}.

In the verification-based approach, intuitionistic belief and knowledge are understood as `products of verification'. Analogously, we consider here the following clause for a constructive reading of $K\varphi$:
\begin{itemize}
\item A proof of $K\varphi$ is the product of an epistemic justification of $\varphi$.
\end{itemize} 

Thus, $K\varphi$ can be read intuitionistically as ``$\varphi$ has an epistemic justification". We do not further specify here the intuitive concept of \textit{epistemic justification}. However, we assume that every actual proof constitutes an epistemic justification, and epistemic justifications are closed under Modus Ponens (this enables us to model justification sets as filters of a Heyting algebra).

Let us see how the proof-reading clause for $K\varphi$ together with the assumptions on epistemic justifications can be incorporated into our intuitive model of reasoning. The proof-reading clause is modeled by a correspondence between (possible or actual) proofs of $K\varphi$ and epistemic justifications of $\varphi$. For this purpose, we assume the existence of a function $E$ that assigns to each prime theory $\Psi\supseteq\Phi_p$ a set of propositions $E(\Psi)$ such that $\Phi_p\subseteq E(\Psi)$ and $E(\Psi)$ `is closed under Modus Ponens' (these conditions will ensure that $E(\Psi)$ corresponds to a filter, i.e. is a `theory of propositions'), and for all $\varphi$ holds:
\begin{equation*}
K\varphi\in\Psi\Leftrightarrow\varphi\in E(\Psi).
\end{equation*}
If $\varphi\in E(\Psi)$, then we say that $\varphi$ has an epistemic justification w.r.t. $\Psi$. We refer to $E(\Psi)$ as the justification set of $\Psi$.\\

The conditions imposed on function $E$ ensure that the extended intuitive model now validates the following epistemic principles:\\

(KBel) \textit{The distribution axiom $K(\varphi\rightarrow\psi)\rightarrow (K\varphi\rightarrow K\psi)$ holds intuitionistically}. In order to see this, it is enough to show that $K\varphi\rightarrow K\psi$ belongs to a prime theory $\Psi$ whenever $K(\varphi\rightarrow\psi)$ belongs to $\Psi$. This follows from the properties of $E$. The formalized statement is shown in the proof of Theorem \ref{940} below. \\

\textit{Every actual proof is recognized by the agent as an epistemic justification. Possible proofs, however, are in general too weak to be considered as justifications.}\\
In fact, by properties of function $E$, we have in particular $\Phi_p\subseteq E(\Phi_p)$. So if $\varphi$ has an actual proof, i.e. $\varphi\in\Phi_p$, then $\varphi\in E(\Phi_p)$ and $\varphi$ has a justification w.r.t. $\Phi_p$. On the other hand, $\Psi\subseteq E(\Psi)$ does not hold in general for arbitrary prime theories $\Psi\supseteq\Phi_p$.\\

(CoRe) \textit{The modal version of co-reflection, $\square\varphi\rightarrow\square K\varphi$, is intuitionistically validated by our model of reasoning.}\\
Let $\Psi\supseteq\Phi_p$ be a prime theory and suppose $\square\varphi\in\Psi$. Since (A4) is intuitionistically valid in our model, we get $\square\square\varphi\in\Psi$. By condition \eqref{7} above, $\square\varphi\in\Phi_p$ and $\varphi\in\Phi_p$. By properties of function $E$, $\Phi_p\subseteq E(\Phi_p)$. Thus, $\varphi\in E(\Phi_p)$ and therefore $K\varphi\in\Phi_p$ and $\square K\varphi\in\Phi_p\subseteq\Psi$. Thus, $\square\varphi\rightarrow\square K\varphi\in\Phi_p$.\\  

\textit{Intuitionistic reflection} (IntRe), $K\varphi\rightarrow\neg\neg\varphi$, \textit{is intuitionistically valid in our model iff the following holds: For any prime theory $\Psi$, the justification set $E(\Psi)$ is contained in every maximal theory that extends $\Psi$.}\\
This follows readily from properties of our model (see the proof of Theorem \ref{940} for a formalization). (IntRe) implies in particular that every justification set is consistent, i.e. contained in some maximal theory (`knowable propositions cannot be intuitionistically false'), and the justification set $E(\Phi_p)$ is contained in the maximal theory $\Phi_m$ (`intuitionistically known propositions are classically true').\\

\textit{We have shown that, additionally to (A1)--(A5), also the postulated epistemic principles are intuitionistically valid in our informal model of reasoning (where the validity of (IntRe) depends on additional semantic constraints). These are the principles of reasoning underlying our S5-style epistemic modal logics presented below.}\\

Finally, we would like to mention that (CoRe) $\square\varphi\rightarrow\square K\varphi$ is also validated by the extended BHK interpretation. Let $s.t$ be a proof of $\square\varphi$, where $t$ is an actual proof of $\varphi$. Then, as an actual proof, $t$ is recognized as an epistemic justification for $\varphi$. By the proof-reading clause for $K\varphi$, $t$ yields a proof $u$ of $K\varphi$. Since $t$ is an actual proof, we may assume that $u$ is an actual proof, too. Proof-checking yields a proof $v.u$ of $\square K\varphi$. We have described a construction that for any proof of $\square\varphi$ returns a proof of $\square K\varphi$.\\

The original axiom of intuitionistic co-reflection $\varphi\rightarrow K\varphi$ is the only principle from IEL that is not validated by our informal model. That axiom corresponds to the semantic condition 
\begin{equation}\label{8}
\Psi\subseteq E(\Psi),\text{ for all prime theories }\Psi\supseteq\Phi_p.
\end{equation}
\eqref{8} expresses that every (possible or actual) proof yields a justification. This is obviously stronger than
\begin{equation}\label{9}
\Psi_p\subseteq E(\Psi_p).
\end{equation}
Indeed, \eqref{9} only expresses that \textit{actual} proofs yield justifications. The comparison `\eqref{8} versus \eqref{9}' reveals semantically the difference between the verification-based approach of IEL and our justification-based approach. If we would impose the stronger condition \eqref{8} as a semantic constraint on our model of reasoning, then all epistemic principles of IEL would be intuitionistically true.

\section{Axiomatization and Algebraic Semantics}

The object language is inductively defined over an infinite set of variables $x_0$, $x_1$, ... , logical connectives $\wedge$, $\vee$, $\rightarrow$, $\bot$, the modal operator $\square$ and the epistemic operator $K$. $Fm$ is the set of all formulas, and $Fm_0\subseteq Fm$ is the set of all propositional formulas, i.e. those formulas of $Fm$ that contain neither the modal operator $\square$ nor the epistemic operator $K$. Finally, $Fm_1\subseteq Fm$ is the modal propositional language given by all formulas without epistemic operator $K$. We use the notation $\varphi[x:=\psi]$ to denote the formula that results from $\varphi$ by substituting simultaneously all occurrences of variable $x$ with formula $\psi$ (this is defined by induction on $\varphi$). Furthermore, we shall use the following abbreviations:\\

\noindent $\neg\varphi :=\varphi\rightarrow\bot$\\
$\top :=\neg\bot$\\
$\varphi\leftrightarrow\psi :=(\varphi\rightarrow\psi)\wedge (\psi\rightarrow\varphi)$\\
$\varphi\equiv\psi := \square(\varphi\leftrightarrow\psi)$ (``propositional identity = strict equivalence") \\
$\square\Phi :=\{\square\psi\mid\psi\in\Phi\}$, for $\Phi\subseteq Fm$\\
$\Diamond\varphi :=\neg\square\neg\varphi$\\

We consider the following list of \textbf{Axiom Schemes}\\

\noindent (INT) all theorems of IPC and their substitution-instances\footnote{A substitution-instance of $\varphi$ is the result of uniformly replacing variables in $\varphi$ by formulas of $Fm$.}\\
(A1) $\square(\varphi\vee\psi)\rightarrow(\square\varphi\vee\square\psi)$ (disjunction property) \\
(A2) $\square\varphi\rightarrow\varphi$\\
(A3) $\square(\varphi\rightarrow\psi)\rightarrow\square(\square\varphi\rightarrow\square\psi)$\\
(A4) $\square\varphi\rightarrow\square\square\varphi$\\
(A5) $\neg\square\varphi\rightarrow\square\neg\square\varphi$\\
(KBel) $K(\varphi\rightarrow\psi)\rightarrow (K\varphi\rightarrow K\psi)$ (distribution of belief)\\
(CoRe) $\square\varphi\rightarrow \square K\varphi$ (co-reflection)\\
(IntRe) $K\varphi\rightarrow\neg\neg\varphi$ (intuitionistic reflection)\\
(E4) $K\varphi\rightarrow K K\varphi$ (positive introspection) \\
(E5) $\neg K\varphi\rightarrow K \neg K\varphi$ (negative introspection) \\ 
(PNB) $K\varphi\rightarrow\square K\varphi$ (positive necessitation of belief) \\
(NNB) $\neg K\varphi\rightarrow\square\neg K\varphi$ (negative necessitation of belief)\\

\noindent and the following \textbf{Theorem Scheme} (TND) of \textit{tertium non datur}\\

\noindent (TND) $\varphi\vee\neg\varphi$.\\

Our inference rules are Modus Ponens MP ``From $\varphi$ and $\varphi\rightarrow\psi$ infer $\psi$", and Axiom Necessitation AN ``If $\varphi$ is an axiom, then infer $\square\varphi$". Note that rule AN applies only to the axioms of a given system but not to the theorems; in particular, AN does not apply to the theorems of the form (TND).

On this basis, we define a hierarchy of deductive systems. We shall see that in the context of our S5-style logics, i.e. those systems of our hierarchy containing (A4) and (A5), the axiomatization can be slightly simplified, see Corollary \ref{90} and Theorem \ref{140} (vii). For instance, (A3) can be replaced with the usual distribution law, and (CoRe) can be replaced with the simpler scheme $\square\varphi\rightarrow K\varphi$. The epistemic axioms (E4), (E5), (PNB), (NNB) seem to have no plausible validation under the extended BHK interpretation (epistemic justifications are, in general, not strong enough to warrant (E4) and (E5)). We regard those axioms as additional epistemic laws that go beyond the established BHK validated principles. Moreover, we shall see that modulo the modal axioms of S5, (NNB) derives from (PNB), see Theorem \ref{140} (vii).

Our deductive systems are based on principles of Lewis modal logics. Recall that the systems S1--S3 were originally proposed by C. I. Lewis as formalizations of the concept of \textit{strict implication} $\square(\varphi\rightarrow\psi)$ (see, e.g., \cite{hugcre} for a discussion). In the language of propositional modal logic $Fm_1$, Lewis system S1 can be axiomatized in the following way (on a basis due to E. J. Lemmon). The axioms are given by all classical tautologies along with their substitution-instances + axiom (A2) + the following transitivity axiom
\begin{equation}\label{20}
\square(\varphi\rightarrow\psi)\rightarrow (\square(\psi\rightarrow\chi)\rightarrow\square(\varphi\rightarrow\chi)).
\end{equation}
As inference rules we have MP, AN and the rule of Substitution of Proved Strict Equivalents SPSE ``From $\varphi\equiv\psi$ infer $\chi[x:=\varphi]\equiv\chi[x:=\psi]$".\\   

In \cite{lewjlc1, lewjlc2, lewsl}, we proposed to interpret \textit{strict equivalence} $\square(\varphi\leftrightarrow\psi)$ as \textit{propositional identity} $\varphi\equiv\psi$. That is, we considered Lewis-style modal logics where strict equivalence $(\varphi\equiv\psi):=\square(\varphi\leftrightarrow\psi)$ satisfies the following identity axioms: \\

\noindent (Id1) $(\varphi\equiv\varphi)$\\
(Id2) $(\varphi\equiv\psi)\rightarrow (\varphi\leftrightarrow\psi)$\\
(Id3) $(\varphi\equiv\psi)\rightarrow (\chi[x:=\varphi]\equiv\chi[x:=\psi])$\\

These axioms come from Suszko's \textit{non-Fregean logics} where the identity connective $\equiv$ is given as a primitive symbol of the language (see, e.g, \cite{blosus}). If the symbol $\equiv$ is defined as strict equivalence, then (Id1) and (Id2) are obviously satisfied in system S1. We refer to axiom (Id3) as the Substitution Principle SP. Obviously, SP is stronger than inference rule SPSE of S1. We showed in \cite{lewjlc1} that S1+SP, i.e. the system that results from S1 by adding all formulas of the form SP as theorems, has a natural algebraic semantics that extends straightforwardly to semantics for S3--S5. Recall that Lewis system S3 results from S1 by adding (A3) as axiom scheme (see \cite{hugcre}). It turns out that S3 is the weakest Lewis modal system where strict equivalence not only satisfies (Id1) and (Id2) but also (Id3) (see \cite{lewjlc1, lewsl}). Thus, we have `strict equivalence = propositional identity' in S3, a condition that also holds in the modal systems studied in the present paper. As a consequence, we may adopt the style of non-Fregean semantics presented in \cite{lewjlc1}. Now, let us define the modal logics that are relevant for the present research.

\begin{itemize}
\item The weakest system in our hierarchy is logic $L$, formalized in the propositional modal language $Fm_1$. $L$ is given by the axiom schemes (INT), (A1), (A2) and \eqref{20}. All formulas of the form (TND) and SP (i.e. (Id3)) are added as theorems. The inference rules are MP and AN. (Recall that AN applies only to axioms of the given system.)
\item Modal system $L3$ is given by the axiom schemes (INT), (A1), (A2), (A3), theorem scheme (TND) and the rules of MP and AN. $L4$ results from $L3$ by adding (A4) as axiom scheme, and $L5$ results from $L4$ by adding (A5) as axiom scheme.
\item Now, we consider the full language $Fm$ and define epistemic and modal extensions of $L3$ by adding further axiom schemes:
\begin{itemize}
\item $EL3^-=L3+(KBel)+(CoRe)$
\item $EL4^-=EL3^-+(A4) = L4+(KBel)+(CoRe)$
\item $EL5^-=EL4^-+(A5) = L5+(KBel)+(CoRe)$
\item $E4Ln^-=ELn^-+(E4)$, $n\in\{3,4,5\}$
\item $E5Ln^-=E4Ln^-+(E5)$, $n\in\{3,4,5\}$
\item $E6Ln^-=ELn^-+(PNB)+(NNB)$, $n\in\{3,4,5\}$
\item $ELn=ELn^-+(IntRe)$, $n\in\{3,4,5\}$
\item $EkLn=EkLn^-+(IntRe)$, $k\in\{4,5,6\}$ and $n\in\{3,4,5\}$.
\end{itemize} 
Systems containing scheme (IntRe) are regarded as logics of knowledge while systems without that scheme are regarded as logics of belief.
\end{itemize} 

Observe that the notation is organized in the following way. In $ELn^-$, $ELn$, and in $EkLn^-$, $EkLn$ ($k=4,5$; $n=3,4,5$), the index $n$ refers to extensions by corresponding modal laws from S3, S4, S5, respectively, whereas the index $k$ refers to corresponding epistemic extensions. Exceptions from that rule are $E6Ln^-$ and $E6Ln$, where $n$ still refers to the corresponding modal extension, and number $6$ is chosen here to indicate the addition of the two bridge axioms (PNB) and (NNB) to $ELn^-$ and $ELn$, respectively.

The notion of derivation is defined in the usual way: Suppose $\mathcal{L}$ is one of our modal systems and $\Phi\cup\{\varphi\}$ is a subset of the corresponding object language: $Fm_1$ or $Fm$. We say that $\varphi$ is derivable from $\Phi$ in $\mathcal{L}$, notation: $\Phi\vdash_\mathcal{L}\varphi$, if there is a finite sequence $\varphi_1,...,\varphi_m=\varphi$ such that for each $\varphi_i$ ($i=1,...,m$), either is $\varphi_i$ an axiom of $\mathcal{L}$ or $\varphi_i\in\Phi$ or $\varphi_i$ is of the form (TND) or there is an axiom $\psi$ of $\mathcal{L}$ such that $\varphi_i=\square\psi$ (application of AN) or there are formulas $\varphi_j$, $\varphi_k=\varphi_j\rightarrow\varphi_i$ occurring in the sequence, where $j,k<i$ (application of MP).  

System $L$ was originally introduced in \cite{lewjlc2} as a minimal modal logic satisfying the condition `strict equivalence = propositional identity' and combining classical and intuitionistic propositional logic in the following sense: $L$ is a conservative extension of CPC and for any set of propositional formulas $\Phi\cup\{\varphi\}\subseteq Fm_0$ it holds that
\begin{equation}\label{30}
\Phi\vdash_{IPC}\varphi\text{ }\Leftrightarrow\text{ }\square\Phi\vdash_L\square\varphi.
\end{equation}

The stronger logic $L3$, first considered in \cite{lewigpl}, inherits these properties and distinguishes explicitly between intuitionistic and classical principles: while all axioms are supposed to be intuitionistically acceptable, the unique theorem scheme of \textit{tertium non datur} (TND) represents a classical law. The S3 principle (A3), contained in $L3$ as an axiom, ensures that all instances of SP are derivable. Actually, one can show a stronger fact: all instances of SP prefixed by $\square$ are derivable, i.e. $L3$ contains $\square$SP (see \cite{lewjlc1} for a proof where it is shown that S3 contains $\square$SP). Obviously, we get the following hierarchy: $L\subseteq L3\subseteq L4\subseteq L5$.

The systems $EL3^-$ and $EL3$--$EL5$ were introduced in \cite{lewigpl} as epistemic/modal extensions of $L3$. These classical systems seem to reflect in a sense the basic principles of Intuitionistic Epistemic Logic introduced in \cite{artpro}. The precise relationship between the S5-style modal systems of our hierarchy (i.e. those containing $L5$) and Intuitionistic Epistemic Logic will become explicit by means of the uniform framework of relational semantics presented below.

\begin{lemma}\label{90}
Our logics of belief form the following hierarchies:
\begin{itemize}
\item $EL3^-\subseteq EL4^-\subseteq EL5^-$
\item $ELn^-\subseteq E4Ln^-\subseteq E5Ln^- \subseteq E6Ln^-$, for $n=3,4,5$
\item $EkL3^-\subseteq EkL4^-\subseteq EkL5^-$, for $k=4,5,6$. 
\end{itemize}
Corresponding hierarchies hold for our logics of knowledge.
\end{lemma}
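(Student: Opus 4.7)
The plan is to dispatch the hierarchies by separating the inclusions that hold by definition from the one inclusion that requires a derivation. For $EL3^-\subseteq EL4^-\subseteq EL5^-$, each successive system adds exactly one axiom scheme ((A4), then (A5)) to the previous one, so containment of derivations is immediate from the definition of $\vdash_{\mathcal L}$. The same observation handles $ELn^-\subseteq E4Ln^-\subseteq E5Ln^-$ (adding (E4), then (E5)) and the third hierarchy $EkL3^-\subseteq EkL4^-\subseteq EkL5^-$ for $k=4,5,6$ (the epistemic base is held fixed while the modal base is strengthened via $L3\subseteq L4\subseteq L5$, already present at the pure modal level).

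The only inclusion that is not purely bookkeeping is $E5Ln^-\subseteq E6Ln^-$, because $E5Ln^-$ is obtained from $ELn^-$ by adding the introspection axioms (E4) and (E5), whereas $E6Ln^-$ is obtained by adding instead the bridge axioms (PNB) and (NNB). What I would verify is that, in the presence of (CoRe) and (A2) (both available inside $ELn^-$), the bridge axioms actually derive (E4) and (E5). For (E4), I would chain three implications obtained by modus ponens: (PNB) gives $K\varphi\to\square K\varphi$; (CoRe), applied to the instance where the ``inner'' formula is $K\varphi$, gives $\square K\varphi\to\square KK\varphi$; and (A2) with $KK\varphi$ gives $\square KK\varphi\to KK\varphi$. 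For (E5) the same pattern applies with (NNB) in place of (PNB): $\neg K\varphi\to\square\neg K\varphi$, then $\square\neg K\varphi\to\square K\neg K\varphi$ by (CoRe), then $\square K\neg K\varphi\to K\neg K\varphi$ by (A2).

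For the corresponding hierarchies in the logics of knowledge, I would simply observe that $EkLn$ is obtained from $EkLn^-$ (and $ELn$ from $ELn^-$) by adjoining the single scheme (IntRe). Since adding the same axiom scheme to both sides preserves inclusion of deductive closures, every inclusion established above transfers verbatim to the knowledge versions.

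I do not anticipate a genuine obstacle. The only place where one has to do more than read off the definitions is the derivation of (E4) and (E5) inside $E6Ln^-$, and this amounts to two three-step chains of modus ponens using (CoRe), (A2), and the bridge axioms; in particular no appeal to (A3), (A4), (A5), or rule AN is needed, which makes the argument uniform in $n\in\{3,4,5\}$.
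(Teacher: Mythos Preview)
Your identification of $E5Ln^-\subseteq E6Ln^-$ as the only non-trivial inclusion is correct, and the three-step chains (PNB)/(NNB), (CoRe), (A2) are exactly the right ingredients. However, there is a genuine gap in your argument, and it is hidden in the very sentence where you say ``in particular no appeal to \ldots\ rule AN is needed.''

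In these systems, rule AN applies only to \emph{axioms}, not to arbitrary theorems. In $E5Ln^-$ the schemes (E4) and (E5) are axioms, so $\square(K\varphi\to KK\varphi)$ and $\square(\neg K\varphi\to K\neg K\varphi)$ are theorems of $E5Ln^-$ via AN. For the inclusion $E5Ln^-\subseteq E6Ln^-$ to hold, these \emph{boxed} formulas must be derivable in $E6Ln^-$ as well. But (E4) and (E5) are not axioms of $E6Ln^-$, so you cannot obtain their boxed versions by applying AN after the fact. Merely deriving (E4) and (E5) as theorems of $E6Ln^-$ does not suffice; any $E5Ln^-$-derivation that invokes AN on (E4) or (E5) cannot be simulated. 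The paper addresses precisely this point: it shows that $\square(K\varphi\to KK\varphi)$ and $\square(\neg K\varphi\to K\neg K\varphi)$ are theorems of $E6Ln^-$ by applying AN to the \emph{axioms} (PNB), (CoRe), (A2) and then chaining the resulting boxed implications with the transitivity scheme $\square(\varphi\to\psi)\to(\square(\psi\to\chi)\to\square(\varphi\to\chi))$ and MP. Your chains are correct at the unboxed level; you just need to run them one level up, and for that AN (on the three axioms, not on the conclusion) is indispensable.
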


\begin{proof}
Most of the inclusions follow immediately from the definitions. It remains to prove $E5Ln^- \subseteq E6Ln^-$, for $n=3,4,5$. It is enough to show that $\square (K\varphi\rightarrow K K\varphi)$ and $\square (\neg K\varphi\rightarrow K \neg K\varphi)$ are theorems of $E6Ln^-$. Observe that $K\varphi\rightarrow\square K\varphi$ is an instance of (PNB), $\square K\varphi\rightarrow\square K K\varphi$ is an instance of (CoRe), and $\square KK\varphi\rightarrow KK\varphi$ is an instance of (A2). Then rule AN along with transitivity axiom \eqref{20} and rule MP yields $\square (K\varphi\rightarrow K K\varphi)$. On the other hand, $\neg K\varphi\rightarrow\square \neg K\varphi$ is an instance of (NNB),  $\square\neg K\varphi\rightarrow\square K\neg K\varphi$ is an instance of (CoRe) and $\square K\neg K\varphi\rightarrow K \neg K\varphi$ is an instance of (A2). In the same way as before, we derive $\square (\neg K\varphi\rightarrow K \neg K\varphi)$. 
\end{proof}

The full Necessitation Rule of normal modal logics is, in general, not applicable in our systems. However, the rule is valid in the `intuitionistic parts' of logics containing (A4) according to the following result.

\begin{lemma}\label{80}
Let $\mathcal{L}$ be a logic of our hierarchy containing axiom scheme (A4). If $\varphi$ is a theorem of $\mathcal{L}$ derivable without (TND), then $\square\varphi$ is a theorem of $\mathcal{L}$.
\end{lemma}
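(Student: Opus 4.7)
The plan is to argue by induction on the length of a (TND)-free derivation $\varphi_1,\ldots,\varphi_m=\varphi$ of $\varphi$ in $\mathcal{L}$, showing that $\square\varphi_i$ is a theorem of $\mathcal{L}$ for each $i\le m$. By the definition of derivation (with no hypothesis set and with the (TND) clause forbidden) there are only three cases left to handle: $\varphi_i$ is an axiom, $\varphi_i=\square\psi$ arises by AN from an axiom $\psi$, or $\varphi_i$ is obtained by MP from earlier $\varphi_j$ and $\varphi_k=\varphi_j\to\varphi_i$.

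The axiom case is disposed of immediately by AN. The AN case is where axiom (A4) becomes indispensable: if $\psi$ is an axiom, then AN gives $\square\psi$ as a theorem, and since (A4) supplies the axiom $\square\psi\to\square\square\psi$, one application of MP yields $\square\square\psi=\square\varphi_i$. For the MP case, the induction hypothesis gives $\square\varphi_j$ and $\square(\varphi_j\to\varphi_i)$ as theorems of $\mathcal{L}$, and I then need the distribution principle (K), $\square(\alpha\to\beta)\to(\square\alpha\to\square\beta)$, as a theorem of $\mathcal{L}$ in order to conclude $\square\varphi_i$ by two further applications of MP.

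The remaining task is therefore to certify (K) in $\mathcal{L}$. Every system of the hierarchy containing (A4) extends $L4=L3+(A4)$, so (A3) and (A2) are both available. Composing the (A3) instance $\square(\varphi_j\to\varphi_i)\to\square(\square\varphi_j\to\square\varphi_i)$ with the (A2) instance $\square(\square\varphi_j\to\square\varphi_i)\to(\square\varphi_j\to\square\varphi_i)$ by transitivity of implication (a theorem of (INT)) delivers the required instance of (K). I do not expect any genuine obstacle; the main subtlety is simply to note why (TND) must be excluded — (TND) is a theorem scheme, not an axiom, so AN does not apply to it and its instances cannot in general be boxed, which is exactly what makes the (TND)-free hypothesis essential.
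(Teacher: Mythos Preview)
Your proof is correct and follows essentially the same inductive argument as the paper's own proof, handling the three cases (axiom, AN, MP) in the same way. Your version is simply more explicit, in particular in spelling out how the distribution law (K) follows from (A3) and (A2), which the paper invokes without justification.
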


\begin{proof}
This is an induction on the length of a derivation. If axioms occur in the derivation, then apply AN. If rule MP occurs, then apply the induction hypothesis and the modal distribution law. If some formula $\psi$ in the derivation is obtained by rule AN, i.e. $\psi=\square\chi$, for some axiom $\chi$, then we derive $\square\psi$ by applying (A4) and rule MP. 
\end{proof}

Logics with axiom scheme (A4) can be axiomatized in a slightly simpler way:

\begin{corollary}\label{90}
Let $\mathcal{L}$ be a logic of our hierarchy containing (A4). Replacing the axiom schemes (A3) and co-reflection (CoRe) with the usual distribution law $\square(\varphi\rightarrow\psi)\rightarrow(\square\varphi\rightarrow\square\psi)$ and the weaker version $\square\varphi\rightarrow K\varphi$ of (CoRe), respectively, results in a system which is deductively equivalent with $\mathcal{L}$.
\end{corollary}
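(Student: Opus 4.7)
The plan is to prove deductive equivalence by showing that in the presence of the common axioms --- in particular (INT), (A2), and (A4) --- the pair (A3), (CoRe) derives the pair consisting of the usual distribution law $\square(\varphi\rightarrow\psi)\rightarrow(\square\varphi\rightarrow\square\psi)$ and the weaker co-reflection $\square\varphi\rightarrow K\varphi$, and conversely.

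The forward direction (from old axioms to new) is essentially immediate from (A2). The usual distribution law follows by composing the (A3) instance $\square(\varphi\rightarrow\psi)\rightarrow\square(\square\varphi\rightarrow\square\psi)$ with the (A2) instance $\square(\square\varphi\rightarrow\square\psi)\rightarrow(\square\varphi\rightarrow\square\psi)$ using transitivity of implication in (INT). Similarly, the weaker co-reflection $\square\varphi\rightarrow K\varphi$ follows by composing (CoRe), $\square\varphi\rightarrow\square K\varphi$, with the (A2) instance $\square K\varphi\rightarrow K\varphi$.

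The reverse direction (from new axioms to old) is where the argument has to work a bit harder, and it is here that I would lean on the interplay between rule AN and axiom (A4). Since the weaker co-reflection and the usual distribution law are axioms of the modified system, AN applies to both of them (alternatively, one may invoke Lemma \ref{80} as both are derivable without (TND) in systems containing (A4)). To recover (CoRe), apply AN to $\square\varphi\rightarrow K\varphi$ to obtain $\square(\square\varphi\rightarrow K\varphi)$, then use the distribution law (instantiated at $\square\varphi$ and $K\varphi$) and MP to derive $\square\square\varphi\rightarrow\square K\varphi$, and finally compose with the (A4) instance $\square\varphi\rightarrow\square\square\varphi$ to conclude $\square\varphi\rightarrow\square K\varphi$. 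To recover (A3), apply AN to the distribution law, obtaining $\square[\square(\varphi\rightarrow\psi)\rightarrow(\square\varphi\rightarrow\square\psi)]$, then apply the distribution law itself at $\alpha=\square(\varphi\rightarrow\psi)$ and $\beta=\square\varphi\rightarrow\square\psi$, together with MP, to derive $\square\square(\varphi\rightarrow\psi)\rightarrow\square(\square\varphi\rightarrow\square\psi)$; composing with the (A4) instance $\square(\varphi\rightarrow\psi)\rightarrow\square\square(\varphi\rightarrow\psi)$ gives (A3).

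The main obstacle, modest as it is, lies in the (A3) derivation: one has to remember that once distribution is an axiom, AN promotes it to a boxed theorem, which can then itself be unpacked by a second application of distribution --- a pattern that genuinely needs (A4) to bridge the extra $\square$ accumulated on the antecedent. Every other inference in the proof is a straightforward composition using (INT), (A2), MP and AN.
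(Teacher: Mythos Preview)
Your proposal is correct and follows essentially the same route as the paper: derive the weaker schemes from the stronger ones via (A2) and transitivity, and recover the stronger ones by applying AN to the new axioms, then distribution, then (A4). The one point the paper makes explicit that you leave implicit is that, because rule AN is tied to the \emph{axioms} of each system, full deductive equivalence also requires the $\square$-prefixed versions of the swapped schemes to be theorems in the opposite system; the paper secures this (in both directions) by invoking Lemma~\ref{80}, which applies since none of your derivations use (TND).
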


\begin{proof}
It is clear that those obviously weaker principles are theorems of $\mathcal{L}$. By Lemma \ref{80}, we also derive $\square(\square(\varphi\rightarrow\psi)\rightarrow(\square\varphi\rightarrow\square\psi))$ and $\square(\square\varphi\rightarrow K\varphi)$ (alternatively, we may derive these formulas applying the S1 transitivity principle \eqref{20} above). The other way round, we now suppose $\square(\varphi\rightarrow\psi)\rightarrow(\square\varphi\rightarrow\square\psi)$ and $\square\varphi\rightarrow K\varphi$ are given as axioms and show that (A3) and (CoRe), prefixed by $\square$, are derivable, provided (A4) is available. Rule AN yields $\square(\square\varphi\rightarrow K\varphi)$, and distribution along with MP yields $\square\square\varphi\rightarrow \square K\varphi$. Then, by (A4) $\square\varphi\rightarrow\square\square\varphi$ and transitivity of implication, we get (CoRe). On the other hand, AN applied to the distribution axiom, along with distribution itself and MP yields $\square\square(\varphi\rightarrow\psi)\rightarrow\square(\square\varphi\rightarrow\square\psi)$. By (A4) in the form of $\square(\varphi\rightarrow\psi)\rightarrow\square\square (\varphi\rightarrow\psi)$ along with transitivity of implication, we get (A3). Now, by the proof of Lemma \ref{80}, we also derive the respective formulas prefixed by $\square$, i.e. $\square(\square\varphi\rightarrow \square K\varphi)$ and $\square (\square(\varphi\rightarrow\psi)\rightarrow\square(\square\varphi\rightarrow\square\psi))$. 
\end{proof}

Next, we present some properties shared by all our modal logics. 

\begin{lemma}\cite{lewjlc1, lewigpl}\label{100}
The Substitution Principle SP, i.e. axiom (Id3), is valid in all our modal logics, even in the extended epistemic language. Consequently, (Id1)--(Id3) are satisfied and we have `propositional identity = strict equivalence' in all our modal logics.
\end{lemma}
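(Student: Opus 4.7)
The plan is to prove the Substitution Principle $(\varphi\equiv\psi)\rightarrow (\chi[x:=\varphi]\equiv\chi[x:=\psi])$ by induction on the structure of $\chi$. For the base system $L$, SP is built in as a theorem scheme by definition, so there is nothing to show. For $L3$ and its purely modal extensions $L4$, $L5$, the induction on propositional cases is routine intuitionistic manipulation inside the scope of a $\square$, and the modal case $\chi=\square\chi'$ is exactly the argument recorded in \cite{lewjlc1}: from the inductive hypothesis $\square(\chi'[x:=\varphi]\leftrightarrow\chi'[x:=\psi])$, two applications of (A3) (one for each direction of the biconditional), together with the intuitionistically valid distribution of $\square$ over conjunction, give $\square(\square\chi'[x:=\varphi]\leftrightarrow\square\chi'[x:=\psi])$, which is precisely $\square\chi'[x:=\varphi]\equiv\square\chi'[x:=\psi]$.

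The genuinely new case, not covered in \cite{lewjlc1}, is $\chi = K\chi'$, which occurs only in the epistemic extensions. All such extensions contain both (KBel) and (CoRe). Writing $\alpha := \chi'[x:=\varphi]$ and $\beta := \chi'[x:=\psi]$, I would argue as follows. Applying AN to the axiom $K(\alpha\rightarrow\beta)\rightarrow (K\alpha\rightarrow K\beta)$ gives $\square(K(\alpha\rightarrow\beta)\rightarrow (K\alpha\rightarrow K\beta))$, and modal distribution (which follows from (A3) together with (A2), and is already available in $L3$) yields $\square K(\alpha\rightarrow\beta)\rightarrow\square(K\alpha\rightarrow K\beta)$. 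Composing with the (CoRe) instance $\square(\alpha\rightarrow\beta)\rightarrow\square K(\alpha\rightarrow\beta)$ gives $\square(\alpha\rightarrow\beta)\rightarrow\square(K\alpha\rightarrow K\beta)$; the symmetric argument for $\beta\rightarrow\alpha$ and an intuitionistic conjunction step inside $\square$ then deliver $\square(\alpha\leftrightarrow\beta)\rightarrow\square(K\alpha\leftrightarrow K\beta)$, i.e.\ $(\alpha\equiv\beta)\rightarrow(K\alpha\equiv K\beta)$, as required. Combined with the inductive hypothesis on $\chi'$, transitivity of implication closes this case.

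The main obstacle is bookkeeping rather than depth: one must verify that the only ingredients used in the $K$-step, namely (KBel), (CoRe), AN, and $\square$-distribution derived from (A3), (A2), are present uniformly in every epistemic system of the hierarchy (i.e.\ in $EL3^-,EL4^-,EL5^-$ and all their strengthenings), and that AN has only been applied to axioms so that its restricted nature is respected. Since (KBel) and (CoRe) are axioms of every epistemic system under consideration and (A3), (A2) are inherited from the underlying modal base $L3$, the argument goes through uniformly, and the final conclusion that (Id1)--(Id3) hold with $\equiv$ read as strict equivalence then follows immediately, since (Id1) and (Id2) are trivial consequences of AN applied to $\varphi\leftrightarrow\varphi$ and of (A2), respectively.
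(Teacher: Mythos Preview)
Your proposal is correct and follows essentially the same approach as the paper: the paper's own proof here merely defers to \cite{lewjlc1} and \cite{lewigpl}, where the structural induction on $\chi$ you sketch (propositional cases via necessitated intuitionistic tautologies plus distribution, the $\square$-case via (A3), and the $K$-case via (CoRe) composed with necessitated (KBel)) is carried out in detail. Your verification that only (A2), (A3), (KBel), (CoRe), and AN applied to axioms are needed, and hence that the argument is uniform across all epistemic systems from $EL3^-$ upward, is exactly the bookkeeping the cited references perform.
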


\begin{proof}
Of course, SP holds in $L$ where it is explicitly stated as a theorem scheme of the deductive system. SP involves the notion of a substitution $\varphi[x:=\psi]$ which is defined in the canonical way by induction on the complexity of formula $\varphi$. This means in particular that the validity of SP depends on the underlying language. We showed in \cite{lewjlc1} that in the language of modal logic, all instances of SP are theorems of S3. By the same arguments given there, SP is also valid in $L3$. In \cite{lewigpl}, we showed that SP is also valid in $EL3^-$, a logic defined over the extended epistemic language. Then SP also holds in all extensions of $EL3^-$ which are defined over same language. 
\end{proof}

A certain substitution or replacement principle of CPC says that if two formulas $\varphi, \psi$ are logically equivalent, then replacing an occurrence of $\varphi$ with $\psi$ in a formula $\chi$ results in a formula which is logically equivalent to $\chi$. It is well-known that that principle is also valid in normal modal logics. It fails, however, in our modal logics which are of higher `intensional degree' than current classical modal logics in the sense that more propositions can be distinguished: classically equivalent formulas such as $\varphi$ and $\neg\neg\varphi$ denote the same proposition in any model of normal modal logic (in fact, $\square(\varphi\leftrightarrow\neg\neg\varphi)$ is a theorem), though such formulas generally denote distinct propositions in models of $L$ (only intuitionistically equivalent formulas have always the same denotation). Actually, we are working with non-Fregean logics: the Fregean Axiom $(\varphi\leftrightarrow\psi)\rightarrow (\varphi\equiv\psi)$, ``Formulas with the same truth value have the same meaning", is invalid.\footnote{Lewis modal logics S3--S5 are examples of current modal systems that can be studied as non-Fregean logics (see, e.g., \cite{lewjlc1, lewsl}).}

\begin{lemma}\label{110}
Let $\mathcal{L}$ be one of our modal logics. Then
\begin{equation*}
\vdash_\mathcal{L} \varphi\leftrightarrow\psi \text{ does not generally imply } \vdash_\mathcal{L}\chi[x:=\varphi]\leftrightarrow\chi[x:=\psi].
\end{equation*}
\end{lemma}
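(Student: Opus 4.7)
The plan is to exhibit a concrete counterexample that works uniformly across every system in the hierarchy and lean on the IPC-embedding property (30) (together with its algebraic counterpart, as extended to the stronger systems) to rule out the offending derivation.

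Concretely, I would take a propositional variable $p$ and set $\varphi := \top$, $\psi := p\vee\neg p$, and $\chi := \square x$. First I would verify $\vdash_\mathcal{L}\varphi\leftrightarrow\psi$: the formula $\top=\neg\bot$ is an intuitionistic tautology and hence an axiom via (INT), while $p\vee\neg p$ is an instance of the theorem scheme (TND), so $\top\leftrightarrow (p\vee\neg p)$ follows by pure propositional reasoning in $\mathcal{L}$. Next I would observe that $\chi[x:=\varphi]=\square\top$ is itself a theorem of $\mathcal{L}$: apply rule AN to the axiom $\top$. Consequently, if $\chi[x:=\varphi]\leftrightarrow\chi[x:=\psi]$ were derivable, Modus Ponens would give $\vdash_\mathcal{L}\square(p\vee\neg p)$.

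The crux of the proof is therefore to show $\not\vdash_\mathcal{L}\square(p\vee\neg p)$ for every $\mathcal{L}$ in the hierarchy. For the base system $L$, this is immediate from (30) with $\Phi=\emptyset$, since $p\vee\neg p$ is not a theorem of IPC. For $L3$, $L4$, $L5$ and their epistemic extensions, I would invoke the algebraic semantics set up in earlier work (\cite{lewjlc1,lewjlc2,lewigpl}): pick any non-Boolean Heyting algebra $H$ and an appropriate model based on $H$ for the relevant system, and choose a valuation with $\iota(p)\vee\neg\iota(p)<1$. Since the interpretation of $\square\alpha$ reduces to the assertion that $\iota(\alpha)=1$, this falsifies $\square(p\vee\neg p)$ while still validating all axioms of the system; soundness then yields non-derivability. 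The same model remains valid under the addition of the epistemic axioms provided $K$ is interpreted in the canonical way (e.g. as the identity on the filter structure), so a single semantic witness suffices for the whole hierarchy.

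The main obstacle is the last point: guaranteeing that the counterexample model simultaneously refutes $\square(p\vee\neg p)$ and validates all the modal-epistemic axioms of the strongest system under consideration. The cleanest route is to cite the soundness half of the algebraic completeness results already established for these logics, rather than construct a model from scratch. Once non-derivability of $\square(p\vee\neg p)$ is secured, the chain $\vdash_\mathcal{L}\square\top$ plus a hypothetical $\vdash_\mathcal{L}\square\top\leftrightarrow\square(p\vee\neg p)$ leading to $\vdash_\mathcal{L}\square(p\vee\neg p)$ delivers the desired contradiction, proving the lemma. Conceptually this is exactly the illustration the authors flag in the discussion preceding the statement: classical equivalents such as $\top$ and $p\vee\neg p$ need not denote the same proposition in the non-Fregean semantics, so $\square$ is not substitutive under $\leftrightarrow$, only under $\equiv$.
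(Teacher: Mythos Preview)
Your proposal is correct and follows the same overall template as the paper's proof---exhibit a pair of classically equivalent formulas that denote distinct propositions in a suitable algebraic model of the strongest system, with $\chi=\square x$ as the discriminating context---but the specific witnesses differ. The paper takes $\varphi=y$ and $\psi=\neg\neg y$ and points directly to the $[0,1]$-based model from \cite{lewigpl}, where any $m\in(0,1)$ satisfies $\neg\neg m=1\neq m$; you instead take $\varphi=\top$ and $\psi=p\vee\neg p$ and reduce the question to $\nvdash_\mathcal{L}\square(p\vee\neg p)$. Your route has the pleasant feature of tying the counterexample to the IPC embedding \eqref{30} for the base system $L$, making the failure of substitutivity visibly a consequence of $p\vee\neg p\notin\mathrm{IPC}$.

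One caution: your parenthetical suggestion to interpret $K$ ``as the identity'' does not give an $E6L5^-$-model over a non-Boolean Heyting algebra, since (PNB) forces $m\le f_\square(m)$, which collapses to the two-element algebra. Your own fallback---cite the soundness half of the algebraic completeness results and use an existing model of the strongest system---is the right move; the $[0,1]$ model the paper invokes (where one may take, e.g., $f_K=f_\square$) does the job uniformly, and in that model any $m\in(0,1)$ also witnesses $m\vee\neg m\neq 1$, so your formulas work there as well.
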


\begin{proof}
We anticipate here the fact that our logics are sound and complete w.r.t. the kind of algebraic semantics considered in \cite{lewigpl}. Towards a counterexample, we consider the model constructed in the proof of [Theorem 4.4, \cite{lewigpl}] which is a model of $EL5$ based on the linearly ordered Heyting algebra of the interval of reals $[0,1]$ with ultrafilter $(0,1]$. One easily recognizes that this is actually a model of $E6L5$, i.e. a model of all our modal logics. Of course, the formula $y\leftrightarrow\neg\neg y$ is a classical tautology and therefore valid in our classical modal logics. $y$ and $\neg\neg y$ have always the same classical truth value. However, these formulas may denote distinct propositions in some model of our non-Fregean logics. For instance, in the considered model, one finds an element $m\neq 1$ such that the double negation of $m$ equals $1$. In fact, every $m\in (0,1)$ has this property. So if the variable $y$ denotes such an $m$, then $\square y$ denotes $0$ and $\square\neg\neg y$ denotes $1$. Then that model together with $\varphi=y$, $\psi=\neg\neg y$ and $\chi =\square x$ represents a counterexample.
\end{proof}

By [Lemma 2.3, \cite{lewjlc1}], all biconditionals of the form $\square\varphi\leftrightarrow (\varphi\equiv\top)$ are valid in system S1+SP. We outline here a simpler proof of that fact.

\begin{theorem}\label{110} All biconditionals of the form 
\begin{equation}\label{40}
\square\varphi\leftrightarrow (\varphi\equiv\top)
\end{equation}
are theorems of our modal logics.
\end{theorem}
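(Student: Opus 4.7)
The plan is to reduce the biconditional to a trivial propositional identity and then lift it through the modal operator using the Substitution Principle SP (Lemma \ref{100}), which is available in every system of the hierarchy. Concretely, the target formula $\square\varphi\leftrightarrow(\varphi\equiv\top)$ unfolds to $\square\varphi\leftrightarrow\square(\varphi\leftrightarrow\top)$, so what we really need is that the two boxed formulas $\square\varphi$ and $\square(\varphi\leftrightarrow\top)$ are provably equivalent.

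First I would observe that $\varphi\leftrightarrow(\varphi\leftrightarrow\top)$ is a theorem of IPC: since $\top$ is intuitionistically provable, the implication $\varphi\to\top$ is trivial and $\top\to\varphi$ is equivalent to $\varphi$, so $\varphi\leftrightarrow\top$ is intuitionistically equivalent to $\varphi$. Hence the formula is an axiom by (INT), and applying rule AN yields $\square(\varphi\leftrightarrow(\varphi\leftrightarrow\top))$, which is precisely $\varphi\equiv(\varphi\leftrightarrow\top)$.

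Next I would apply SP (i.e.\ (Id3)) with the context formula $\chi=\square x$, taking $\psi=(\varphi\leftrightarrow\top)$. This gives $\square\varphi\equiv\square(\varphi\leftrightarrow\top)$, i.e.\ $\square(\square\varphi\leftrightarrow\square(\varphi\leftrightarrow\top))$. A single application of (A2) to this outer box then yields the unboxed biconditional $\square\varphi\leftrightarrow\square(\varphi\leftrightarrow\top)$, which by definition is $\square\varphi\leftrightarrow(\varphi\equiv\top)$, as desired.

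There is no real obstacle: the argument uses only (INT), AN, SP, and (A2), all of which hold uniformly throughout the hierarchy (in $L$ the scheme SP is taken as a theorem outright, while in $L3$ and stronger it is derivable, by Lemma \ref{100}). The only point requiring a bit of care is keeping track of the unfolding of the abbreviations $\equiv$ and $\top$ and ensuring that the substitution $\chi[x:=\varphi]$ with $\chi=\square x$ really produces the intended boxed formulas on both sides.
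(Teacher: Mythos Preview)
Your proof is correct, but it follows a different route from the paper's. The paper derives each direction of the biconditional separately using only (INT), AN, the distribution law, and MP: for $\square\varphi\rightarrow(\varphi\equiv\top)$ it notes that $\square(\varphi\rightarrow\top)$ is already a theorem and obtains $\square\varphi\rightarrow\square(\top\rightarrow\varphi)$ by applying AN to $\varphi\rightarrow(\top\rightarrow\varphi)$ and then distribution; for the converse it extracts $\square(\top\rightarrow\varphi)$ from $\varphi\equiv\top$, distributes to get $\square\top\rightarrow\square\varphi$, and discharges the antecedent via $\square\top$. Your argument instead packages the whole thing into a single IPC equivalence $\varphi\leftrightarrow(\varphi\leftrightarrow\top)$, necessitates it, and then pushes it through the box via SP with context $\square x$, finishing with (A2). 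Your route is shorter and conceptually clean---it exploits directly that $\equiv$ is a congruence---while the paper's route is more elementary in that it avoids invoking the full strength of SP and works purely with distribution. Both are valid across the entire hierarchy, since SP is available everywhere (stipulated in $L$, derivable from (A3) in $L3$ and above, per Lemma~\ref{100}).
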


\begin{proof}
Of course, $\varphi\rightarrow\top$ is a theorem of IPC. AN then yields $\square(\varphi\rightarrow\top)$. On the other hand, $\varphi\rightarrow(\top\rightarrow\varphi)$ is the substitution-instance of a theorem of IPC and AN yields $\square(\varphi\rightarrow(\top\rightarrow\varphi))$. Applying distribution and MP, we get $\square\varphi\rightarrow\square(\top\rightarrow\varphi)$. Hence, $\square\varphi\rightarrow (\square(\varphi\rightarrow\top)\wedge \square(\top\rightarrow\varphi))$, i.e. $\square\varphi\rightarrow (\varphi\equiv\top)$ is derivable. The other way round, $(\varphi\equiv\top)$ implies in particular $\square(\top\rightarrow\varphi)$. By distribution and MP, $\square\top\rightarrow\square\varphi$. $\top$ is a theorem of IPC. Hence, AN and MP yield $\square\varphi$. This shows that $(\varphi\equiv\top)\rightarrow\square\varphi$ is derivable. 
\end{proof}

Suppose $\equiv$ is a primitive symbol of the language such that the axioms (Id1)--(Id3) of propositional identity are satisfied. Under this assumption, we called the scheme $(\square\varphi\wedge\square\psi)\rightarrow (\varphi\equiv\psi)$ the \textit{Collapse Axiom} in \cite{lewsl}. It implies that there is exactly one necessary proposition, namely the proposition denoted by $\top$. This is also expressed by scheme \eqref{40} above, where $\equiv$ is defined as strict equivalence. However, we cannot expect that strict equivalence is propositional identity in \textit{any} modal system. For instance, axiom (Id2) is not fulfilled in normal modal logic $K$, and (Id3) does not hold in Lewis systems S1 and S2 (see \cite{lewsl, lewjlc1}). Nevertheless, the relation of propositional identity refines the relation of strict equivalence.\footnote{We show that $\varphi\equiv\psi$ refines strict equivalence $\square(\varphi\leftrightarrow\psi)$. By (Id3), $(\varphi\equiv\psi)\rightarrow (\square(\varphi\leftrightarrow x)) [x:=\varphi]\equiv (\square (\varphi\leftrightarrow x))[x:=\psi])$. Thus, $(\varphi\equiv\psi)\rightarrow (\square (\varphi\leftrightarrow \varphi)\equiv \square (\varphi\leftrightarrow \psi))$. This along with (Id2) yields $(\varphi\equiv\psi)\rightarrow (\square (\varphi\leftrightarrow \varphi)\rightarrow \square (\varphi\leftrightarrow \psi))$. Since $\square (\varphi\leftrightarrow\varphi)$ is a theorem, also $(\varphi\equiv\psi)\rightarrow \square (\varphi\leftrightarrow \psi)$ is a theorem.} The Collapse Axiom is valid whenever both relations coincide. The logics S3--S5 as well as our modal logics are all strong enough to ensure that strict equivalence is propositional identity axiomatized by (Id1)--(Id3).\\
Thus, \eqref{40} says that there is exactly one necessary proposition, denoted by $\top$. In our modal logics, this is `the intuitionistically true proposition'. Then \eqref{40} reads
\begin{equation*}
\textit{``$\square\varphi$ iff $\varphi$ holds intuitionistically"}.
\end{equation*}
Under this interpretation, \eqref{40} recalls the \textit{Tarski biconditionals} (also called Convention T or T-scheme) of Tarski's truth theory. In our setting, operator $\square$ is a predicate for \textit{intuitionistic} truth and belongs to the object language.\footnote{Note, however, that the expression $\varphi\equiv\top$ is defined in terms of $\square$, i.e. the truth predicate appears on both sides of the biconditional \eqref{40}. This could be possibly avoided by introducing the identity connective as a primitive symbol along with a suitable additional axiomatization.} In fact, it will follow from the definition of our model-theoretic semantics, which is based on Heyting algebras, that $\square\varphi$ is satisfied (i.e. classically true) in a given model iff $\varphi$ denotes the top element of the underlying Heyting lattice (i.e. $\varphi$ is intuitionistically true in the model).\\

In the following, we present a few examples of derivable principles. 

\begin{theorem}\label{130}
\noindent
\begin{enumerate}
\item $\vdash_{L}\neg\Diamond\bot$. ``A proof of $\bot$ is impossible ($\bot$ has no possible proof)." ``The set of proved propositions is consistent".
\item $\vdash_{L}\varphi\rightarrow\Diamond\varphi$. ``Classical truth implies the possibility of a proof."
\item $\vdash_{L}\square\varphi\rightarrow\Diamond\varphi$. ``If $\varphi$ has an actual proof, then $\varphi$ has a possible proof."
\item $\vdash_{L4}\square\neg\Diamond\bot$. ``There is an actual proof that a proof of $\bot$ is impossible." ``There is an actual proof that the given set of proved propositions is consistent."
\item $\vdash_{L5}\Diamond\square\varphi\rightarrow\square\square\varphi$. ``If $\square\varphi$ has a possible proof, then $\square\varphi$ has an actual proof.''
\item $\vdash_{L5}\Diamond\neg\square\varphi\rightarrow\square\neg\square\varphi$. ``If $\neg\square\varphi$ has a possible proof, then it has an actual proof."
\item $\vdash_{L5}\square(\square\varphi\vee\neg\square\varphi)$. ``There exists an actual proof of the fact that either there is an actual proof of $\varphi$ or there is no such proof of $\varphi$."
\end{enumerate}
\end{theorem}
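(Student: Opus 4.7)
The plan is to derive each of the seven items by combining the system-specific axioms (A2), (A4), (A5) with rule AN applied to suitable IPC tautologies, the distribution law $\square(\varphi\rightarrow\psi)\rightarrow(\square\varphi\rightarrow\square\psi)$ (available from $L3$ upward via (A3) together with (A2), and of course in the S1-style $L$ via \eqref{20} in weakened form), and classical manipulation powered by (TND). Nothing below should be deep; the interest is in being careful about which system each derivation lives in.

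For (i)--(iii), which live already in $L$, the only tools needed are AN on IPC theorems and axiom (A2). First I would apply AN to the IPC theorem $\top$ to obtain $\square\top$; unfolding the abbreviations one has $\neg\Diamond\bot = \neg\neg\square\top$, and the IPC tautology $p\rightarrow\neg\neg p$ (an axiom via (INT)) followed by MP gives (i). For (ii), the instance $\square\neg\varphi\rightarrow\neg\varphi$ of (A2) is an axiom; combined with the IPC contraposition scheme $(p\rightarrow\neg q)\rightarrow(q\rightarrow\neg p)$ it yields $\varphi\rightarrow\neg\square\neg\varphi$, which is $\varphi\rightarrow\Diamond\varphi$. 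Item (iii) then follows by transitivity of implication from the instance $\square\varphi\rightarrow\varphi$ of (A2) and (ii).

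For (iv), the key observation is that the derivation of $\neg\Diamond\bot$ sketched above used only rule AN on $\top$, an instance of (INT), and MP; (TND) was not invoked. Hence Lemma \ref{80}, which is available because $L4$ contains (A4), applies and lets us prefix the entire theorem by $\square$, giving $\square\neg\Diamond\bot$ as a theorem of $L4$.

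For (v)--(vii), inside $L5$, the engine is to combine (A4), (A5) and the classical reasoning furnished by (TND). For (v), the classical contrapositive of (A4) yields $\neg\square\square\varphi\rightarrow\neg\square\varphi$; composing with (A5) gives $\neg\square\square\varphi\rightarrow\square\neg\square\varphi$, and a second classical contraposition together with double-negation elimination turns this into $\Diamond\square\varphi\rightarrow\square\square\varphi$. For (vi) I would split on the instance $\square\varphi\vee\neg\square\varphi$ of (TND): in the $\neg\square\varphi$ branch, (A5) supplies $\square\neg\square\varphi$ directly, so the implication holds; in the $\square\varphi$ branch, (A4) gives $\square\square\varphi$, and applying AN followed by distribution to the IPC theorem $\square\varphi\rightarrow\neg\neg\square\varphi$ yields $\square\neg\neg\square\varphi$, which classically contradicts the antecedent $\Diamond\neg\square\varphi = \neg\square\neg\neg\square\varphi$, so again the implication holds. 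Item (vii) is obtained by the same TND case split: in the $\square\varphi$ branch, (A4) gives $\square\square\varphi$, and AN plus distribution applied to the IPC $\vee$-introduction theorem $\square\varphi\rightarrow(\square\varphi\vee\neg\square\varphi)$ lifts this to $\square(\square\varphi\vee\neg\square\varphi)$; in the $\neg\square\varphi$ branch, (A5) supplies $\square\neg\square\varphi$ and the analogous lifting step concludes. The only mildly delicate step is (iv), where one must take care to derive $\neg\neg\square\top$ via the IPC-only route so that Lemma \ref{80} is applicable; everything else is routine propositional bookkeeping.
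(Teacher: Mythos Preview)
Your proposal is correct and tracks the paper's proof closely: items (i)--(iv) and (vii) match the paper almost line for line, and your argument for (v) merely reverses the order (contrapose (A4) first, then chain through (A5), rather than the paper's contrapose (A5) then apply (A4)). The only tactical deviation is in (vi), where the paper avoids the (TND) case split by deriving $\square\varphi\rightarrow\square\neg\neg\square\varphi$ via AN, distribution and (A4), taking its contrapositive $\Diamond\neg\square\varphi\rightarrow\neg\square\varphi$, and then composing with (A5); this yields the implication as a single chain rather than by cases, but your route is equally valid.
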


\begin{proof}
(i): Formula $\neg\bot = \bot\rightarrow\bot$ is an intuitionistic tautology. By AN, $\square\neg\bot$ is a theorem. Also $\square\neg\bot\rightarrow\neg\neg\square\neg\bot$ is a theorem (a substitution-instance of the intuitionistic tautology $x\rightarrow\neg\neg x$). By MP, we derive $\neg\neg\square\neg\bot = \neg\Diamond\bot$. \\
(ii): By (A2), $\square\neg\varphi\rightarrow\neg\varphi$. Apply contraposition.\\
(iii): Consider (A2), item (ii) and transitivity of implication.\\
(iv): Note that (TND) does not occur in the derivation given in (i). Thus, by Lemma \ref{80}, $\square\neg\Diamond\bot$ is a theorem of $L4$. \\
(v): $\Diamond\square\varphi\rightarrow\square\varphi$ is the contrapositive of (A5). This together with (A4) and transitivity of implication yields $\Diamond\square\varphi\rightarrow\square\square\varphi$.\\
(vi): $\square\varphi\rightarrow\neg\neg\square\varphi$ is the substitution-instance of a theorem of IPC. Rule AN, distribution and MP yield $\square\square\varphi\rightarrow\square\neg\neg\square\varphi$. Then by (A4) and transitivity of implication, $\square\varphi\rightarrow\square\neg\neg\square\varphi$. The contrapositive is $\Diamond\neg\square\varphi\rightarrow\neg\square\varphi$. Finally, $\Diamond\neg\square\varphi\rightarrow\square\neg\square\varphi$ by (A5) and transitivity of implication.\\
(vii): $\chi\rightarrow (\chi\vee\psi)$ is an axiom. By AN and distribution, 
\begin{equation*}
\square\chi\rightarrow \square(\chi\vee\psi)
\end{equation*}
is a theorem scheme of which $\square\square\varphi\rightarrow\square (\square\varphi\vee\neg\square\varphi)$ is an instance. Axiom (A4) along with transitivity of implication then yields $\square\varphi\rightarrow\square (\square\varphi\vee\neg\square\varphi)$. $\square\neg\square\varphi\rightarrow\square(\neg\square\varphi\vee\square\varphi)$ is a further instance of the theorem scheme above. This, together with (A5), yields $\neg\square\varphi\rightarrow \square(\neg\square\varphi\vee\square\varphi)$. Since $(\chi\vee\neg\chi)\equiv (\neg\chi\vee\chi)$ is a theorem, $(\chi\vee\neg\chi)$ and $(\neg\chi\vee\chi)$ are interchangeable in every context (Substitution Principle). Thus,
\begin{equation*}
\square\varphi\rightarrow\square (\square\varphi\vee\neg\square\varphi)\text{ and }\neg\square\varphi\rightarrow \square(\square\varphi\vee\neg\square\varphi).
\end{equation*}
are theorems. Consequently, $(\square\varphi\vee\neg\square\varphi)\rightarrow \square(\square\varphi\vee\neg\square\varphi)$ is a theorem. \textit{Tertium non datur} along with MP then yields the assertion.
\end{proof}

Intuitively, $K\varphi$ has a possible proof, notation: $\Diamond K\varphi$, if and only if classical truth of $K\varphi$ is possible. If $K\varphi$ is classically true, then $\varphi$ denotes a believed (a known) proposition. In this sense, we may read $\Diamond K\varphi$ as ``$\varphi$ is believable (knowable)". We now list some examples of derivable (non-derivable) epistemic principles. Some of these examples are given in similar form in [Theorem 2.5, \cite{lewigpl}].

\begin{theorem}\label{140}
\noindent
\begin{enumerate}
\item $\vdash_{EL3^-}K\varphi\rightarrow\Diamond K\varphi$. ``Believed (known) propositions are believable (knowable), respectively."
\item $\vdash_{EL4^-} \square\varphi\rightarrow \square K\square\varphi$. ``If $\varphi$ has an actual proof, then there is an actual proof that it is believed (known) that $\varphi$ has an actual proof." 
\item $\vdash_{EL5^-} \neg\square\varphi\rightarrow \square K\neg\square\varphi$. ``If $\varphi$ has no actual proof, then there is an actual proof that it is believed (known) that $\varphi$ has no actual proof." 
\item $\vdash_{EL5^-} K\square\varphi\vee K\neg\square\varphi$. ``Either it is believed (known) that $\varphi$ has an actual proof or it is believed (known) that $\varphi$ has no such proof, respectively."
\item $\vdash_{E6L3^-}\Diamond K\varphi\rightarrow K\varphi$. ``All believable (knowable) propositions are believed (known), respectively."
\item $\nvdash_{E6L5}\varphi\rightarrow\Diamond K\varphi$. ``There may exist true propositions that are unknowable (unbelievable)."
\item $\vdash_{EL5^-}(K\varphi\rightarrow\square K\varphi)\rightarrow (\neg K\varphi\rightarrow\square\neg K\varphi)$. ``Modulo $EL5^-$, axiom scheme (NNB) follows from (PNB)."
\end{enumerate}
\end{theorem}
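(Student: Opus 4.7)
The plan is to establish (i)--(v) by short syntactic derivations using the distinguishing axiom(s) of each named system, to refute (vi) by an algebraic counter-model, and to handle (vii) by a modal calculation which will be the principal obstacle.

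For (i), the (A2)-instance $\square\neg K\varphi\rightarrow\neg K\varphi$ contrapositively yields $\neg\neg K\varphi\rightarrow\Diamond K\varphi$, and chaining with the IPC-tautology $K\varphi\rightarrow\neg\neg K\varphi$ finishes the derivation. For (ii), I would chain the (A4)-instance $\square\varphi\rightarrow\square\square\varphi$ with the (CoRe)-instance $\square\square\varphi\rightarrow\square K\square\varphi$ (obtained by substituting $\square\varphi$ for $\varphi$ in (CoRe)). For (iii), analogously chain (A5) $\neg\square\varphi\rightarrow\square\neg\square\varphi$ with the (CoRe)-instance $\square\neg\square\varphi\rightarrow\square K\neg\square\varphi$. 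For (iv), use (ii) and (iii) together with (A2) to derive $\square\varphi\rightarrow K\square\varphi$ and $\neg\square\varphi\rightarrow K\neg\square\varphi$, and then perform disjunction elimination on the (TND)-theorem $\square\varphi\vee\neg\square\varphi$. For (v), the classical contrapositive of (NNB) is $\Diamond K\varphi\rightarrow\neg\neg K\varphi$, which composed with the classical collapse $\neg\neg K\varphi\rightarrow K\varphi$ (supplied by (TND)) proves the claim.

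For (vi), the plan is to exhibit a counter-model in the algebraic semantics. I would take the linearly ordered Heyting algebra $[0,1]$ with ultrafilter $F=(0,1]$ used in the proof of Lemma \ref{110}, and interpret the epistemic operator so that $v(K\varphi)=1$ if $v(\varphi)=1$ and $v(K\varphi)=0$ otherwise (so that $K$ acts algebraically as $\square$). A routine check confirms that this choice validates (KBel), (CoRe), (IntRe), (E4), (E5), (PNB), and (NNB), hence yields a model of $E6L5$. Choosing a propositional variable $p$ with $v(p)=m$ for some $m\in(0,1)$ gives $v(p)\in F$ while $v(Kp)=0$; consequently $v(\neg Kp)=1$, $v(\square\neg Kp)=1$, and $v(\Diamond Kp)=0\notin F$, so that $p\rightarrow\Diamond Kp$ is classically false in the model.

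The principal obstacle is (vii). Under the hypothesis $H_1=K\varphi\rightarrow\square K\varphi$ and the auxiliary assumption $\neg K\varphi$, the contrapositive of the (A2)-instance $\square K\varphi\rightarrow K\varphi$ yields $\neg\square K\varphi$; (A5) then supplies $\square\neg\square K\varphi$. To pass from $\square\neg\square K\varphi$ to $\square\neg K\varphi$, I would take the IPC-tautology $(A\rightarrow B)\rightarrow(\neg B\rightarrow\neg A)$, apply AN and then distribution, obtaining $\square H_1\rightarrow\square(\neg\square K\varphi\rightarrow\neg K\varphi)$, and distribute once more to get $\square H_1\rightarrow(\square\neg\square K\varphi\rightarrow\square\neg K\varphi)$. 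The delicate point is the production of $\square H_1$: this step is direct via AN when $H_1$ is treated as an axiom, which is precisely the case in the extension $EL5^-+(PNB)$; the target implication in $EL5^-$ then follows by the deduction theorem applied to the derivation of (NNB) in $EL5^-+(PNB)$. Combining the pieces yields $\square\neg K\varphi$, and hence $\neg K\varphi\rightarrow\square\neg K\varphi$ as required.
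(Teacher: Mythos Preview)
Your arguments for (i)--(vi) are correct and match the paper's; the paper merely cites earlier results for (i), (iv) and (vi), and your explicit derivations are what those references unpack to. For (vii), your derivation of (NNB) inside $EL5^-+\text{(PNB)}$ is also valid and parallels the paper's, which reaches the same conclusion by a slightly different device: from (PNB), (A2) and AN it obtains $K\varphi\equiv\square K\varphi$ and then uses the Substitution Principle to replace the last occurrence of $\square K\varphi$ in $\neg\square K\varphi\rightarrow\square\neg\square K\varphi$ by $K\varphi$, whereas you distribute $\square$ over the contrapositive of $H_1$. Both routes are fine for establishing $\vdash_{EL5^-+\text{(PNB)}}\neg K\varphi\rightarrow\square\neg K\varphi$, and this is precisely what the paper's proof does, in line with its informal gloss ``modulo $EL5^-$, (NNB) follows from (PNB)''.

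The genuine gap is your final move: the appeal to the deduction theorem to pass from $\vdash_{EL5^-+\text{(PNB)}}\text{(NNB)}$ to $\vdash_{EL5^-}\text{(PNB)}\rightarrow\text{(NNB)}$ is not legitimate. The deduction theorem discharges \emph{hypotheses}, not \emph{axioms}; your derivation uses $\square H_1$, obtained by applying AN to $H_1$, and AN is available only when $H_1$ is an axiom of the system. If $H_1$ is a mere hypothesis in $EL5^-$, AN does not apply and $\square H_1$ is not forthcoming. The paper does not attempt this extra step; its proof likewise applies AN to (PNB) and should be read as establishing derivability in the extension. In fact the displayed implication is \emph{not} a theorem of $EL5^-$: take the five-element Heyting algebra $\{0,a,b,c,1\}$ with $a\wedge b=0$, $a\vee b=c<1$, ultrafilter $\mathit{TRUE}=\{1,c,b\}$, $f_\square$ as in Lemma~\ref{215}(b) and $f_K$ the identity; then the Disjunction Property holds, all $EL5^-$-conditions are met, and at the assignment $x\mapsto a$ the formula $(Kx\rightarrow\square Kx)\rightarrow(\neg Kx\rightarrow\square\neg Kx)$ evaluates to $b\rightarrow a=a\notin\mathit{TRUE}$. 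So the step you flagged as ``delicate'' is in fact impassable; drop the deduction-theorem clause and your argument for (vii) then coincides with what the paper actually proves.
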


\begin{proof}
(i): This is a particular case of item (ii) of the preceding Theorem.\\
(ii): $\square\varphi\rightarrow\square\square\varphi$ and $\square\square\varphi\rightarrow\square K\square\varphi$ are axioms of $EL4^-$. The assertion follows by transitivity of implication.\\
(iii): $\neg\square\varphi\rightarrow\square\neg\square\varphi$ and $\square\neg\square\varphi\rightarrow\square K\neg\square\varphi$ are axioms of $EL5^-$. Apply transitivity of implication.\\
(iv): This is shown in [Theorem 2.6, \cite{lewigpl}].\\ 
(v): This is the contrapositive of axiom (NNB).\\
(vi): Again, we anticipate algebraic semantics given below and prove the assertion by presenting an $E6L5$-model where $\varphi$ denotes a true and $\Diamond K\varphi$ denotes a false proposition. Such a model should contain a proposition $m$ such that $m$ is true and unknown. If variable $x$ denotes $m$, the model then satisfies $x$ and $\neg K x$. Since the model validates axiom (NNB), $\square \neg K x$ is true, i.e. $\Diamond Kx$ is false. In the proof of [Theorem 4.4, \cite{lewigpl}], a specific model of logic $EL5$ is constructed. One easily checks that that model is a model of logic $E6L5$ (in the sense of the semantics presented below) and has the desired properties.\\
(vii): By (PNB), (A2) and rule AN, we have $K\varphi\equiv\square K\varphi$. Then by SP, the formulas $K\varphi$ and $\square K\varphi$ can be replaced by each other in every context. In particular, $\neg K\varphi\rightarrow \neg\square K\varphi$ is a theorem. By (A5), $\neg\square K\varphi\rightarrow \square\neg\square K\varphi$ is a theorem. Replacing in this formula the last occurrence of $\square K\varphi$ with $K\varphi$, we get $\neg\square K\varphi\rightarrow \square\neg K\varphi$. Transitivity of implication yields $\neg K\varphi\rightarrow \square \neg K\varphi$.
\end{proof}

Note that item (ii) of Theorem \ref{140} is related to the axiom of intuitionistic co-reflection $\varphi\rightarrow K\varphi$ of IEL. The classical version of that axiom under G\"odel translation ``box every subformula" is $\square (\square\varphi' \rightarrow \square K \square\varphi')$, where $\varphi'$ is the translation of $\varphi$. By Lemma \ref{80} and Theorem \ref{140} (ii), $\square (\square\varphi\rightarrow \square K\square\varphi)$ is a theorem of $EL4^-$. That is, the classical interpretation of intuitionistic co-reflection, based on G\"odel translation, is similar to a theorem of $EL4^-$. However, our modal version of co-reflection, axiom (CoRe) $\square\varphi\rightarrow \square K\varphi$, is strictly weaker than intuitionistic co-reflection as the uniform framework of relational semantics below will reveal. IPC is contained in our classical modal logics, though IEL is not. In the last section, we will present an extension of $EL5$ that contains IEL in a similar way as $L$ contains IPC.\\

Our standard semantics is based on the model-theoretic, algebraic semantics introduced in \cite{lewjlc2, lewigpl}, where models are given as Heyting algebras with a designated ultrafilter and some additional structure. Before presenting the details, we list some basic facts about Heyting algebras (Heyting lattices). Recall that a filter of a lattice $(M,\le)$ is a non-empty subset $F\subseteq M$ such that the following two conditions are satisfied for all $m,m'\in M$: if $m,m' \in F$, then $f_\wedge(m,m')\in F$; if $m\in F$ and $m\le m'$, then $m'\in F$. A filter is said to be proper if it does not contain all elements of the underlying lattice. An ultrafilter is a maximal proper filter; a proper filter $F$ is said to be prime if $f_\vee(m,m')\in F$ implies [$m\in F$ or $m'\in F$]. Note that we use symbols of the form $f_\vee$, $f_\wedge$, $f_\rightarrow$ etc. to refer in an intuitive way to lattice operations such as join, meet, relative pseudo-complement etc.

\begin{lemma}\label{200}
Let $\mathcal{H}$ be a Heyting algebra with universe $H$.\\
(a) $U\subseteq H$ is an ultrafilter iff there is a Heyting algebra homomorphism $h$ from $\mathcal{H}$ to the two-element Boolean algebra $\mathcal{B}$ such that the top element of $\mathcal{B}$ is precisely the image of $U$ under $h$.\\
(b) If $U\subseteq H$ is an ultrafilter, then for all $m,m'\in H$:
\begin{itemize}
\item $f_\vee(m,m')\in U$ iff $m\in U$ or $m'\in U$ (i.e. $U$ is a prime filter)
\item $m\in U$ or $f_\neg(m)\in U$
\item $f_\rightarrow(m,m')\in U$ iff [$m\notin U$ or $m'\in U$] iff $f_\vee(f_\neg(m),m')\in U$.
\end{itemize}
(c) Every proper filter is the intersection of all prime filters containing it.\\
(d) Let $m_1,m_2\in H$ and let $P$ be a prime filter. Then we have $f_\rightarrow(m_1,m_2)\in P$ if, and only if, for all prime filters $P'\supseteq P$, $m_1\in P'$ implies $m_2\in P'$.
\end{lemma}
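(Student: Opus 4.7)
My plan is to handle the four parts in order, exploiting each part to shorten the next.

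For (a), I would prove both directions. For ($\Leftarrow$), if $h\colon\mathcal{H}\to\mathcal{B}$ is a Heyting algebra homomorphism, then $h^{-1}(1)$ is a proper filter; because the only filters of $\mathcal{B}$ are $\{1\}$ and $\{0,1\}$, it is maximal, hence an ultrafilter. For ($\Rightarrow$), given an ultrafilter $U\subseteq H$, define $h$ by $h(m)=1$ if $m\in U$ and $h(m)=0$ otherwise. The nontrivial step is checking that $h$ preserves the Heyting operations. Preservation of $\wedge$ is immediate from the filter axioms. Preservation of $\vee$ amounts to showing $U$ is prime, which one obtains from maximality: if $f_\vee(m,m')\in U$ but neither $m$ nor $m'$ lies in $U$, then the filters generated by $U\cup\{m\}$ and $U\cup\{m'\}$ are both improper, yielding elements $u,u'\in U$ with $f_\wedge(u,m)=f_\wedge(u',m')=0$; taking $v=f_\wedge(u,u')\in U$ gives $f_\wedge(v,f_\vee(m,m'))=0$, contradicting properness. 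Preservation of $\rightarrow$ and of $\bot$ then follows by a routine case distinction using the adjunction $f_\wedge(a,f_\rightarrow(a,b))\le b$ together with the fact that $U$ is a prime filter and that $m\in U$ or $f_\neg(m)\in U$ (which one establishes by the same maximality argument applied to $m$).

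Part (b) I would derive directly from (a): if $h$ is the associated homomorphism to $\mathcal{B}$, then each of the three claims becomes a trivial computation in the two-element Boolean algebra. For primeness, $h(f_\vee(m,m'))=h(m)\vee h(m')=1$ iff $h(m)=1$ or $h(m')=1$; for complementation, $h(f_\neg(m))=\neg h(m)$, so exactly one of $m$, $f_\neg(m)$ lies in $U$; and for the implication clause, $h(f_\rightarrow(m,m'))=\neg h(m)\vee h(m')=h(f_\vee(f_\neg(m),m'))$.

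For (c), I would prove the contrapositive: given a proper filter $F$ and $m_0\notin F$, I need a prime filter $P\supseteq F$ with $m_0\notin P$. Apply Zorn's lemma to the family of proper filters containing $F$ but not $m_0$, ordered by inclusion; the union of any chain is again such a filter. A maximal element $P$ of this family is automatically prime: if $f_\vee(a,b)\in P$ and neither $a$ nor $b$ lies in $P$, then the filters generated by $P\cup\{a\}$ and $P\cup\{b\}$ properly extend $P$, so each must contain $m_0$; this yields $p_1,p_2\in P$ with $f_\wedge(p_1,a)\le m_0$ and $f_\wedge(p_2,b)\le m_0$, and then $f_\wedge(p_1,p_2,f_\vee(a,b))\le m_0$ forces $m_0\in P$, a contradiction.

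Part (d) splits cleanly. The forward direction uses only the filter property: if $f_\rightarrow(m_1,m_2)\in P\subseteq P'$ and $m_1\in P'$, then $f_\wedge(m_1,f_\rightarrow(m_1,m_2))\in P'$, and since this meet is below $m_2$, upward closure gives $m_2\in P'$. The reverse direction is the main obstacle and is where the machinery of (c) becomes indispensable. I would argue contrapositively: suppose $f_\rightarrow(m_1,m_2)\notin P$. Consider the filter $F$ generated by $P\cup\{m_1\}$; I claim $m_2\notin F$, for otherwise there is $p\in P$ with $f_\wedge(p,m_1)\le m_2$, which by the $\wedge$-$\rightarrow$ adjunction forces $p\le f_\rightarrow(m_1,m_2)$ and hence $f_\rightarrow(m_1,m_2)\in P$, contradiction. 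Applying (c) to $F$ and the element $m_2$, I extract a prime filter $P'\supseteq F\supseteq P$ with $m_1\in P'$ and $m_2\notin P'$, completing the proof.
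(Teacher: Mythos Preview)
Your proposal is correct. The paper treats (a)--(c) as folklore and gives no proof, so there is nothing to compare there; your arguments are standard and complete. For the nontrivial right-to-left direction of (d), however, you take a genuinely different route from the paper. The paper passes to the quotient Heyting algebra $\mathcal{H}'=\mathcal{H}/{\sim_P}$, where $m\sim_P m'$ iff $f_\rightarrow(m,m')\in P$ and $f_\rightarrow(m',m)\in P$; it then argues via three claims that the hypothesis transfers to a statement about prime filters of $\mathcal{H}'$, which forces $\overline{m_1}\le'\overline{m_2}$ and hence $f_\rightarrow(m_1,m_2)\in P$. Your argument is more elementary and more direct: you generate the filter $F$ from $P\cup\{m_1\}$, use the $\wedge$--$\rightarrow$ adjunction to see that $m_2\notin F$ (and implicitly that $F$ is proper, by the same step with $m_2$ replaced by $f_\bot$), and then invoke (c) to extract a prime $P'\supseteq F$ avoiding $m_2$. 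This avoids the quotient construction entirely and makes the dependence on (c) transparent; the paper's approach, by contrast, hides the use of (c) inside Claim~2 and adds an extra layer of algebraic machinery without apparent gain for this particular lemma.
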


\begin{proof}
(a)--(d) are known properties of Heyting algebras which are not hard to prove. The right-to-left implication of (d), however, might be less familiar. We outline a proof.
Let $P$ be a prime filter. We consider the quotient Heyting algebra $\mathcal{H}'$ of $\mathcal{H}$ modulo $P$. That is, the elements of $\mathcal{H'}$ are the equivalence classes $\overline{m}$ of $m\in M$ modulo the equivalence relation $\sim$ defined by $m\sim m'\Leftrightarrow$ [$f_\rightarrow(m,m')\in P$ and $f_\rightarrow(m',m)\in P$]. Then one easily checks that $P$ is the equivalence class of $f_\top$ modulo $\sim$, and it is the top element $f'_\top$ of $\mathcal{H'}$.\\
\textbf{Claim1}: Let $m,m'\in H$. If $\overline{m}\in F'$ implies $\overline{m'}\in F'$, for all filters $F'$ of $\mathcal{H}'$, then $\overline{m}\le' \overline{m'}$, where $\le'$ is the lattice order of $\mathcal{H}'$.\\
\textit{Proof of Claim1}. Suppose $\overline{m}\nleqslant' \overline{m'}$. Consider the filter $G=\{\overline{m''}\mid \overline{m}\le' \overline{m''}\}$. Then $\overline{m}\in G$ and $\overline{m'}\notin G$. We have proved the Claim.\\
\textbf{Claim2}: Let $m,m'\in H$. If $\overline{m}\in F'$ implies $\overline{m'}\in F'$, for all prime filters $F'$ of $\mathcal{H}'$, then $\overline{m}\le' \overline{m'}$, where $\le'$ is the lattice ordering of $\mathcal{H}'$.\\
\textit{Proof of Claim2}. Claim2 follows from Claim1 together with (c).\\
\textbf{Claim3}: If $F'$ is a (prime) filter of $\mathcal{H}'$, then $F=\{m\mid\overline{m}\in F'\}$ is a (prime) filter of $\mathcal{H}$ extending $P$.\\
\textit{Proof of Claim3}. Suppose $m\in F$ and $m\le m'$. Then $f_\rightarrow(m,m')=f_\top$. Thus, $\overline{f_\rightarrow(m,m')}=P=f'_\top$. That is, $f'_\rightarrow(\overline{m},\overline{m'})=f'_\top$ and therefore $\overline{m}\le'\overline{m'}$. It follows that $\overline{m'}\in F'$ and $m'\in F$. The remaining filter properties follow straightforwardly. $m\in P$ implies $\overline{m}=P=f'_\top\in F'$ implies $m\in F$. Thus, $P\subseteq F$ and Claim3 holds true.\\
Now let $m_1, m_2$ be elements of $\mathcal{H}$ such that for all prime filters $F\supseteq P$, $m_1\in F$ implies $m_2\in F$. We show that this implies $f_\rightarrow(m_1,m_2)\in P$. Let $\overline{m_1}\in F'$ for some prime filter $F'$ of $\mathcal{H}'$. Then, by Claim3, $m_1\in F=\{m\mid\overline{m}\in F'\}$ and $F$ is a prime filter of $\mathcal{H}$ with $P\subseteq F$. By hypothesis, $m_2\in F$. Thus, $\overline{m_2}\in F'$. By Claim2, $\overline{m_1}\le' \overline{m_2}$. Then $\overline{f_\rightarrow(m_1,m_2)}=f'_\top=P$. That is, $f_\rightarrow(m_1,m_2)\in P$.
\end{proof}

For further details about Heyting algebras (sometimes called pseudo-Boolean algebras) we refer the reader to \cite{chazak}.
 
We would like to point out that dropping the epistemic ingredients of the following model definitions results in models for the logics $L3$--$L5$, i.e. an $EL3^-$-model becomes a $L3$-model, etc. Since these transitions are trivial, we only consider here the more complex \textit{epistemic} models and do not treat semantics (neither completeness) of $L3$--$L5$ separately. Algebraic semantics and completeness of our weakest modal logic $L$ is established in \cite{lewjlc2}.

\begin{definition}\cite{lewigpl}\label{205}
An $EL3^-$-model, to which we also refer as an epistemic model, is a Heyting algebra 
\begin{equation*}
\mathcal{M}=(M, \mathit{TRUE}, \mathit{BEL}, f_\bot, f_\top, f_\vee, f_\wedge, f_\rightarrow, f_\square, f_K)
\end{equation*}
with universe $M$, a designated ultrafilter $\mathit{TRUE}\subseteq M$, a set $\mathit{BEL}\subseteq M$ and additional unary operations $f_\square$ and $f_K$ such that for all $m,m'\in M$ the following truth conditions are fulfilled (as before, $\le$ denotes the lattice order):
\begin{enumerate}
\item $f_\square(f_\vee(m,m'))\le f_\vee(f_\square(m),f_\square(m'))$
\item $f_\square(m)\le m$
\item $f_\square(f_\rightarrow(m,m'))\le f_\square(f_\rightarrow(f_\square(m),f_\square(m')))$
\item $f_\square(m)\in\mathit{TRUE}\Leftrightarrow m=f_\top$
\item $f_K(m)\in\mathit{TRUE}\Leftrightarrow m\in\mathit{BEL}$
\item $f_K(f_\rightarrow(m,m'))\le f_\rightarrow(f_K(m),f_K(m'))$
\item $f_\square(m)\le f_\square(f_K(m))$
\end{enumerate}
$M$ is the universe of all \textit{propositions} and $\mathit{TRUE}\subseteq M$ is the subset of classically true propositions. The propositions $f_\top$, $f_\bot$ represent intuitionistic truth and intuitionistic falsity, respectively. $\mathit{BEL}$ is the set of `believed' propositions. A proposition $m$ is said to be `known' if $m\in\mathit{TRUE}\cap\mathit{BEL}$.\footnote{We refer to the elements of $\mathit{TRUE}$, i.e. the true propositions, also as \textit{facts}. If $\mathit{BEL}\subseteq\mathit{TRUE}$, then all believed propositions are facts and belief becomes knowledge.} 
\end{definition}

Note that the set $\mathit{BEL}$ and condition (v) are redundant data in the definition. We could drop them and restore the set $\mathit{BEL}$ defining $\mathit{BEL}:=\{m\in M\mid f_K(m)\in\mathit{TRUE}\}$. Nevertheless, we keep these extra data in the definition in order to explicitly indicate that \textit{belief} and \textit{knowledge} are extensionally modeled as sets of propositions, similarly as \textit{classical truth}. Knowledge is given as a set of facts. In the sense of (vi), belief is `closed under Modus Ponens'. \\

We shall tacitly make use of the equivalence $m\le m' \Leftrightarrow f_\rightarrow(m,m')=f_\top$ which holds in all Heyting algebras. Note that truth conditions (i) and (iv) ensure that every model has the \textbf{Disjunction Property}: for all $m,m'\in M$, $f_\vee(m,m')=f_\top$ iff $m=f_\top$ or $m'=f_\top$. That is, the smallest lattice filter $\{f_\top\}$ is a prime filter.

\begin{definition}\label{210}
Let $\mathcal{M}$ be an $EL3^-$-model. We say that 
\begin{itemize}
\item $\mathcal{M}$ is an $EL4^-$-model if for all $m\in M$: $f_\square(m)\le f_\square(f_\square(m))$.
\item $\mathcal{M}$ is an $EL5^-$-model if for all $m\in M$: 
\begin{equation*}
\begin{split}
f_\square(m)=
\begin{cases}
f_\top, \text{ if }m=f_\top\\
f_\bot, \text{ else}
\end{cases}
\end{split}
\end{equation*}
\item $\mathcal{M}$ is an $E4Ln^-$-model, for $n=3,4,5$, if $\mathcal{M}$ is an $ELn^-$-model and for all $m\in M$: $f_K(m)\le f_K(f_K(m))$.
\item $\mathcal{M}$ is an $E5Ln^-$-model, $n=3,4,5$, if $\mathcal{M}$ is an $E4Ln^-$-model and for all $m\in M$: $f_\neg(f_K(m))\le f_K(f_\neg(f_K(m)))$.  
\item $\mathcal{M}$ is an $E6Ln^-$-model, $n=3,4,5$, if $\mathcal{M}$ is an $ELn^-$-model  and for all $m\in M$: $f_K(m) \le f_\square(f_K(m))$ and $f_\neg(f_K(m))\le f_\square(f_\neg(f_K(m)))$.
\item If $\mathcal{M}$ is an $ELn^-$-model or an $EkLn^-$-model, for $k\in\{4,5,6\}$ and $n\in\{3,4,5\}$, and $\mathcal{M}$ satisfies the additional truth condition 
\begin{equation}\label{50}
f_K(m)\le f_\neg(f_\neg(m))\text{ for all propositions }m,
\end{equation}
then we omit the superscript $^-$ in the notation and refer to $\mathcal{M}$ as an $ELn$-model or an $EkLn$-model, respectively. We refer to models with that additional truth condition also as models of knowledge.
\end{itemize}
Instead of ``$\mathcal{M}$ is an $EL3$-model" we also simply say ``$\mathcal{M}$ is $EL3$" (and similarly for the other classes of models).
\end{definition}

Note that in an $E6Ln^-$-model, for any proposition $m$ there are only two possibilities for $f_K(m)$: either $f_K(m)=f_\top$ or $f_K(m)=f_\bot$. In fact, $m\in\mathit{BEL}$ $\Rightarrow$ $f_K(m)\in\mathit{TRUE}$ $\Rightarrow$ $f_\square(f_K(m))\in\mathit{TRUE}$ $\Rightarrow$ $f_K(m)=f_\top$; and on the other hand, $m\notin\mathit{BEL}$ $\Rightarrow$ $f_\neg(f_K(m))\in\mathit{TRUE}$ $\Rightarrow$ $f_\square(f_\neg(f_K(m)))\in\mathit{TRUE}$ $\Rightarrow$ $f_\neg(f_K(m))=f_\top$ $\Rightarrow$ $f_K(m)=f_\bot$. One easily verifies that every $EL5^-$-model is an $EL4^-$-model, every $E5Ln^-$-model is an $E4Ln^-$-model and every $E6Ln^-$-model is an $E5Ln^-$-model.

\begin{lemma}\label{212}
Let $\mathcal{M}$ be a model. The set $\mathit{BEL}\subseteq M$ of believed propositions is a filter of the underlying Heyting algebra. If $\mathcal{M}$ is a model of knowledge, i.e. $\mathcal{M}$ satisfies truth condition \eqref{50} above, then $\mathit{BEL}\subseteq\mathit{TRUE}$ and $\mathit{BEL}$ is a proper filter.
\end{lemma}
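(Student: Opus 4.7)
The plan is to verify the three filter axioms---nonemptiness, meet-closure, and upward-closure---for $\mathit{BEL}$ using conditions (iv)--(vii) of Definition \ref{205}, translating between $\mathit{BEL}$ and $\mathit{TRUE}$ via the equivalence $f_K(m)\in\mathit{TRUE}\Leftrightarrow m\in\mathit{BEL}$. The overall strategy is to lift each filter condition on $\mathit{BEL}$ to the corresponding statement about $f_K$-values inside $\mathit{TRUE}$, and then exploit that $\mathit{TRUE}$ is itself an ultrafilter via Lemma \ref{200}.

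First I would show $f_K(f_\top)=f_\top$, so that $f_\top\in\mathit{BEL}$: condition (iv) gives $f_\square(f_\top)\in\mathit{TRUE}$; applying (vii) and upward closure of $\mathit{TRUE}$ yields $f_\square(f_K(f_\top))\in\mathit{TRUE}$; a second application of (iv) forces $f_K(f_\top)=f_\top$. For upward-closure, suppose $m\in\mathit{BEL}$ and $m\le m'$, so $f_\rightarrow(m,m')=f_\top$. Condition (vi) combined with $f_K(f_\top)=f_\top$ then gives $f_K(m)\le f_K(m')$, and upward closure of $\mathit{TRUE}$ delivers $f_K(m')\in\mathit{TRUE}$, i.e., $m'\in\mathit{BEL}$. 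For meet-closure, given $m,m'\in\mathit{BEL}$, I would exploit the Heyting inequality $m\le f_\rightarrow(m',f_\wedge(m,m'))$ together with the upward-closure just proved to deduce $f_\rightarrow(m',f_\wedge(m,m'))\in\mathit{BEL}$; condition (vi) and the prime-filter property of $\mathit{TRUE}$ (Lemma \ref{200}(b)), applied to $f_\rightarrow(f_K(m'),f_K(f_\wedge(m,m')))\in\mathit{TRUE}$ together with $f_K(m')\in\mathit{TRUE}$, then force $f_K(f_\wedge(m,m'))\in\mathit{TRUE}$.

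For the second assertion, assume \eqref{50} and let $m\in\mathit{BEL}$, so that $f_K(m)\in\mathit{TRUE}$. By \eqref{50} and upward closure, $f_\neg(f_\neg(m))\in\mathit{TRUE}$. Since $\mathit{TRUE}$ corresponds, by Lemma \ref{200}(a), to a homomorphism onto the two-element Boolean algebra, where double negation is the identity, we conclude $m\in\mathit{TRUE}$. Hence $\mathit{BEL}\subseteq\mathit{TRUE}$; as ultrafilters are proper, $f_\bot\notin\mathit{TRUE}\supseteq\mathit{BEL}$, so $\mathit{BEL}$ is a proper filter.

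I do not foresee a serious obstacle. The only slightly delicate step is meet-closure, where the right Heyting inequality has to be fed into axiom (vi) so that the prime-filter property of $\mathit{TRUE}$ can be invoked in the correct direction to extract membership of $f_K(f_\wedge(m,m'))$; everything else is a direct bookkeeping exercise with the truth conditions.
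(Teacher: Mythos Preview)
Your proof is correct. The paper's argument is essentially the same in spirit but packaged slightly differently: instead of verifying the three filter axioms one by one, the paper shows only that $f_\top\in\mathit{BEL}$ (via (iv), (vii), (ii), (v)) and that $\mathit{BEL}$ is closed under modus ponens, i.e.\ $f_\rightarrow(m,m')\in\mathit{BEL}$ and $m\in\mathit{BEL}$ imply $m'\in\mathit{BEL}$ (directly from (vi) and the ultrafilter property of $\mathit{TRUE}$), and then cites the standard fact that these two conditions already characterize filters in a Heyting algebra. Your route is more self-contained---you never invoke that characterization---at the cost of a slightly longer argument; in particular your meet-closure step is really a hand-unrolling of the MP-closure the paper uses. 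For the second assertion the two arguments are identical modulo which clause of Lemma~\ref{200} is quoted.
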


\begin{proof}
By the truth conditions (iv), (vii), (ii) and (v) of a model, $f_\top\in\mathit{BEL}$. By (vi), for any $m,m'\in M$: if $f_\rightarrow(m,m')\in\mathit{BEL}$ and $m\in\mathit{BEL}$, then $m'\in\mathit{BEL}$. These two conditions are sufficient for $\mathit{BEL}$ being a filter of the underlying Heyting algebra (see, e.g., \cite{chazak}). Finally, suppose $\mathcal{M}$ is a model of knowledge, i.e. $f_K(m)\le f_\neg(f_\neg(m))$ for all $m\in M$. Applying Lemma \ref{200}, we have $m\in\mathit{BEL}$ $\Rightarrow$ $f_K(m)\in\mathit{TRUE}$ $\Rightarrow$ $f_\neg(f_\neg(m))\in\mathit{TRUE}$ $\Rightarrow$ $m\in\mathit{TRUE}$. Hence, $\mathit{BEL}\subseteq\mathit{TRUE}$. Since $\mathit{TRUE}$ is a proper filter, $\mathit{BEL}$ is proper, too.  
\end{proof}

There is a simple characterization of $EL5^-$- and $EL5$-models: 

\begin{lemma}\label{215}
Let $\mathcal{M}=(M, \mathit{TRUE}, f_\bot, f_\top, f_\vee, f_\wedge, f_\rightarrow, f_\square, f_K)$ be a Heyting algebra with ultrafilter $\mathit{TRUE}$ and additional unary operations $f_\square$ and $f_K$. Then $\mathcal{M}$ is an $EL5^-$-model iff the following conditions are satisfied for all $m, m'\in M$:\\
(a) $\mathcal{M}$ has the Disjunction Property\\
(b)
\begin{equation*}
\begin{split}
f_\square(m)=
\begin{cases}
f_\top, \text{ if }m=f_\top\\
f_\bot, \text{ else}
\end{cases}
\end{split}
\end{equation*}
(c) $f_K(f_\rightarrow(m,m'))\le f_\rightarrow(f_K(m),f_K(m'))$\\
(d) $f_K(f_\top)=f_\top$.\\
\noindent $\mathcal{M}$ is an $EL5$-model iff in addition to (a)--(d), the condition\\
(e) $f_K(m)\le f_\neg (f_\neg(m))$\\
is satisfied for all $m\in M$.
\end{lemma}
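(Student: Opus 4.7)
The plan is to establish the two implications separately, with most of the work consisting of easy verifications once the degenerate shape of $f_\square$ forced by (b) is exploited.

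For the forward direction, assume $\mathcal{M}$ is an $EL5^-$-model. Then (a) is exactly the Disjunction Property noted immediately after Definition \ref{205} (a consequence of conditions (i) and (iv)), (b) is definitional (Definition \ref{210}), and (c) is truth condition (vi) of Definition \ref{205}. For (d), I would instantiate condition (vii) of Definition \ref{205} at $m = f_\top$ to get $f_\square(f_\top) \le f_\square(f_K(f_\top))$; since $f_\square(f_\top) = f_\top$ by (b), this forces $f_\square(f_K(f_\top)) = f_\top \in \mathit{TRUE}$, and truth condition (iv) then yields $f_K(f_\top) = f_\top$.

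For the reverse direction, assume (a)--(d) and (trivially) define $\mathit{BEL} := \{m \in M \mid f_K(m) \in \mathit{TRUE}\}$, which makes condition (v) automatic. Condition (b) gives the $EL5^-$ clause of Definition \ref{210} directly, and since $\mathit{TRUE}$ is an ultrafilter, (b) also gives (iv). The key observation for the remaining conditions (i), (ii), (iii), (vii) of Definition \ref{205} is that by (b), $f_\square$ takes only the values $f_\top$ and $f_\bot$, so any inequality of the form $f_\square(x) \le y$ is trivial whenever $f_\square(x) = f_\bot$; I only need to handle the case $f_\square(x) = f_\top$, i.e.\ $x = f_\top$. Specifically: (ii) is immediate from (b); (i) uses the Disjunction Property (a) in the non-trivial case $f_\vee(m,m') = f_\top$; (iii) reduces in the non-trivial case $m \le m'$ to two subcases depending on whether $m = f_\top$ (then also $m' = f_\top$) or $m \ne f_\top$ (then $f_\square(m) = f_\bot$); finally (vi) is exactly (c). Condition (vii) is where (d) enters: if $m \ne f_\top$ the inequality is trivial, and if $m = f_\top$ then (d) gives $f_K(f_\top) = f_\top$, so $f_\square(f_K(f_\top)) = f_\square(f_\top) = f_\top$.

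For the $EL5$ version, condition (e) is literally the extra truth condition \eqref{50} imposed in Definition \ref{210} for models of knowledge, so the characterization extends by simply appending it on both sides.

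The main obstacle, if there is one, is just remembering that $\mathit{BEL}$ is redundant data and must be reconstructed from $f_K$ and $\mathit{TRUE}$ in the reverse direction, and recognizing that condition (d) is exactly what is needed to salvage truth condition (vii) at $m = f_\top$; everything else collapses to case analysis on whether a given argument of $f_\square$ equals $f_\top$.
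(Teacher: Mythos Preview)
Your proposal is correct and follows essentially the same approach as the paper's proof: both directions hinge on the observation that under (b) the operator $f_\square$ is two-valued, so each inequality with $f_\square$ on the left need only be checked when that side equals $f_\top$. Your treatment is slightly more explicit (e.g.\ the derivation of (d) in the forward direction and the case split for (iii)), but the structure and ideas coincide with the paper's.
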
   

\begin{proof}
The properties (a)--(d) follow easily from the definition of an $EL5^-$-model. The other way round, suppose the Heyting algebra $\mathcal{M}$ satisfies (a)--(d). We show that $\mathcal{M}$ is an $EL3^-$-model in the sense of Definition \ref{205}. Because of (b), $\mathcal{M}$ then is an $EL5^-$-model. We check the conditions (i)--(vii) of Definition \ref{205}. Condition (iv) follows immediately from (b), condition (v) is redundant if we define $\mathit{BEL}$ in the obvious way, and (vi) is given by (c). Note that for each $m\in M$, we have either $f_\square(m)=f_\bot$ or $f_\square(m)=f_\top$. In order to verify the remaining conditions (i)--(iii) and (vii), we may assume that the left hand side of each of those inequalities is given by the top element $f_\top$. Inequality (i) then follows from the fact that $\mathcal{M}$ has the Disjunction Property. Inequality (ii) follows readily. (iii) follows from the fact that $m\le m'$. By (d), $f_\square(f_K(f_\top))=f_\top$. Now, inequality (vii) follows. Finally, the last affirmation regarding $EL5$-models follows readily from the definition. 
\end{proof}

Recall that the models of $L3$--$L5$ are defined by dropping the epistemic ingredients of the models of $EL3^-$--$EL5^-$, respectively. Lemma \ref{215} then gives a very simple characterization of $L5$-models: The class of $L5$-models is given by all Heyting algebras with a designated ultrafilter and a modal operator $f_\square$ such that the Disjunction Property and condition (b) are satisfied. 

\begin{definition}\label{220}
An assignment in a model $\mathcal{M}$ is a function $\gamma\colon V\rightarrow M$ which extends in the canonical way to a function $\gamma\colon Fm\rightarrow M$. More specifically, we have $\gamma(\bot)=f_\bot$, $\gamma(\top)=f_\top$, $\gamma(\square\varphi)=f_\square(\gamma(\varphi))$, $\gamma(K\varphi)=f_K(\gamma(\varphi))$ and $\gamma(\varphi*\psi)=f_*(\gamma(\varphi),\gamma(\psi))$, for $*\in\{\vee,\wedge,\rightarrow\}$. If $\mathcal{L}$ is one of our modal logics, then an $\mathcal{L}$-interpretation is a tuple $(\mathcal{M},\gamma)$ consisting of a $\mathcal{L}$-model and an assignment $\gamma\in M^V$. The relation of satisfaction is defined by 
\begin{equation*}
(\mathcal{M},\gamma)\vDash\varphi :\Leftrightarrow \gamma(\varphi)\in\mathit{TRUE}.
\end{equation*}
If $(\mathcal{M},\gamma)\vDash\varphi$, then we say that $\varphi$ is true in $\mathcal{M}$ under assignment $\gamma\in M^V$. If $\varphi$ is true in $\mathcal{M}$ under all assignments $\gamma\in M^V$, then we write $\mathcal{M}\vDash\varphi$ and say that $\varphi$ is valid in $\mathcal{M}$. A formula $\varphi$ is valid in logic $\mathcal{L}$ if $\varphi$ is valid in all $\mathcal{L}$-models. The defined notions extend in the usual way to sets of formulas.\\
Logical consequence in logic $\mathcal{L}$ is defined as usual: 
\begin{equation*}
\Phi\Vdash_\mathcal{L}\varphi :\Leftrightarrow \text{ if }(\mathcal{M},\gamma)\vDash\Phi\text{ then }(\mathcal{M},\gamma)\vDash\varphi,
\end{equation*}
for any $\mathcal{L}$-interpretation $(\mathcal{M},\gamma)$.
\end{definition}

Recall that we are dealing with special non-Fregean logics in the sense that the identity connective $\equiv$, defined by strict equivalence, satisfies Suszko's identity axioms (Id1)--(Id3). The intended semantics of the identity connective is `identity of meaning': we read $\varphi\equiv\psi$ as ``$\varphi$ and $\psi$ have the same meaning" or ``$\varphi$ and $\psi$ denote the same proposition". The following result (see also \cite{lewjlc1, lewsl}) corresponds exactly to that intuition:

\begin{lemma}\label{222}
\begin{equation*}
(\mathcal{M},\gamma)\vDash\varphi\equiv\psi\Leftrightarrow\gamma(\varphi)=\gamma(\psi).
\end{equation*}
\end{lemma}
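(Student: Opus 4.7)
The plan is to simply unwind the definitions on the left hand side and then invoke two elementary facts about Heyting algebras. Concretely, by the abbreviations introduced early in the paper, $\varphi\equiv\psi$ is $\square((\varphi\rightarrow\psi)\wedge(\psi\rightarrow\varphi))$. Applying the clauses of Definition \ref{220} for $\gamma$, we have
\begin{equation*}
\gamma(\varphi\equiv\psi)=f_\square\bigl(f_\wedge(f_\rightarrow(\gamma(\varphi),\gamma(\psi)),\,f_\rightarrow(\gamma(\psi),\gamma(\varphi)))\bigr).
\end{equation*}
So the satisfaction relation $(\mathcal{M},\gamma)\vDash\varphi\equiv\psi$ becomes the assertion that the right hand side belongs to $\mathit{TRUE}$.

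Next I would invoke truth condition (iv) of Definition \ref{205}, which says $f_\square(m)\in\mathit{TRUE}\Leftrightarrow m=f_\top$. This reduces the claim to
\begin{equation*}
f_\wedge(f_\rightarrow(\gamma(\varphi),\gamma(\psi)),\,f_\rightarrow(\gamma(\psi),\gamma(\varphi)))=f_\top.
\end{equation*}
In a Heyting algebra, $f_\wedge(a,b)=f_\top$ iff $a=f_\top$ and $b=f_\top$ (since $f_\top$ is the top element and $f_\wedge$ is the meet), so this splits into two conditions $f_\rightarrow(\gamma(\varphi),\gamma(\psi))=f_\top$ and $f_\rightarrow(\gamma(\psi),\gamma(\varphi))=f_\top$.

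Finally I would apply the standard equivalence in any Heyting algebra, $a\le b\Leftrightarrow f_\rightarrow(a,b)=f_\top$ (this is the fact the author flags as ``tacitly used'' just after Definition \ref{205}). The two conditions above then translate to $\gamma(\varphi)\le\gamma(\psi)$ and $\gamma(\psi)\le\gamma(\varphi)$, which by antisymmetry of the lattice order is precisely $\gamma(\varphi)=\gamma(\psi)$. There is no real obstacle here; the statement is essentially a direct semantic reading of the combined effect of axiom (iv) together with the universal properties of the relative pseudo-complement in a Heyting algebra.
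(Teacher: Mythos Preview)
Your proof is correct and follows essentially the same route as the paper's: both unwind the definition of $\varphi\equiv\psi$, invoke truth condition (iv) of Definition \ref{205} to pass from $f_\square(\cdot)\in\mathit{TRUE}$ to $\cdot=f_\top$, and then use the Heyting-algebra equivalence $a\le b\Leftrightarrow f_\rightarrow(a,b)=f_\top$ together with antisymmetry. The only cosmetic difference is that the paper splits $f_\square(f_\wedge(\cdot,\cdot))\in\mathit{TRUE}$ directly into two conditions $f_\square(\cdot)\in\mathit{TRUE}$, whereas you first reduce the whole meet to $f_\top$ and then split; both orderings rest on the same facts.
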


\begin{proof}
Suppose $\gamma(\varphi)=m$ and $\gamma(\psi)=m'$. Then $\gamma(\varphi\equiv\psi)\in\mathit{TRUE}$ iff $f_\square(f_\rightarrow(m,m'))\in\mathit{TRUE}$ and $f_\square(f_\rightarrow(m',m))\in\mathit{TRUE}$ iff $f_\rightarrow(m,m')=f_\top$ and $f_\rightarrow(m',m)=f_\top$ iff $m\le m'$ and $m'\le m$ iff $m=m'$.
\end{proof}

Soundness and completeness of logics $L$ and $EL5$ w.r.t. to corresponding algebraic semantics is shown in \cite{lewjlc2} and \cite{lewigpl}, respectively. Those proofs can be adapted straightforwardly to completeness results for the remaining modal logics of our hierarchies.

\begin{theorem}[Strong completeness]\label{600}
Suppose $\Phi\cup\{\varphi\}\subseteq Fm$ and let $\mathcal{L}$ be any of our modal logics. Then $\Phi\vdash_\mathcal{L}\varphi\Leftrightarrow\Phi\Vdash_\mathcal{L}\varphi$.
\end{theorem}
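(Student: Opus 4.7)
The plan is to prove soundness and completeness separately. For soundness, I would proceed by induction on the length of a derivation. The subtle point is rule AN: for $\vdash_\mathcal{L}\psi$ to imply $(\mathcal{M},\gamma)\vDash\square\psi$, we need $f_\square(\gamma(\psi))\in\mathit{TRUE}$, which by truth condition (iv) of Definition \ref{205} is equivalent to $\gamma(\psi)=f_\top$. So axioms must not merely belong to $\mathit{TRUE}$; they must denote the top element. The model conditions (i)--(vii) and the extra conditions of Definition \ref{210} are phrased precisely as inequalities $m\le m'$, which translate (via $m\le m'\Leftrightarrow f_\rightarrow(m,m')=f_\top$) to the statement that the corresponding axiom schemes always evaluate to $f_\top$. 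Hence condition (i) validates (A1), (ii) validates (A2), (iii) validates (A3), and so on through (KBel), (CoRe), (IntRe), (E4), (E5), (PNB), (NNB). Scheme (INT) holds because every Heyting algebra validates IPC; (TND) holds because $\mathit{TRUE}$ is an ultrafilter (Lemma \ref{200}(b)); and MP preserves truth by the filter property.

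For completeness, I would use a Lindenbaum--Tarski construction based on the propositional identity relation, as in \cite{lewjlc2, lewigpl}. Assume $\Phi\nvdash_\mathcal{L}\varphi$; extend $\Phi$ to a set $\Psi\supseteq\Phi$ that is maximal with respect to $\Psi\nvdash_\mathcal{L}\varphi$. Such a $\Psi$ is $\mathcal{L}$-deductively closed, contains every theorem of $\mathcal{L}$ (in particular every instance of (TND)), and is prime in the sense that $\psi_1\vee\psi_2\in\Psi$ iff $\psi_1\in\Psi$ or $\psi_2\in\Psi$. Define $\psi_1\sim\psi_2$ by $(\psi_1\equiv\psi_2)\in\Psi$. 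By Lemma \ref{100}, axioms (Id1)--(Id3) are available, so $\sim$ is a congruence with respect to all connectives, $\square$, and $K$. Let $M:=Fm/\sim$, and define lattice operations and modal operators by $f_*([\psi_1],[\psi_2]):=[\psi_1*\psi_2]$, $f_\square([\psi]):=[\square\psi]$, $f_K([\psi]):=[K\psi]$, with $f_\bot:=[\bot]$, $f_\top:=[\top]$. Put $\mathit{TRUE}:=\{[\psi]\mid\psi\in\Psi\}$ and $\mathit{BEL}:=\{[\psi]\mid K\psi\in\Psi\}$, and let $\gamma(x):=[x]$; an easy induction yields $\gamma(\psi)=[\psi]$ for every $\psi$.

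What remains is to verify that this quotient is a $\mathcal{L}$-model. The Heyting algebra structure follows from the fact that $\Psi$ contains every substitution-instance of every IPC-theorem prefixed by $\square$ (via AN) and that $\equiv$ is definable in terms of $\square$; the ordering is $[\psi_1]\le[\psi_2]\Leftrightarrow\square(\psi_1\rightarrow\psi_2)\in\Psi$. That $\mathit{TRUE}$ is an ultrafilter follows from deductive closure (for the filter property) plus (TND) and maximality of $\Psi$ (for ultrafilter-ness). Truth condition (iv), $f_\square([\psi])\in\mathit{TRUE}\Leftrightarrow[\psi]=f_\top$, is precisely the content of Theorem \ref{110}: $\square\psi\in\Psi\Leftrightarrow(\psi\equiv\top)\in\Psi$. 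The remaining truth conditions (i)--(iii) and (v)--(vii), together with the extension-specific conditions of Definition \ref{210}, each translate directly into the assertion that a corresponding axiom of $\mathcal{L}$ belongs to $\Psi$, which holds by assumption. Since $\Phi\subseteq\Psi$ and $\varphi\notin\Psi$, we conclude $(\mathcal{M},\gamma)\vDash\Phi$ and $(\mathcal{M},\gamma)\nvDash\varphi$. The main obstacle is bookkeeping: matching each axiom of each logic of the hierarchy to its semantic counterpart and ensuring that the quotient actually carries a Heyting algebra structure (that is, that intuitionistic equivalences force equality of classes), but this is handled uniformly because $L\subseteq\mathcal{L}$ and SP already holds in $L$ by Lemma \ref{100}. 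The cases $\mathcal{L}=L$ and $\mathcal{L}=EL5$ are worked out in \cite{lewjlc2, lewigpl}; all other cases of the hierarchy follow by the same construction with the obvious modifications to the list of verified conditions.
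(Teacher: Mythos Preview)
Your proposal is correct and follows the same approach as the paper, which does not give a self-contained proof but merely refers to \cite{lewjlc2, lewigpl} and remarks that those arguments adapt straightforwardly to the remaining systems of the hierarchy. Your outline is precisely such an adaptation: soundness by induction on derivations (with the key observation that the model conditions are stated as lattice inequalities so that axioms evaluate to $f_\top$, making AN sound via condition (iv)), and completeness via the Lindenbaum--Tarski quotient modulo $\equiv$, with Theorem~\ref{110} supplying condition (iv) and the remaining conditions read off from the corresponding axiom schemes in $\Psi$.
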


We finish this section with a discussion on \textbf{self-referential propositions}. Having in mind the intended meaning of the identity connective (see Lemma \ref{222}), we are able to express self-referential statements by means of equations (this kind of modeling self-reference was proposed in \cite{str} and subsequently used in, e.g., \cite{zei, lewsl0, lewigpl0}). For instance, the equation 
\begin{equation}\label{300}
x\equiv (x\rightarrow\bot)
\end{equation}
defines a version of the \textit{liar proposition}. In fact, if the equation is satisfied in a given model, then the proposition denoted by $x$ says ``This proposition implies the \textit{absurdum}" or ``This proposition is false" or ``I'm lying". Fortunately, equations defining such paradoxical self-referential statements are unsatisfiable (for essentially the same reasons as $\varphi\leftrightarrow\neg\varphi$ is unsatisfiable in two-valued classical logic). The liar proposition, as a semantic object, does not exist.\\

Is there a proposition saying ``I am proved"? Since we identify \textit{proof} with \textit{intuitionistic truth}, we are actually asking for the existence of a \textit{truth-teller proposition} ``I am true", where we mean intuitionistic truth. This is a non-paradoxical proposition which can be defined by the equation 
\begin{equation}\label{310}
x\equiv \square x.
\end{equation}
Any proposition $m$ that solves the equation is an intuitionistic truth-teller saying ``I am proved". A truth-teller $m$ may be classically true (i.e. $m\in\mathit{TRUE}$) or classically false (i.e. $m\in M\smallsetminus \mathit{TRUE}$). In our specific example, there is only one potential true truth-teller, namely the top element $f_\top$ of a model (because of truth condition (iv) of Definition \ref{205}). The top element of each $L4$-model is a truth-teller, since $f_\square(m)=f_\top$ $\Leftrightarrow$ $m=f_\top$. On the other hand, in every $L5$-model, the proposition $f_\bot$ is a false truth-teller. In models which are not $L5$, there might exist further truth-tellers that are classically false, i.e. fixed points of $f_\square$ not belonging to $\mathit{TRUE}$.\\

Recall that we may read $\Diamond\varphi$ as ``$\varphi$ is consistent (with the given set of proved propositions)". Is there a proposition asserting its own consistency? We are asking for a solution of the equation
\begin{equation}\label{320}
x\equiv \Diamond x.
\end{equation}
One easily checks that in any $L4$-model, the bottom element $f_\bot$ is a solution; and in an $L5$-model, the top element $f_\top$ is a solution as well. So \eqref{320} is a further example of an equation allowing both true and false propositions as solutions in suitable models. There might exist further solutions distinct from $f_\top$ and $f_\bot$. Finally, a proposition that asserts its own inconsistency is described by the equation
\begin{equation}\label{330}
x\equiv \neg\Diamond x.
\end{equation}
An alternative version of that self-referential statement is given by the equation
\begin{equation}\label{340}
x\equiv \square\neg x.
\end{equation}
The difference between both equations is subtle. In fact, $\neg\Diamond x =\neg\neg\square\neg x$ and $\square\neg x$ are logically equivalent formulas in our classical modal logics. From an intuitionistic point of view, however, they express different intensions and therefore may denote different propositions. Suppose equation \eqref{340} is true in a given model. Then the proposition $m$ denoted by $x$ says ``There is an actual proof that I am false" or, in other words, ``I am inconsistent with the proved propositions". Suppose $m$ is classically true, i.e. $m=f_\square(f_\neg((m))\in\mathit{TRUE}$, then, by truth condition (iv) of a model, $f_\neg(m)=f_\top$, i.e. $m=f_\bot\notin\mathit{TRUE}$. This contradiction shows that $m$ cannot be classically true. So whenever equation \eqref{340} is satisfied, the proposition $m$ denoted by $x$ must be classically false. Furthermore, $m$ cannot be the proposition $f_\bot$ for otherwise $m=f_\square(f_\neg(m))=f_\square(f_\neg(f_\bot))=f_\square(f_\top)\in\mathit{TRUE}$, contradicting $m\notin\mathit{TRUE}$. We conclude that $m$ is a false proposition distinct from $f_\bot$. In particular, $m$ has a possible proof (i.e. $m$ is consistent with the proved propositions). Hence, $m$ is a consistent proposition asserting its own inconsistency. This sounds paradoxical. Note, however, that $m$ is classically false -- so there is no paradox. As a non-paradoxical proposition, a solution of \eqref{340} should exist in some model. Interestingly, since $m\notin\{f_\top, f_\bot\}$, no $L5$-model satisfies \eqref{340}. 

The simplest self-referential statements involving knowledge are described by the equations
\begin{equation}\label{350}
x\equiv Kx
\end{equation}
\begin{equation}\label{360}
x\equiv \neg Kx.
\end{equation}
Obviously, if  \eqref{350} is true, then the proposition denoted by $x$ says ``I am known", and if  \eqref{360} is true, then $x$ denotes a proposition saying ``I am unknown". Equation \eqref{350} is satisfied in every model where $x$ denotes $f_\top$. Consider equation \eqref{360} and assume that $K$ stands for knowledge as true belief, i.e. scheme $K\varphi\rightarrow\varphi$ is valid. Then a solution must be a proposition that is classically true and unknown. If $K$ refers to belief and not to knowledge, i.e. $K\varphi\rightarrow\varphi$ does not hold, then \eqref{360} may have classically false propositions as solutions. Such a proposition then says something like ``Nobody believes in me", which is false. The equations
\begin{equation*}
\begin{split}
&x\equiv\Diamond Kx\\
&x\equiv\neg\Diamond K x
\end{split}
\end{equation*}
define propositions asserting something like ``I'm believable (knowable)" and ``I'm unbelievable (unknowable)", respectively. These are further examples of non-paradoxical self-referential statements, i.e. the corresponding equations are satisfiable. 

In the last section, we present a stronger logic where epistemic operator $K$ becomes a total truth predicate. The epistemic self-referential propositions discussed here then become statements about classically truth or falsity.

\section{Relational semantics for logics extending $L5$}

In this section, we show that the S5-style modal logics of our hierarchy, i.e. those containing the modal axioms (A4) and (A5), are complete w.r.t. a relational semantics based on intuitionistic general frames. We are unable to find any kind of possible worlds semantics for weaker logics of our hierarchy. Interestingly, the presented semantic framework also describes the intuitionistic epistemic logics $IEL^-$ and $IEL$ presented in \cite{artpro}, as we shall see in the next section.\\ 
As in the preceding section, we work here with the full epistemic language $Fm$. However, dropping the epistemic ingredients from Definition \ref{790} below (more specifically: function $E$) results in (much simpler) frames for non-epistemic logic $L5$ over the modal sublanguage $Fm_1\subseteq Fm$. In this way, relational semantics for $L5$, as well as corresponding soundness and completeness proofs, are implicitly contained in the following approach.

\begin{definition}\label{790}
An $EL5^-$-frame $\mathcal{F}=(W,R,P,E,w_T)$ is given by
\begin{itemize}
\item a non-empty set $W$ of worlds
\item a partial ordering $R\subseteq W\times W$, called accessibility relation, such that $W$ has an $R$-smallest element $w_\bot$ (the bottom of the frame) and every $R$-chain has an upper bound in $W$ (Zorn's Lemma then ensures that each $w\in W$ accesses an $R$-maximal element); for $w\in W$ let $R(w):=\{w'\in W\mid wRw'\}$; and let $Max(W)$ be the set of all $R$-maximal elements
\item a set $P\subseteq Pow(W)$ of upper sets (recall that $A\in Pow(W)$ is an upper set if for all $w,w'\in W$: if $w\in A$ and $wRw'$, then $w'\in A$)
\item a function $E\colon W\rightarrow Pow(P)$ such that
\begin{itemize}
\item for each $w\in W$, $E(w)\subseteq P$ is a filter on $P$, i.e. $E(w)$ is a non-empty set with the following properties: if $A\in E(w)$ and $B\in E(w)$, then $A\cap B\in E(w)$; and if $A\in E(w)$ and $A\subseteq B\in P$, then $B\in E(w)$
\item for all $w,w'\in W$: $wRw'$ implies $E(w)\subseteq E (w')$; i.e. $E$ is a monotonic function on $W$
\end{itemize}
\item a designated $R$-maximal element $w_T\in W$.
\end{itemize}
Furthermore, we require that $P$ is closed under the following conditions:\\
(a) $\varnothing, W\in P$\\
(b) If $A,B\in P$, then the following sets are elements of $P$: 
\begin{equation*}
\begin{split}
&A\cap B\\
&A\cup B\\
&A\supset B := \{w\in W\mid\text{ for all }w'\in R(w), w'\in A\text{ implies }w'\in B\}\\
&KA := \{w\in W\mid A\in E(w)\}.
\end{split}
\end{equation*}
\end{definition}

Intuitively, $P$ is viewed as the set of all propositions. For each $w\in W$, the elements of filter $E(w)\subseteq P$ are the propositions believed at world $w$. For $A\in P$, the set $KA=\{w\in W\mid A\in E(w)\}$ is the proposition saying ``$A$ is believed (known)". Note that $KA$ is an upper set because $E$ is a monotonic function on $W$. Also note that because of $W\in P$ it holds that $W\in E(w)$, for any $w\in W$.

\begin{definition}\label{800}
Let $\mathcal{F}=(W,R,P,E,w_T)$ be an $EL5^-$-frame. $\mathcal{F}$ is
\begin{itemize}
\item an $E4L5^-$-frame if $A\in E(w)$ implies $\{w'\in W\mid A\in E(w')\}\in E(w)$, for any $w\in W$ and $A\in P$.
\item an $E5L5^-$-frame if $\mathcal{F}$ is an $E4L5^-$-frame and for every $w\in W$ and every $A\in P$: if $A\notin E(w')$ for all $w'\in R(w)$, then the proposition $\{w''\in W\mid A\in E(w'')\}\supset \varnothing$ belongs to $E(w)$.
\item an $E6L5^-$-frame if for all $w\in W$, $E(w)=E(w_\bot)$, where $w_\bot$ is the bottom world. That is, $E\colon W\rightarrow Pow(P)$ is a constant function and there is only one (global) set of believed propositions $E(w_\bot)$ which we simply denote by $E := E(w_\bot)$.
\item an $EkL5$-frame, for $k\in\{4,5,6\}$, if $\mathcal{F}$ is an $EkL5^-$-frame and for every $w\in W$, each element of $E(w)$ contains all those maximal worlds which are accessible from $w$: 
\begin{equation*}
Max(W)\cap R(w)\subseteq A,\text{ for each }A\in E(w).\footnote{It follows in particular that $\varnothing\notin E(w)$, i.e. $E(w)$ is a proper filter on $P$.}
\end{equation*}
\end{itemize}
\end{definition}

The condition of an $E4L5^-$-frame says that whenever a proposition $A$ is believed at $w$, then the proposition ``$A$ is believed" is believed at $w$. The $E5L5^-$-condition says that if proposition $A$ is unbelievable from the point of view of world $w$, then the proposition ``$A$ is unbelievable" is believed at $w$. In an $E6L5^-$-frame, a proposition $A$ is believed at some world iff $A$ is believed at all worlds iff $A\in E$. Consequently, for any $A\in P$, the proposition $KA=\{w\in W\mid A\in E(w)\}$, ``$A$ is believed", is either given by the whole set $W$ or by the empty set. Finally, the condition of an $EkL5$-frame says that if a proposition $A$ is believed at world $w$, then $A$ is true at all maximal worlds accessible from $w$. This is equivalent to the following: If a proposition $A$ is believed at world $w$, then for any world $w'$ accessible from $w$, $A$ cannot be false at $w'$. This is a semantic counterpart of intuitionistic reflection, i.e. axiom scheme (IntRe). Under this condition, belief becomes knowledge. \\

It is clear by the definition that every $E5L5^-$-frame is an $E4L5^-$-frame. Furthermore:

\begin{lemma}\label{810}
Every $E6L5^-$-frame is an $E5L5^-$-frame.
\end{lemma}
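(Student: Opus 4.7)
The plan is to verify the two defining conditions of an $E5L5^-$-frame directly from the $E6L5^-$ assumption that $E$ is the constant function $w\mapsto E$. The key observation throughout is that since $R$ is a partial order it is reflexive, so $w\in R(w)$ for every world $w$, and that $W\in E$ because $E$ is a filter on $P$ and $W\in P$ by condition (a) of Definition \ref{790}.

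First I would check the $E4L5^-$ condition. Suppose $A\in E(w)$. Since $E(w')=E=E(w)$ for every $w'\in W$, the set $\{w'\in W\mid A\in E(w')\}$ is either all of $W$ or empty depending only on whether $A\in E$. The hypothesis $A\in E(w)$ gives $A\in E$, so this set equals $W$. As noted, $W\in E(w)$, so the $E4L5^-$ condition holds.

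Next I would verify the $E5L5^-$-condition. Suppose $A\notin E(w')$ for all $w'\in R(w)$. By reflexivity $w\in R(w)$, so in particular $A\notin E(w)=E$, whence $A\notin E(w'')$ for every $w''\in W$. Therefore $\{w''\in W\mid A\in E(w'')\}=\varnothing$. Unwinding the definition of $\supset$, we have $\varnothing\supset\varnothing=\{v\in W\mid\forall v'\in R(v)\,(v'\in\varnothing\Rightarrow v'\in\varnothing)\}=W$, which again lies in $E(w)$. So the required proposition belongs to $E(w)$.

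There is no real obstacle here: the argument is just an unfolding of the $E6L5^-$ hypothesis, using only that $E$ is constant, that $W\in E$, and that $R$ is reflexive. The only mild subtlety is recognising that the set $\{w''\mid A\in E(w'')\}$ collapses to either $W$ or $\varnothing$ in the constant case, which is what lets the two conditions trivialize.
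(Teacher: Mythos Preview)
Your proof is correct and takes essentially the same approach as the paper's: both use that $E$ is constant to collapse $\{w'\mid A\in E(w')\}$ to $W$ or $\varnothing$, and then invoke $W\in E$ and $\varnothing\supset\varnothing=W$ to verify the two conditions. Your version is slightly more explicit (justifying $W\in E$ via the filter property and invoking reflexivity of $R$ for the $E5L5^-$ step), but the argument is the same.
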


\begin{proof}
Suppose we are given an $E6L5^-$-frame. Then $E=E(w)=E(w_\bot)$, for all $w\in W$. Let $A\in E$, for some $A\in P$. Then, $\{w\in W\mid A\in E(w)\}=W\in E$. So the condition of an $E4L5^-$-frame is satisfied. Now, assume $A\notin E$. That is, $A\notin E(w)$ for all $w\in W$. Hence, $\{w\in W\mid A\in E(w)\}=\varnothing$ and $(\varnothing\supset\varnothing) = W\in E$. Thus, the condition of an $E5L5^-$-frame holds, too. 
\end{proof}

An assignment (or valuation) in a given frame $\mathcal{F}=(W,R,P,E,w_T)$ is a function $g\colon V\rightarrow P$. Given a frame $\mathcal{F}$ and an assignment $g$ in $\mathcal{F}$, we call the tuple $(\mathcal{F},g)$ a relational model based on frame $\mathcal{F}$. Given a relational model $\mathcal{K}=(\mathcal{F},g)$, the relation of satisfaction $w\vDash\varphi$, read: ``$\varphi$ is true at $w$", between worlds and formulas is defined by induction on the complexity of formulas, simultaneously for all worlds of the underlying frame $\mathcal{F}$:\\

\noindent $w\nvDash \bot$\\
$w\vDash x :\Leftrightarrow w\in g(x)$\\
$w\vDash \varphi\vee\psi :\Leftrightarrow w\vDash\varphi$ or $w\vDash\psi$\\
$w\vDash \varphi\wedge\psi :\Leftrightarrow w\vDash\varphi$ and $w\vDash\psi$\\
$w\vDash \varphi\rightarrow\psi :\Leftrightarrow$ for all $w'\in R(w)$, $w'\vDash\varphi$ implies $w'\vDash\psi$\\
$w\vDash \square\varphi :\Leftrightarrow w_\bot\vDash\varphi$\\
$w\vDash K\varphi :\Leftrightarrow \varphi^*\in E(w)$, where $\varphi^* := \{w'\in W\mid w'\vDash\varphi\}$.\\

We write $(\mathcal{F},w)\vDash\varphi$ instead of $w\vDash\varphi$ when we wish to emphasize the ambient model $\mathcal{F}$. Notice that ``$\varphi$ is false at $w$" means that $w\vDash\neg\varphi$, i.e. $\varphi$ is not true at all accessible worlds.\\

We observe that the meaning of the logical connectives is defined as in Kripke semantics of IPC while the necessity operator behaves as in Kripke semantics of modal logic S5: $(w,g)\vDash\square\varphi\Leftrightarrow$ for all $w'\in W$, $(w',g)\vDash\varphi$. The latter follows from the fact that we are dealing with rooted frames in which the usual monotonicity condition of intuitionistic frames holds: formulas true at some world remain true at accessible worlds (see the next Remark).

\begin{remark}\label{915}
Let $\mathcal{K}=(\mathcal{F},g)$ be a relational model with a set $P$ of propositions. We extend the assignment $g\colon V\rightarrow P$ to a function $g\colon Fm\rightarrow P$ defining recursively $g(\bot) :=\varnothing$, $g(\varphi\vee\psi) :=g(\varphi)\cup g(\psi)$, $g(\varphi\wedge\psi) :=g(\varphi)\cap g(\psi)$, $g(\varphi\rightarrow\psi) :=g(\varphi)\supset g(\psi)$, and

\begin{equation*}
g(K\varphi) := \{w\mid g(\varphi)\in E(w)\},
\end{equation*}

\begin{equation*}
\begin{split}
g(\square\varphi) :=
\begin{cases}
& W, \text{ if } w_\bot\in g(\varphi)\\
& \varnothing, \text{ else.}
\end{cases}
\end{split}
\end{equation*}

By closure properties of $P$, it follows inductively that $g$ is well-defined, i.e. $g(\varphi)$ is an element of $P$, for any $\varphi\in Fm$. Also by induction on the complexity of formulas, simultaneously for all worlds $w\in W$, one shows that for all $w\in W$ and all $\varphi\in Fm$, $w\vDash\varphi\Leftrightarrow w\in g(\varphi)$. That is, $g(\varphi)=\{w\in W\mid w\vDash\varphi\}=\varphi^*$, for any $\varphi\in Fm$. In particular, each $\varphi^*=g(\varphi)$ is a proposition, i.e. an element of $P$. Since all propositions are upper sets, the usual monotonicity condition of intuitionistic models follows: if $w\vDash\varphi$ and $wRw'$, then $w'\vDash\varphi$.
\end{remark}

\begin{definition}\label{920}
Let $\mathcal{K}=(\mathcal{F},g)$ be a relational model with designated maximal world $w_T$, and let $\varphi\in Fm$. We say that $\mathcal{K}$ is a model of $\varphi$, or $\varphi$ is (classically) true in $\mathcal{K}$, notation: $\mathcal{K}\vDash\varphi$, if
\begin{equation*}
(\mathcal{K},w_T)\vDash\varphi,
\end{equation*}
i.e. if $\varphi$ is true at $w_T$. This notion extends in the usual way to sets of formulas. Let $\mathcal{L}$ be the logic $EL5^-$, $EL5$, $EkL5^-$, or $EkL5$, for $k\in\{4,5,6\}$. We denote by $\mathit{Mod}^r_\mathcal{L}(\Phi)$ the class of all relational models of $\Phi$ which are based on $\mathcal{L}$-frames, and we consider the following relation of logical consequence: 
\begin{equation*}
\Phi\Vdash^r_\mathcal{L}\psi :\Leftrightarrow \mathit{Mod}^r_\mathcal{L}(\Phi)\subseteq \mathit{Mod}^r_\mathcal{L}(\{\psi\}),
\end{equation*}
where $\Phi\cup\{\psi\}\subseteq Fm$.
\end{definition} 

\begin{theorem}[Soundness]\label{940}
Let $\mathcal{L}$ be the logic $EL5^-$, $EL5$, $EkL5^-$ or $EkL5$, for $k\in \{4,5,6\}$. Then for any set of formulas $\Phi\cup\{\varphi\}\subseteq Fm$, 
\begin{equation*}
\Phi\vdash_\mathcal{L}\varphi \Rightarrow \Phi\Vdash^r_\mathcal{L}\varphi.
\end{equation*}
\end{theorem}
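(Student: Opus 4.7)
The proof proceeds by induction on the length of a derivation of $\varphi$ from $\Phi$. Since $R$ is a partial order, hence reflexive, MP preserves truth at $w_T$ via the satisfaction clause for $\to$. Scheme (TND) is true at $w_T$ because $w_T$ is $R$-maximal, so $R(w_T) = \{w_T\}$ and $w_T \vDash \neg\varphi$ reduces to $w_T \nvDash \varphi$, giving $w_T \vDash \varphi \vee \neg\varphi$. Everything therefore reduces to (i) verifying that each axiom scheme of $\mathcal{L}$ is valid at \emph{every} world (not merely at $w_T$) in every $\mathcal{L}$-model, and (ii) verifying that AN preserves this uniform kind of validity. For (ii), once every axiom $\chi$ is valid at every world, applying AN makes $\square\chi$ true at $w_T$ iff $\chi$ is true at $w_\bot$, which holds by uniform validity.

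The central tool for the modal axioms is the observation (implicit in Remark~\ref{915}) that since each $\varphi^*$ is an upper set and $w_\bot$ is the root of $R$, we have $w_\bot \in \varphi^*$ iff $\varphi^* = W$; hence for every $w$, $w \vDash \square\varphi$ iff $\varphi^* = W$, so $\square\varphi$ denotes one of the two propositions $W$ or $\varnothing$. Using this, (A2) follows from upward monotonicity, (A1) from the fact that $w_\bot \in \varphi^* \cup \psi^*$ forces $\varphi^* = W$ or $\psi^* = W$, (A3) by combining $(\varphi\to\psi)^* = W$ with $\varphi^* = W$ to force $\psi^* = W$, and (A4), (A5) directly from the two-valued behaviour of $\square$. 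The instances of (INT) are valid at every world by the standard fact that the upper sets of $(W,R)$ form a Heyting algebra under $\cap, \cup, \supset$, which is exactly how the propositional connectives are interpreted.

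The epistemic axioms rest on the filter structure of $E(w)$. For (KBel), one combines $(\varphi\to\psi)^* \cap \varphi^* \subseteq \psi^*$ with the closure of $E(w)$ under intersection and upward closure. For (CoRe), since $\square\varphi$ forces $\varphi^* = W$, and $W$ is the top of the proposition lattice and thus lies in every filter $E(w_\bot)$, we have $w_\bot \vDash K\varphi$ and therefore $\square K\varphi$ holds globally. Axioms (E4), (E5), (PNB), (NNB) correspond respectively to the frame conditions defining $E4L5^-$, $E5L5^-$, and $E6L5^-$ frames in Definition~\ref{800}. For the knowledge systems, axiom (IntRe) $K\varphi \to \neg\neg\varphi$ follows from the constraint $Max(W) \cap R(w) \subseteq A$ for $A \in E(w)$: if $w \vDash K\varphi$ then $\varphi^* \in E(w)$, and for any $w'' \in R(w)$ a maximal element $w_m \in R(w'')$ (guaranteed by Zorn's lemma) lies in $Max(W) \cap R(w) \subseteq \varphi^*$, so $w_m \vDash \varphi$ and hence $w'' \nvDash \neg\varphi$.

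The main obstacle is the bookkeeping for the epistemic axioms: matching each frame condition with the correct axiom, and ensuring that validity is uniform across worlds so that AN can be iterated. Of these, (IntRe) is the most delicate, since its justification combines three ingredients --- the filter structure of $E(w)$, the upper-set nature of propositions, and the existence of $R$-maximal accessible worlds supplied by Zorn's lemma.
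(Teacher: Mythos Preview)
Your proposal is correct and follows essentially the same route as the paper: reduce soundness to showing that every axiom holds at the root $w_\bot$ (equivalently, at all worlds), that (TND) holds at the maximal world $w_T$, and that MP preserves truth at $w_T$; the individual verifications you give for the modal and epistemic schemes match the paper's almost line for line, including the argument for (IntRe) via accessible maximal worlds. One small over-worry: your remark about needing uniform validity ``so that AN can be iterated'' is unnecessary in this system, since by the definition of derivation AN introduces only formulas $\square\psi$ with $\psi$ an \emph{axiom}, never $\square\square\psi$; the single-step check $w_\bot\vDash\psi$ that you already provide is all that is required.
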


\begin{proof}
First, we consider logic $\mathcal{L}=E5L5$. Let $\mathcal{K}$ be any model based on an $E5L5$-frame. It suffices to show: $\mathcal{K}\vDash\square\varphi$, for all axioms $\varphi$ of logic $E5L5$ (i.e., all axioms, along with application of rule AN, are sound); and $\mathcal{K}\vDash\psi\vee\neg\psi$, for all formulas $\psi$ (i.e., \textit{tertium non datur} is sound). The latter follows immediately from the fact that truth in a model is defined as satisfaction at a maximal world. For the former, we have to show that $w_\bot\vDash\varphi$, for each axiom $\varphi$, where $w_\bot$ is the bottom world. This is clear in case of theorems of IPC and their substitution-instances (the frame is also a frame for IPC). Also the cases of (A1) and (A2) follow readily.\\
(A3): It is enough to show that $w_\bot\vDash\varphi\rightarrow\psi$ implies $w_\bot\vDash \square\varphi\rightarrow\square\psi$. This follows easily from the definition of satisfaction.\\
(A4): It is enough to show that $w_\bot\vDash\varphi$ implies $w_\bot\vDash\square\varphi$. Again, this is clear by the definition of satisfaction.\\
(A5): Truth of $\neg\square\varphi$ at some world implies truth of $\neg\square\varphi$ at all worlds implies truth of $\square\neg\square\varphi$ at all worlds.\\
(KBel): $K(\varphi\rightarrow\psi)\rightarrow (K\varphi\rightarrow K\psi)$. Let $w\in W$. It is enough to show the following: 
\begin{equation*}
\text{ If }(\varphi\rightarrow\psi)^*\in E(w)\text{ and }\varphi^*\in E(w'),\text{for any }w'\in R(w),\text{ then }\psi^*\in E(w').
\end{equation*}
Suppose the premises hold true and let $w'\in R(w)$. Then $E(w)\subseteq E(w')$. Thus, $(\varphi\rightarrow\psi)^*\in E(w')$. Since $E(w')$ is a filter, it follows that $A:=(\varphi\rightarrow\psi)^*\cap\varphi^*\in E(w')$ and $A \subseteq\psi^*\in E(w')$.\\
(CoRe): $\square\varphi\rightarrow \square K\varphi$. It is enough to show that $w_\bot\vDash\varphi$ implies $w_\bot\vDash K\varphi$. Suppose $w_\bot\vDash\varphi$. Then $\varphi^*=W$. Moreover, $W\in E(w)$, for every $w\in W$. In particular, $\varphi^*\in E(w_\bot)$. Thus, $w_\bot\vDash K\varphi$.\\
(IntRe): $K\varphi\rightarrow\neg\neg\varphi$. Suppose $w\vDash K\varphi$. Then $\varphi^*\in E(w)$. Since we are dealing with an $E5L5$-frame, $\varphi^*$ contains all maximal worlds accessible from $w$, i.e. $w'\vDash\varphi$, for all $w'\in Max(W)\cap R(w)$. Then for all $w''\in W$ accessible from $w$, we have $w''\nvDash\neg\varphi$. Hence, $w\vDash\neg\neg\varphi$. \\
(E4): Let $w\vDash K\varphi$. Then $\varphi^*\in E(w)$. By the property of an $E4L5$-frame, $\{w'\in W\mid\varphi^*\in E(w')\}=\{w'\in W\mid w'\vDash K\varphi\}=(K\varphi)^*\in E(w)$. Thus, $w\vDash KK\varphi$.\\
(E5): Let $w\vDash \neg K\varphi$. Then for all $w'\in R(w)$, $\varphi^*\notin E(w')$. By properties of an $E5L5$-frame, $A:= (\{w''\in W\mid \varphi^*\in E(w'')\}\supset\varnothing ) \in E(w)$. \\
\textit{Claim}: $A=(\neg K\varphi)^*$.\\
\textit{Proof of Claim}. We have $(\neg K\varphi)^*=\{w'''\in W\mid w'''\vDash\neg K\varphi\}=\{w'''\in W\mid\varphi^*\notin E(w'')$ whenever $w''\in R(w''')\}$. Now, by the definition of a proposition of the form $B_1\supset B_2$, one easily checks that the Claim is true.\\
So by the Claim, $(\neg K\varphi)^*\in E(w)$. That is, $w\vDash K\neg K\varphi$.\\
Finally, we consider the case of logic $\mathcal{L}=E6L5$. It remains to show that the axiom schemes (PNB) and (NNB) are valid in the class of all models based on $E6L5$-frames. But this is clear since in any $E6L5$-frame there is only one global set $E$ of known propositions: a proposition $A$ is known at some world of the frame iff $A$ is known at all worlds of the frame.
\end{proof} 

Towards the completeness theorem, we show that for any algebraic model of some of our S5-style logics there is a relational model that satisfies precisely the same set of formulas. Completeness w.r.t. relational semantics then will follow from completeness w.r.t. algebraic semantics. 

\begin{theorem}\label{960}
Let $\mathcal{L}\in\{EL5^-, EL5, EkL5^-, EkL5\mid k\in\{4,5,6\}\}$ and let $(\mathcal{M},\gamma)$ be an algebraic $\mathcal{L}$-interpretation. Then there is a relational model $\mathcal{K}=(\mathcal{F},g)$, based on a $\mathcal{L}$-frame $\mathcal{F}$, such that for all $\varphi\in Fm$:
\begin{equation*}
(\mathcal{M},\gamma)\vDash\varphi\Leftrightarrow\mathcal{K}\vDash\varphi.
\end{equation*}
\end{theorem}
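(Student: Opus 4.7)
My plan is to construct the required frame as the space of prime filters of the underlying Heyting algebra of $\mathcal{M}$ and to let $g$ be the pointwise ``Stone transform'' of $\gamma$. Concretely, set $W$ to be the collection of all prime filters of $\mathcal{M}$, let $R$ be set-theoretic inclusion, take $w_\bot := \{f_\top\}$ (which is prime thanks to the Disjunction Property of $\mathcal{M}$) and $w_T := \mathit{TRUE}$ (which is $R$-maximal, being an ultrafilter). Unions of chains of prime filters are again prime, so Zorn's condition on $R$-chains holds. For each $m \in M$ put $\hat m := \{w \in W \mid m \in w\}$ and let $P := \{\hat m \mid m \in M\}$. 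Define $E(w) := \{\hat m \mid f_K(m) \in w\}$, and finally set $g(x) := \widehat{\gamma(x)}$ for each variable $x$.

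The next step is to check that $\mathcal{F} = (W,R,P,E,w_T)$ really is an $\mathcal{L}$-frame. Standard Stone-type calculations give $\widehat{m_1 \wedge m_2} = \hat{m_1} \cap \hat{m_2}$, $\widehat{m_1 \vee m_2} = \hat{m_1} \cup \hat{m_2}$, $\hat{f_\bot} = \varnothing$, $\hat{f_\top} = W$, and, by Lemma \ref{200}(d), $\widehat{f_\rightarrow(m_1,m_2)} = \hat{m_1} \supset \hat{m_2}$. Moreover $\{w \mid \hat m \in E(w)\} = \widehat{f_K(m)} \in P$, so $P$ is closed under $K$. The properties of $E$ reduce to algebra on $f_K$: first, $f_K(f_\top) = f_\top$ follows from condition (vii) of Definition \ref{205} together with (ii) and (iv), so $W \in E(w)$; closure of $E(w)$ under $\cap$ follows from $f_K(m_1) \wedge f_K(m_2) \leq f_K(m_1 \wedge m_2)$, which in turn is derived from the distribution condition (vi) applied to $m_1 \rightarrow (m_2 \rightarrow m_1 \wedge m_2) = f_\top$ and the identity $f_K(f_\top) = f_\top$; upward closure is routine, since $\hat{m_1} \subseteq \hat{m_2}$ implies $m_1 \leq m_2$ (prime filters separate the elements of a Heyting algebra) and $f_K$ is monotone by the same distribution trick. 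Monotonicity of $E$ along $R$ is immediate from $f_K(m) \in w \subseteq w'$.

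With these data in hand, a routine induction on $\varphi$ shows that $\widehat{\gamma(\varphi)} = \{w \in W \mid w \vDash \varphi\} = \varphi^*$. The connective cases use the Stone-style identities above; the modal case uses that in every $EL5^-$-model $f_\square(m)$ equals $f_\top$ or $f_\bot$ according as $m = f_\top$ or not, matching $w \vDash \square\varphi \Leftrightarrow w_\bot \vDash \varphi \Leftrightarrow \gamma(\varphi) \in \{f_\top\}$; and the epistemic case uses the very definition of $E(w)$. Consequently $(\mathcal{M},\gamma) \vDash \varphi$, i.e.\ $\gamma(\varphi) \in \mathit{TRUE} = w_T$, is equivalent to $w_T \in \varphi^*$, i.e.\ $\mathcal{K} \vDash \varphi$.

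It remains to verify the extra frame conditions for each logic in the hierarchy. For $EL5$ the truth condition $f_K(m) \leq f_\neg(f_\neg(m))$ together with the fact that maximal worlds are ultrafilters (so $f_\neg(f_\neg(m)) \in q$ iff $m \in q$) yields $\mathit{Max}(W) \cap R(w) \subseteq \hat m$ whenever $\hat m \in E(w)$. For $E4L5^-$ and $E6L5^-$ the frame conditions translate directly into the algebraic axioms $f_K(m) \leq f_K(f_K(m))$ and $f_K(m) \in \{f_\top, f_\bot\}$, respectively (the latter noted explicitly after Definition \ref{210}). The least routine verification, and the main obstacle, is the $E5L5^-$ clause: the semantic hypothesis ``$\hat m \notin E(w')$ for every $w' \in R(w)$'' must be translated back into the algebraic statement $f_\neg(f_K(m)) \in w$, which uses Lemma \ref{200}(c) together with the standard fact that an element lies in no prime filter extending $w$ iff its negation already lies in $w$; once this translation is secured, the algebraic axiom $f_\neg(f_K(m)) \leq f_K(f_\neg(f_K(m)))$ combined with the identification $\widehat{f_K(m)} \supset \varnothing = \widehat{f_\neg(f_K(m))}$ delivers the required membership in $E(w)$.
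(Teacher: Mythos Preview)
Your proposal is correct and follows essentially the same route as the paper: construct the frame from prime filters with $\hat m$ (the paper's $m^+$), take $P=\{\hat m\mid m\in M\}$, $E(w)=\{\hat m\mid f_K(m)\in w\}$, $g(x)=\widehat{\gamma(x)}$, and then run the Stone-type induction to get $\varphi^*=\widehat{\gamma(\varphi)}$. The only cosmetic difference is that you spell out the filter verification for $E(w)$ via monotonicity of $f_K$ a bit more explicitly, and for the $E5L5^-$ step the translation ``$f_K(m)$ lies in no prime extension of $w$ $\Leftrightarrow$ $f_\neg(f_K(m))\in w$'' is really the special case $m_2=f_\bot$ of Lemma~\ref{200}(d) rather than (c).
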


\begin{proof}
We prove the assertion in detail for the case $\mathcal{L}=E5L5$. The remaining cases then follow straightforwardly. Suppose we are given a $\mathcal{L}$-interpretation $(\mathcal{M},\gamma)$ with ultrafilter $\mathit{TRUE}\subseteq M$ of true propositions and filter $\mathit{BEL}\subseteq\mathit{TRUE}$ of known propositions. Let $W$ be the set of all prime filters of the Heyting algebra reduct of $\mathcal{M}$. For $w,w'\in W$, we define $wRw':\Leftrightarrow w\subseteq w'$. Then $W$ is partially ordered by $R$, $w_T:=\mathit{TRUE}$ is a maximal element and $w_\bot := \{f_\top\}$ is the smallest element, i.e. the `bottom world'. In fact, the Disjunction Property of a model ensures that the smallest filter $\{f_\top\}$ is prime. Recall that the union of a non-empty chain of prime filters is again a prime filter. Thus, every $R$-chain in $W$ has an upper bound in $W$. For $w\in W$ put
\begin{equation*}
\mathit{BEL}(w):=\{m\in M\mid f_K(m)\in w\}.
\end{equation*}
Obviously, $\mathit{BEL}=\mathit{BEL}(\mathit{TRUE})$.
For $m\in M$, we define
\begin{equation*}
m^+:=\{w\in W\mid m\in w\}.
\end{equation*}
The set of propositions of the desired frame is
\begin{equation*}
P:=\{m^+\subseteq W\mid m\in M\},
\end{equation*}
and the set of propositions known at world $w$ is
\begin{equation*}
E(w):=\{m^+\subseteq W\mid m\in\mathit{BEL(w)}\}.
\end{equation*}
In the following, we show that $\mathcal{F}=(W,R,P,E,w_T)$ is an $EL5^-$-frame. We have to check that $\mathcal{F}$ satisfies all conditions of Definition \ref{790}. It is clear that the elements $m^+\in P$ are upper sets under inclusion, i.e. under $R$. Suppose $wRw'$, i.e. $w\subseteq w'$. Then clearly $\mathit{BEL}(w)\subseteq\mathit{BEL}(w')$ and thus $E(w)\subseteq E(w')$. The mapping $m\mapsto m^+$ defines a one-to-one correspondence between the propositions $m\in M$ of the algebraic model and the propositions $m^+\in P$ of the frame.\footnote{Surjectivity is clear. Towards injectivity suppose $m^+=m'^+$, i.e. $m$ and $m'$ are contained in exactly the same prime filters. Item (d) of Lemma \ref{200} then implies, $f_\rightarrow(m,m')=f_\top$ and $f_\rightarrow(m',m)=f_\top$ (recall that $\{f_\top\}$ is the smallest prime filter). But this means that $m\le m'$ and $m'\le m$, i.e. $m=m'$.} By properties of prime filters (see also item (d) of Lemma \ref{200}) and by the definitions, it follows that for all $m,m'\in M$:
\begin{equation*}
\begin{split}
&m^+\cap m'^+=f_\wedge(m,m')^+\\
&m^+\cup m'^+=f_\vee(m,m')^+\\ 
&m^+\supset m'^+=f_\rightarrow(m,m')^+\\
&K(m^+)=\{w\in W\mid m^+\in E(w)\}=f_K(m)^+
\end{split}
\end{equation*}
Of course, $P$ also contains $\varnothing=(f_\bot)^+$ and $W=(f_\top)^+$ and thus satisfies the closure conditions established in Definition \ref{790}. Furthermore, it follows that $(P,\cup,\cap,\supset,\varnothing, W)$ forms, in the obvious way, a Heyting algebra with least and greatest elements $\varnothing$, $W$, respectively. Although not necessary for this proof, we may consider the following additional operations on that Heyting algebra:
\begin{equation*}
f_K^P(m^+) := K(m^+)=\{w\in W\mid m^+\in E(w)\}=f_K(m)^+\\
\end{equation*}
\begin{equation*}
f_\square^P(m^+):=
\begin{cases}
\begin{split}
&W=(f_\top)^+=f_\square(m)^+\text{ if }m=f_\top\\
&\varnothing=(f_\bot)^+=f_\square(m)^+\text{ if }m\neq f_\top
\end{split}
\end{cases}
\end{equation*}
for all $m\in M$, and observe that this results in a structure that is isomorphic to the original $EL5^-$-model. In fact, one easily recognizes that the map $m\mapsto m^+$ is an isomorphism between Heyting algebras. Suppose $m=f_\top$. Since we are dealing with an $EL5^-$-model, we have $f_\square(m)=f_\top$ and thus, by truth conditions (vii) and (ii) of an algebraic model, $f_K(m)=f_\top\in w$, as $w$ is a filter. By definition of $\mathit{BEL}(w)$, $m=f_\top\in \mathit{BEL}(w)$. We have shown that for every $w\in W$, $\mathit{BEL}(w)$ contains the top element $f_\top$ of the underlying Heyting lattice, and $E(w)\neq\varnothing$. Now, in the same way as in the proof of Lemma \ref{212}, with $\mathit{TRUE}$ replaced by $w$, one shows that the sets $\mathit{BEL}(w)$ are filters on $M$. Then it follows that the sets $E(w)$ are filters on $P$. Thus, $\mathcal{F}=(W,R,P,E,w_T)$ is an $EL5^-$-frame. Let us show that $\mathcal{F}$ is an $E5L5^-$-frame according to Definition \ref{800}. Suppose $w\in W$ and $m^+\in E(w)$. Then $m\in \mathit{BEL}(w)$ and $f_K(m)\in w$. Since $\mathcal{M}$ is an $E4L5$-model, we have $f_K(m)\le f_K(f_K(m))$ and thus $f_K(f_K(m))\in w$ ($w$ is a filter). It follows that $f_K(m)^+=K(m^+)=\{w\in W\mid m^+\in E(w)\}\in E(w)$. Now suppose $w\in W$ and for all $w'\in R(w)$, $m^+\notin E(w')$. Then $f_K(m)\notin w'$ for all prime filters $w'$ extending prime filter $w$. That is, $f_\neg(f_K(m))\in w$ (see Lemma \ref{200}). Using the fact that $\mathcal{M}$ is an $E5L5^-$-model, we conclude $f_K(f_\neg(f_K(m)))\in w$. By definition of the sets $\mathit{BEL}(w)$ and $E(w)$, we get $f_\neg(f_K(m))\in\mathit{BEL}(w)$ and thus 
\begin{equation*}
\begin{split}
f_\neg(f_K(m))^+=f_\rightarrow(f_K(m), f_\bot)^+ &=f_K(m)^+\supset (f_\bot)^+\\
&=\{w''\in W\mid m^+\in E(w'')\}\supset\varnothing \in E(w).
\end{split}
\end{equation*} 
Hence, $\mathcal{F}$ is an $E5L5^-$-frame. Moreover, since $\mathcal{M}$ is a model of knowledge, we have $f_K(m)\le f_\neg(f_\neg(m))$ for all $m\in M$. So if $m\in\mathit{BEL}(w)$, then $f_K(m)\in w$ and thus $f_\neg(f_\neg(m))\in w$. That is, $m\in\mathit{BEL}(w)$ implies that $m$ belongs to all maximal worlds, i.e. ultrafilters, accessible from $w$:
\begin{equation*}
m\in\mathit{BEL}(w)\Rightarrow m\in w'\text{ for all }w'\in Max(W)\cap R(w).
\end{equation*}
Hence,
\begin{equation*}
m^+\in E(w)\Rightarrow Max(W)\cap R(w)\subseteq m^+.
\end{equation*}
In particular, $\varnothing\notin E(w)$ and $E(w)$ is a proper filter on $P$, for every $w\in W$. Thus, $\mathcal{F}=(W,R,P,E,w_T)$ is an $E5L5$-frame in the sense of Definition \ref{800}. Now, we define the following assignment $g\colon V\rightarrow P$ in $\mathcal{F}$:
\begin{equation*}
g(x) :=\gamma(x)^+, 
\end{equation*}
for each $x\in V$. Using induction, Remark \ref{915} and the previous results, one shows that 
\begin{equation*}
g(\varphi)=\gamma(\varphi)^+,
\end{equation*}
for all $\varphi\in Fm$. For instance, using the induction hypothesis, we have $\gamma(K\psi)^+=f_K(\gamma(\psi))^+=\{w\in W\mid \gamma(\psi)^+\in E(w)\}=\{w\in W\mid g(\varphi)\in E(w)\}=g(K\psi)$. We consider the relational model $\mathcal{K}=(\mathcal{F},g)$. By Remark \ref{915}, $g(\varphi)=\varphi^*=\gamma(\varphi)^+$. So we have for all $w\in W$ and all $\varphi\in Fm$: 
\begin{equation*}
w\vDash\varphi\Leftrightarrow w\in\varphi^*\Leftrightarrow w\in\gamma(\varphi)^+\Leftrightarrow\gamma(\varphi)\in w.
\end{equation*}

In particular, for the designated maximal world $w_T=\mathit{TRUE}$: 
\begin{equation*}
\mathcal{K}\vDash\varphi\Leftrightarrow w_T\vDash\varphi\Leftrightarrow\gamma(\varphi)\in w_T=\mathit{TRUE}\Leftrightarrow(\mathcal{M},\gamma)\vDash\varphi.
\end{equation*}
We have proved the assertion of the Theorem for the case $\mathcal{L}=E5L5$ and, implicitly, also for the cases $\mathcal{L}\in\{E5L5^-, E4L5^-, E4L5\}$. Finally, let us consider the cases $\mathcal{L}\in\{E6L5^-, E6L5\}$. We suppose that $\mathcal{M}$ is an $E6L5$-model. Applying the above construction, it suffices to show that the resulting function $E\colon W\rightarrow Pow(P)$ is constant, i.e. $E(w)=E(w_\bot)$, for all $w\in W$. By properties of an $E6L5^-$-model, for every $m\in M$, there are exactly two possibilities: either $f_K(m)=f_\top$ or $f_K(m)=f_\bot$, see the remark following Definition \ref{210}. It follows that for any $w\in W$: $m\in\mathit{BEL}(w)\Leftrightarrow f_K(m)\in w\Leftrightarrow f_K(m)=f_\top\Leftrightarrow m\in\mathit{BEL}$. That is, $\mathit{BEL}(w)=\mathit{BEL}$ for all $w\in W$ and function $E$ is constant.
\end{proof}

\begin{corollary}[Completeness w.r.t. relational semantics]\label{1000}
Let $\mathcal{L}$ be $EL5^-$, $EL5$, $EkL5^-$ or $EkL5$, for $k\in\{4,5,6\}$.
Then for any $\Psi\cup\{\chi\}\subseteq Fm$, 
\begin{equation*}
\Psi\Vdash^r_\mathcal{L}\chi \Rightarrow \Psi\vdash_\mathcal{L}\chi.
\end{equation*} 
\end{corollary}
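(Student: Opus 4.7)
The plan is to derive this completeness result directly from the two previously established pillars: strong completeness with respect to algebraic semantics (Theorem \ref{600}) and the model-transfer result from algebraic to relational semantics (Theorem \ref{960}). Since both theorems cover exactly the list of logics $\mathcal{L}$ appearing in the statement, no fresh construction is required; the corollary is essentially a contrapositive argument chaining the two.

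More precisely, I would argue by contrapositive. Suppose $\Psi \nvdash_\mathcal{L} \chi$. By Theorem \ref{600}, there exists an algebraic $\mathcal{L}$-interpretation $(\mathcal{M},\gamma)$ such that $(\mathcal{M},\gamma) \vDash \Psi$ while $(\mathcal{M},\gamma) \nvDash \chi$. Now apply Theorem \ref{960} to $(\mathcal{M},\gamma)$ to obtain a relational model $\mathcal{K}=(\mathcal{F},g)$, based on an $\mathcal{L}$-frame, satisfying
\begin{equation*}
(\mathcal{M},\gamma)\vDash\varphi \;\Leftrightarrow\; \mathcal{K}\vDash\varphi \quad\text{for every }\varphi\in Fm.
\end{equation*}
Applying this equivalence pointwise to each member of $\Psi$ and to $\chi$ yields $\mathcal{K}\vDash\Psi$ but $\mathcal{K}\nvDash\chi$, which witnesses $\Psi \nVdash^r_\mathcal{L} \chi$. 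Contraposition then delivers the desired implication.

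There is essentially no main obstacle, because the delicate work has already been discharged: the frame construction, together with the verification of the $Ek$-frame conditions matching each algebraic class, lives entirely inside Theorem \ref{960}, while the existence of a refuting algebraic model for an underivable consequence is exactly the content of Theorem \ref{600}. The only step that deserves a sentence of care is the observation that Theorem \ref{960} preserves satisfaction of \emph{sets} of formulas, not just of single formulas; but since satisfaction of a set is defined pointwise and the biconditional there is stated uniformly for all $\varphi \in Fm$, this is immediate. Thus the corollary follows in a single short paragraph once the two preceding theorems are cited.
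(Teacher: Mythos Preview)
Your proposal is correct and follows essentially the same route as the paper: a contrapositive argument that combines algebraic completeness (Theorem \ref{600}) with the model-transfer result (Theorem \ref{960}). The only cosmetic difference is that the paper phrases the failure of derivability as consistency of $\Psi\cup\{\neg\chi\}$ and obtains a relational model of that set, whereas you directly obtain a relational model of $\Psi$ not satisfying $\chi$; since truth is evaluated at a maximal world, these formulations are equivalent.
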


\begin{proof}
Suppose $\Psi\nvdash_\mathcal{L}\chi$. By standard arguments, the set $\Psi\cup\{\neg\chi\}$ is consistent in classical logic $\mathcal{L}$. By algebraic completeness, we know that there is some algebraic $\mathcal{L}$-interpretation $(\mathcal{M},\gamma)$ satisfying that set. By Theorem \ref{960}, there is a relational model $\mathcal{K}$ based on a $\mathcal{L}$-frame such that $\mathcal{K}\vDash\Psi\cup\{\neg\chi\}$. This shows $\Psi\nVdash^r_\mathcal{L}\chi$.
\end{proof}

Once more, we point out that the above constructions specialize straightforwardly to soundness and completeness proofs for logic $L5$ w.r.t. relational semantics (considering the modal sublanguage $Fm_1$). Relational semantics for $L5$ is defined exactly as above (Definition \ref{790} and the following definition of the satisfaction relation), though, without the epistemic components, i.e. without function $E$ and operator $K$.

\section{A new relational semantics for $IEL^-$ and $IEL$}

In this section, we work with the pure epistemic sublanguage $Fm_e :=\{\varphi\in Fm\mid$ symbol $\square$ does not occur in $\varphi\}$. As already mentioned above, the intuitionistic epistemic logics $IEL^-$ and $IEL$, introduced by Artemov and Protopopescu \cite{artpro}, can be axiomatized in language $Fm_e$ by the axioms (INT), distribution of belief (KBel) $K(\varphi\rightarrow\psi)\rightarrow (K\varphi\rightarrow K\psi)$, intuitionistic co-reflection $\varphi\rightarrow K\varphi$, and -- only in case of $IEL$ -- intuitionistic reflection (IntRe) $K\varphi\rightarrow\neg\neg\varphi$. The only reference rule is Modus Ponens MP. In \cite{artpro} it is shown that $IEL^-$ and $IEL$ are sound and complete w.r.t. possible worlds semantics based on intuitionistic Kripke models. In this section, we show that these logics are sound and complete w.r.t. relational semantics of the kind presented in the preceding section. More precisely, $IEL^-$ and $IEL$ are complete w.r.t. classes of special $EL5^-$- and $EL5$-frames, respectively, which are now interpreted from the intuitionistic instead of the classical point of view. Consequently, the systems of Intuitionistic Epistemic Logic introduced in \cite{artpro} and the modal logics $EL5^-$ and $EL5$ (and their extensions) can be described within the same semantic framework. 

\begin{definition}\label{1100}
An $IEL^-$-frame (an $IEL$-frame) $\mathcal{F}=(W,R,P,E)$ is defined in exactly the same way as an $EL5^-$-frame (an $EL5$-frame), respectively (see Definition \ref{790}), but without a designated maximal world and with the following additional condition of intuitionistic co-reflection:
\begin{equation*} 
\text{(IntCo) For every }w\in W \text{ and for all propositions }A\in P: w\in A \Rightarrow A\in E(w).
\end{equation*}
\end{definition}

Intuitively, (IntCo) says that whenever a proposition $A$ is true at some world $w$, then $A$ is believed/known at $w$. This is a rather strong condition which, in particular, implies positive and negative introspection, as the next result shows.

\begin{lemma}\label{1120}
Every $IEL^-$-frame ($IEL$-frame) is an $E5L5^-$-frame ($E5L5$-frame), respectively. That is, the axioms of positive and negative introspection, (E4) and (E5), are valid.
\end{lemma}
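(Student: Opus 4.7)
The plan is to show that (IntCo) together with monotonicity of $E$ and reflexivity of $R$ already forces both the $E4L5^{-}$ and $E5L5^{-}$ closure conditions. The key observation is that (IntCo) says a proposition true at $w$ is automatically in $E(w)$, so to verify $A \in E(w)$ for any target proposition $A\in P$, it suffices to show $w \in A$. The verifications for (E4) and (E5) therefore reduce to showing that $w$ itself belongs to the relevant upper set.

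For (E4), suppose $A\in E(w)$ and set $B := \{w'\in W\mid A\in E(w')\}$. First, $B\in P$ by the closure clause on $P$ (this is precisely $KA$). Second, by monotonicity of $E$, whenever $wRw'$ we have $E(w)\subseteq E(w')$, so $A\in E(w')$ and thus $w'\in B$; in particular, since $R$ is reflexive, $w\in B$. Then (IntCo) yields $B\in E(w)$, which is exactly the $E4L5^{-}$-condition.

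For (E5), suppose $A\notin E(w')$ for every $w'\in R(w)$, and set $C := B\supset\varnothing$, where $B=\{w''\mid A\in E(w'')\}$ as before. Closure of $P$ under $\supset$ (together with $\varnothing\in P$) gives $C\in P$. Unfolding the definition of $\supset$, we have $w\in C$ iff for all $v\in R(w)$, $v\notin B$, iff for all $v\in R(w)$, $A\notin E(v)$; and this is exactly our hypothesis. Thus $w\in C$, and (IntCo) gives $C\in E(w)$, establishing the $E5L5^{-}$-condition. The $IEL$ case needs nothing extra, because the clause $Max(W)\cap R(w)\subseteq A$ for $A\in E(w)$ is already built into Definition \ref{1100} for $IEL$-frames by analogy with $EL5$-frames, so the same argument shows every $IEL$-frame is an $E5L5$-frame.

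No step looks genuinely hard: the only thing to be careful about is making sure the auxiliary sets $B$ and $B\supset\varnothing$ live in $P$ (so that (IntCo) may be applied to them), which is covered by the explicit closure conditions on $P$ in Definition \ref{790}. The rest is bookkeeping with monotonicity of $E$ and reflexivity of the partial order $R$.
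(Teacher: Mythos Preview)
Your proposal is correct and follows essentially the same route as the paper's proof: in both cases one observes that the relevant target proposition ($KA$ for (E4), $\neg KA = KA\supset\varnothing$ for (E5)) lies in $P$ and contains $w$, whence (IntCo) immediately places it in $E(w)$. The only cosmetic difference is that for (E4) you argue $w\in B$ via monotonicity of $E$ plus reflexivity of $R$, whereas the paper simply notes $w\in KA$ directly from the hypothesis $A\in E(w)$; your detour is harmless but unnecessary.
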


\begin{proof}
Let $\mathcal{F}=(W,R,P,E)$ be an $IEL^-$-frame. It remains to show that the conditions of an $E5L5^-$-frame of Definition \ref{800} are satisfied. Let $w\in W$ and suppose $A\in E(w)$. Then $w\in KA=\{w'\in W\mid A \in E(w')\}$, i.e. proposition $KA$ is true at $w$. By condition (IntCo), $KA\in E(w)$. Thus, the condition of an $E4L5^-$-frame is satisfied. Now, suppose $A\notin E(w')$ for all $w'\in R(w)$. Then $w\in \neg KA=(\{w''\in W\mid A \in E(w'')\}\supset\varnothing)$. By condition (IntCo), $\neg KA\in E(w)$. Thus, the condition of an $E5L5^-$frame is satisfied.
\end{proof}

As before, an assignment (or valuation) in an $IEL^-$-frame $\mathcal{F}=(W,R,P,E)$ is a function $g\colon V\rightarrow P$. A relational $IEL^-$-model ($IEL$-model) is a tuple $\mathcal{K}=(\mathcal{F},g)$ where $\mathcal{F}$ is an $IEL^-$-frame ($IEL$-frame), respectively, and $g$ is a corresponding assignment. Also the relation of satisfaction $w\vDash\varphi$ between worlds $w\in W$ and formulas $\varphi\in Fm_e$ is defined as before, though without the clause regarding the $\square$-operator.\\

Of course, the concept of intuitionistic truth in a frame-based model should differ from the concept of classical truth in such a model. Instead of a designated maximal world, we now define truth in a frame relative to the bottom world.

\begin{definition}\label{1200}
Let $\mathcal{F}$ be an $IEL^-$-frame with bottom world $w_\bot$ and let $g\colon V\rightarrow P$ be an assinment. The notion of ``formula $\varphi\in Fm_e$ is true in model $(\mathcal{F},g)$" is defined as follows:
\begin{equation*}
(\mathcal{F},g)\vDash\varphi :\Leftrightarrow w_\bot\vDash\varphi.
\end{equation*}
We say that $\mathcal{K}=(\mathcal{F},g)$ is a (relational) $IEL^-$-model of $\varphi$ if $\varphi$ is true in $\mathcal{K}$. 
\end{definition}

\begin{theorem}[Soundness]\label{1200}
Every theorem of $IEL^-$ is true in all relational $IEL^-$-models, and every theorem of $IEL$ is true in all relational $IEL$-models.
\end{theorem}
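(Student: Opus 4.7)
The plan is straightforward induction on the length of a derivation in $IEL^-$ (resp.\ $IEL$), where the base case amounts to verifying the axioms and the only rule to handle is Modus Ponens. The MP step is immediate: the partial order $R$ is reflexive, hence $w_\bot R w_\bot$, so $w_\bot\vDash\varphi$ and $w_\bot\vDash\varphi\rightarrow\psi$ force $w_\bot\vDash\psi$. Moreover, since $w_\bot$ is the $R$-smallest world and propositions are upper sets, truth at $w_\bot$ propagates to every $w\in W$; thus it suffices to show that each axiom is true at \emph{every} world of every frame in the corresponding class.

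For scheme (INT), note that on dropping the epistemic operator an $IEL^-$-frame is just an intuitionistic general frame: at each world the clauses for $\vee,\wedge,\rightarrow,\bot$ reproduce the Kripke clauses of IPC, the set $P$ is closed under the corresponding operations (Definition \ref{790}), and Remark \ref{915} delivers the usual monotonicity. Hence every theorem of IPC, and every substitution instance thereof, holds at every world. For (KBel) I argue as in the proof of Theorem \ref{940}: if $w'\in R(w)$ with $w'\vDash K(\varphi\rightarrow\psi)$ and $w'\vDash K\varphi$, then $(\varphi\rightarrow\psi)^*,\,\varphi^*\in E(w')$; the filter property of $E(w')$ applied to the inclusion $(\varphi\rightarrow\psi)^*\cap\varphi^*\subseteq\psi^*$ yields $\psi^*\in E(w')$. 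For intuitionistic co-reflection $\varphi\rightarrow K\varphi$, condition (IntCo) does the work directly: if $w'\in R(w)$ satisfies $w'\in\varphi^*$, then $\varphi^*\in E(w')$, i.e.\ $w'\vDash K\varphi$.

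For the additional scheme (IntRe) of $IEL$, let $w\vDash K\varphi$, so that $\varphi^*\in E(w)$, and invoke the $IEL$-frame clause $Max(W)\cap R(w)\subseteq\varphi^*$. By the upper-bound condition on $R$-chains built into Definition \ref{790}, Zorn's Lemma supplies for every $w'\in R(w)$ a maximal extension $w''\in Max(W)\cap R(w)$; such a $w''$ lies in $\varphi^*$, whence $w''\vDash\varphi$ and consequently $w'\nvDash\neg\varphi$. Therefore $w\vDash\neg\neg\varphi$.

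No step is a real obstacle: the entire verification is a reading-off of the clauses, and it is essentially the intuitionistic counterpart of the relevant cases already treated in Theorem \ref{940}, with classical truth at the designated maximal world $w_T$ replaced by intuitionistic truth at the bottom $w_\bot$. The only mildly subtle point is that the proof of (IntRe) rests on the maximal-extension property guaranteed by Zorn's Lemma in the frame definition, rather than on monotonicity of $E$.
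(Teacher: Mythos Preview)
Your proposal is correct and follows essentially the same route as the paper: verify each axiom scheme at every world (using the intuitionistic Kripke clauses for (INT), condition (IntCo) for co-reflection, and deferring (KBel) and (IntRe) to the corresponding cases in Theorem~\ref{940}), then close under MP. Your treatment is in fact more explicit than the paper's, which handles (IntRe) and (KBel) by a bare reference to Theorem~\ref{940}; in particular, your invocation of Zorn's Lemma to obtain maximal extensions is exactly what underlies the paper's one-line argument for (IntRe) there.
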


\begin{proof}
We consider logic $IEL$. Theorems of IPC and their substitution-instances are true in relational models because such models are based on intuitionistic Kripke frames. Let us show that intuitionistic co-reflection $\varphi\rightarrow K\varphi$ is valid. Suppose we are given a relational model based on an $IEL$-frame and $w\vDash\varphi$, for some world $w\in W$. Then $w\in\varphi^*$ and the semantic condition (IntCo) of an $IEL$-frame yields $\varphi^*\in E(w)$, i.e. $w\vDash K\varphi$.
Validity of intuitionistic reflection (IntRe) and distribution of knowledge (KBel) is shown in exactly the same way as in the proof of Theorem \ref{940}. 
\end{proof}

Towards completeness, we follow a similar strategy as before. That is, we reduce completeness w.r.t. relational semantics to completeness w.r.t. algebraic semantics. We proved in [Theorem 5.3, \cite{lewigpl}] that $IEL^-$ and $IEL$ are sound and complete w.r.t. corresponding algebraic semantics. For convenience, we quote here the definition of that algebraic semantics from \cite{lewigpl}: 

\begin{definition}\label{1220}\cite{lewigpl}
An algebraic $IEL^-$-model is a Heyting algebra 
\begin{equation*}
\mathcal{M}=(M, \mathit{BEL}, f_\bot, f_\top, f_\vee, f_\wedge, f_\rightarrow, f_K)
\end{equation*}
with propositional universe $M$, a set $\mathit{BEL}\subseteq M$ of believed propositions and an additional unary operation $f_K$ such that for all propositions $m,m'\in M$ the following truth conditions hold:
\begin{enumerate}
\item $f_\top\in\mathit{BEL}$
\item $f_K(m)=f_\top\Leftrightarrow m\in\mathit{BEL}$
\item $m\le f_K(m)$
\item $f_K(f_\rightarrow(m,m'))\le f_\rightarrow(f_K(m),f_K(m'))$
\item $f_\vee(m,m')=f_\top$ $\Rightarrow$ ($m=f_\top$ or $m'=f_\top$)
\end{enumerate}
If additionally $f_K(m)\le f_\neg(f_\neg(m))$ holds for all $m\in M$, then we call $\mathcal{M}$ an $IEL$-model and $\mathit{BEL}$ is the set of known propositions.
\end{definition}

The notion of an assignment $\gamma\colon V\rightarrow M$ in an $IEL^-$-model is given as usual. We refer to a tuple $(\mathcal{M},\gamma)$ as an $IEL^-$-interpretation ($IEL$-interpretation) if $\mathcal{M}$ is an algebraic $IEL^-$-model ($IEL$-model), respectively, and $\gamma$ is a corresponding assignment. Satisfaction (truth) of a formula $\varphi\in Fm_e$ in an $IEL^-$-interpretation $(\mathcal{M},\gamma)$ is defined as follows:
\begin{equation*}
(\mathcal{M},\gamma)\vDash\varphi :\Leftrightarrow\gamma(\varphi)=f_\top.
\end{equation*}

We quote the soundness and completeness results (in weak form) from \cite{lewigpl}:

\begin{theorem}[\cite{lewigpl}]\label{1230}
Let $\varphi\in Fm_e$. Then $\varphi$ is a theorem of $IEL^-$ (of $IEL$) iff $\varphi$ is true in all algebraic $IEL^-$-interpretations (in all algebraic $IEL$-interpretations), respectively.
\end{theorem}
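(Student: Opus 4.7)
The plan is to handle soundness by direct verification of the axiom schemes and completeness by a Lindenbaum--Tarski construction producing a canonical algebraic countermodel for every non-theorem.

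For soundness, fix an algebraic interpretation $(\mathcal{M},\gamma)$ and use the standard equivalence $a\le b\Leftrightarrow f_\rightarrow(a,b)=f_\top$ valid in every Heyting algebra. Each (INT)-axiom evaluates to $f_\top$ because the Heyting reduct validates IPC; the scheme (KBel) $K(\varphi\rightarrow\psi)\rightarrow(K\varphi\rightarrow K\psi)$ reduces to truth condition (iv) of Definition \ref{1220}; intuitionistic co-reflection $\varphi\rightarrow K\varphi$ is exactly truth condition (iii); and, in the $IEL$ case, intuitionistic reflection $K\varphi\rightarrow\neg\neg\varphi$ is the extra constraint $f_K(m)\le f_\neg(f_\neg(m))$. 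Modus Ponens preserves the value $f_\top$ because in any Heyting algebra $a=f_\top$ and $f_\rightarrow(a,b)=f_\top$ force $b=f_\top$.

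For completeness, I would first establish a congruence property for $K$: if $\vdash\psi\leftrightarrow\chi$, then $\vdash K\psi\leftrightarrow K\chi$. This follows by instantiating (IntCo) at $\psi\rightarrow\chi$ and $\chi\rightarrow\psi$ to obtain $\vdash K(\psi\rightarrow\chi)$ and $\vdash K(\chi\rightarrow\psi)$, then applying (KBel) and MP. Next, quotient $Fm_e$ by provable equivalence $\varphi\sim\psi\Leftrightarrow{\vdash\varphi\leftrightarrow\psi}$ to obtain the Lindenbaum algebra $\mathcal{M}$ with the canonical Heyting operations, $f_K([\psi]):=[K\psi]$ (well-defined by the congruence), and $\mathit{BEL}:=\{[\psi]\mid{\vdash K\psi}\}$. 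Taking $\gamma(x):=[x]$ yields $\gamma(\varphi)=[\varphi]$, and $[\varphi]=f_\top$ iff $\vdash\varphi$, so every non-theorem immediately witnesses the desired countermodel. Verification of the truth conditions is then direct: (i) $f_\top\in\mathit{BEL}$ from $\vdash\top$ and (IntCo); (ii) is immediate; (iii) is (IntCo); (iv) is (KBel). For $IEL$, the extra constraint $f_K(m)\le f_\neg(f_\neg(m))$ is exactly (IntRe).

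The main obstacle is truth condition (v), the \emph{Disjunction Property} for $\mathcal{M}$, which reduces to the metatheoretic statement that $\vdash_{IEL^-}\psi\vee\chi$ implies $\vdash_{IEL^-}\psi$ or $\vdash_{IEL^-}\chi$ (and analogously for $IEL$). This cannot be read off the axioms directly. The cleanest route is to invoke the Kripke-model completeness of $IEL^-$ and $IEL$ established in \cite{artpro}: if neither $\psi$ nor $\chi$ were derivable, separate intuitionistic epistemic Kripke models refuting each could be glued under a fresh root whose forcing relation refutes $\psi\vee\chi$, contradicting derivability. Alternatively, a syntactic proof via a cut-elimination or normal-form argument on the epistemic sequent calculus for $IEL^-$/$IEL$ would work, the point being that the epistemic axioms (KBel), (IntCo) and (IntRe) do not disturb the intuitionistic disjunction property. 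Once (v) is secured, the remaining verifications are routine bookkeeping and the theorem follows.
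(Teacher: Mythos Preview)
The paper does not actually prove this theorem: it is explicitly quoted from \cite{lewigpl} (``We quote the soundness and completeness results (in weak form) from \cite{lewigpl}''), so there is no in-paper argument to compare against.

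Your proposal is the standard Lindenbaum--Tarski route and is essentially correct. Soundness is routine, and for completeness the quotient construction together with the congruence property of $K$ (which you derive correctly from (IntCo) and (KBel)) yields a canonical model in which $\gamma(\varphi)=f_\top$ iff $\vdash\varphi$. You are also right that the only non-trivial point is truth condition (v), the Disjunction Property of the Lindenbaum algebra, which amounts to the metatheoretic disjunction property of $IEL^-$ and $IEL$. Two remarks: first, rather than reproving it by gluing Kripke models under a fresh root, you may simply cite \cite{artpro}, where the disjunction property for $IEL^-$ and $IEL$ is established directly. Second, if you do carry out the gluing yourself, be explicit about how the epistemic component at the new root is defined so that the $IEL^-$ (resp.\ $IEL$) frame conditions---in particular those validating (IntCo) and, for $IEL$, (IntRe)---are preserved; this is routine but should not be left implicit.
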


The next result is an analogue to Theorem \ref{960} above.

\begin{theorem}\label{1240}
Let $\mathcal{M}$ be an algebraic $IEL$-model and let $\gamma\in M^V$ be an assignment. There is a relational $IEL$-model $\mathcal{K}=(\mathcal{F},g)$ such that for all $\varphi\in Fm_e$:
\begin{equation*}
(\mathcal{M},\gamma)\vDash\varphi\Leftrightarrow \mathcal{K}\vDash\varphi.
\end{equation*}
\end{theorem}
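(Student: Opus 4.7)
My plan is to mimic the construction from the proof of Theorem \ref{960}, with the crucial change that truth in the resulting relational model is now evaluated at the bottom world $w_\bot$ (which corresponds algebraically to the top element $f_\top$) rather than at a designated maximal world. Concretely, I would let $W$ be the set of prime filters of the Heyting reduct of $\mathcal{M}$, order them by inclusion $R{=}{\subseteq}$, and set $w_\bot := \{f_\top\}$, which is prime by the Disjunction Property (condition (v) of Definition \ref{1220}) and is the $R$-smallest element. As before, the union of an $R$-chain of prime filters is prime, so every chain has an upper bound. I would then define, for $w\in W$ and $m\in M$, the sets
\begin{equation*}
\mathit{BEL}(w):=\{m\in M\mid f_K(m)\in w\},\quad m^+:=\{w\in W\mid m\in w\},
\end{equation*}
and put $P:=\{m^+\mid m\in M\}$ and $E(w):=\{m^+\mid m\in\mathit{BEL}(w)\}$. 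The verification that $\mathcal{F}=(W,R,P,E)$ has the basic structural properties of Definition \ref{790} (closure of $P$ under the operations, $E(w)$ a filter on $P$, monotonicity of $E$) is exactly as in the proof of Theorem \ref{960}, using the bijection $m\mapsto m^+$ between $M$ and $P$ that comes from Lemma \ref{200}(d).

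The new content is checking the two specifically intuitionistic-epistemic clauses. For condition (IntCo) of Definition \ref{1100}, suppose $w\in m^+\in P$; then $m\in w$, and by truth condition (iii) of Definition \ref{1220} we have $m\le f_K(m)$, so $f_K(m)\in w$ because $w$ is a filter. Hence $m\in\mathit{BEL}(w)$ and $m^+\in E(w)$, which is precisely (IntCo). For the maximal-world condition required of an $IEL$-frame, suppose $m^+\in E(w)$, so $f_K(m)\in w$; since $\mathcal{M}$ is an $IEL$-model, $f_K(m)\le f_\neg f_\neg(m)$, whence $f_\neg f_\neg(m)\in w'$ for every prime filter $w'\supseteq w$. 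For $w'$ an ultrafilter (i.e.\ a maximal world), Lemma \ref{200}(b) gives $m\in w'$, that is, $w'\in m^+$. Thus $Max(W)\cap R(w)\subseteq m^+$ and in particular $\varnothing\notin E(w)$, so $E(w)$ is a proper filter. Together with the previously checked properties, $\mathcal{F}$ is an $IEL$-frame.

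Finally, I would set the assignment $g(x):=\gamma(x)^+$ and prove by induction on $\varphi\in Fm_e$ that $g(\varphi)=\gamma(\varphi)^+$; the propositional cases reduce to the lattice identities $m^+\cap m'^+=f_\wedge(m,m')^+$, $m^+\cup m'^+=f_\vee(m,m')^+$, $m^+\supset m'^+=f_\rightarrow(m,m')^+$ (from Lemma \ref{200}(d)), while the $K$-case is handled exactly as in the proof of Theorem \ref{960}. Combining this with Remark \ref{915}, I obtain for every world $w$ and every $\varphi\in Fm_e$ that $w\vDash\varphi\Leftrightarrow\gamma(\varphi)\in w$, and specializing to $w_\bot=\{f_\top\}$ yields
\begin{equation*}
\mathcal{K}\vDash\varphi\;\Leftrightarrow\;w_\bot\vDash\varphi\;\Leftrightarrow\;\gamma(\varphi)\in\{f_\top\}\;\Leftrightarrow\;\gamma(\varphi)=f_\top\;\Leftrightarrow\;(\mathcal{M},\gamma)\vDash\varphi,
\end{equation*}
as required. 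The $IEL^-$ version (which I would mention as a remark) is obtained by dropping the use of $f_K(m)\le f_\neg f_\neg(m)$ and the maximal-world condition. The only real obstacle is checking (IntCo); once one recognizes that it is the algebraic axiom $m\le f_K(m)$ that drives it, everything else is a routine adaptation of the construction from Theorem \ref{960}.
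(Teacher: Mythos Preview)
Your proposal is correct and follows essentially the same construction as the paper's proof: both build the frame from prime filters of $\mathcal{M}$ with $w_\bot=\{f_\top\}$, define $P$, $E$, and $g$ exactly as in Theorem \ref{960}, verify (IntCo) via condition (iii) of Definition \ref{1220}, and then specialize the equivalence $w\vDash\varphi\Leftrightarrow\gamma(\varphi)\in w$ to the bottom world. If anything, you are slightly more explicit than the paper in spelling out the maximal-world condition for the $IEL$-frame via $f_K(m)\le f_\neg f_\neg(m)$ and Lemma \ref{200}(b).
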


\begin{proof}
Let $\mathcal{M}$ be an algebraic $IEL$-model with set $\mathit{BEL}$ of believed propositions, and let $\gamma\in M^V$ be an assignment in $\mathcal{M}$. The construction of an $IEL$-frame $\mathcal{F}$ from the given algebraic $IEL$-model works nearly in the same way as in the proof of Theorem \ref{960}, where an $EL5$-frame is constructed from a given algebraic $EL5$-model. The role of the designated `maximal world' $w_T=\mathit{TRUE}$ now is played by the `bottom world' $w_\bot=\{f_\top\}$. Also note that $\mathit{BEL}=\mathit{BEL}(w_\bot)$. The frame $\mathcal{F}=(W,R,P,E)$ then is given in exactly the same way as in the proof of Theorem \ref{960}, but without designated maximal world $w_T$. From the definition of an algebraic $IEL$-model (Definition \ref{1220}) it follows straightforwardly that the sets $\mathit{BEL}(w)$ and $E(w)$ are filters. All the remaining conditions of an $EL5$-frame are checked as in the proof of Theorem \ref{960} (we may skip the part of the proof where the additional conditions of an $E5L5^-$-frame are verified). Then, as before, we arrive at the following conclusions. For all $w\in W$ and all $\varphi\in Fm_e$: 
\begin{equation*}
w\vDash\varphi\Leftrightarrow w\in\varphi^*\Leftrightarrow w\in\gamma(\varphi)^+\Leftrightarrow\gamma(\varphi)\in w.
\end{equation*}
In particular, for the bottom world $w_\bot$, assignment $g(x) := \gamma(x)^+$ and relational $IEL$-model $\mathcal{K}=(\mathcal{F},g)$:
\begin{equation*}
\mathcal{K}\vDash\varphi\Leftrightarrow w_\bot\vDash\varphi\Leftrightarrow\gamma(\varphi)\in w_\bot=\{f_\top\}\Leftrightarrow\gamma(\varphi)= f_\top\Leftrightarrow(\mathcal{M},\gamma)\vDash\varphi.
\end{equation*}
It remains to show that the $EL5$-frame $\mathcal{F}$ satisfies the additional condition (IntCo) of an $IEL$-frame. Let $w\in W$ be a prime filter of the algebraic model and let $m^+\in P$ be a proposition such that $w\in m^+$, i.e. $m\in w$. By truth condition (iii) of an algebraic model (see Definition \ref{1220}), it follows that $f_K(m)\in w$, since $w$ is a filter. Then, by the definitions, $m\in \mathit{BEL}(w)$ and $m^+\in E(w)$. We have shown that (IntCo) holds.
\end{proof}

It is clear that the assertion of Theorem \ref{1240} remains true if we replace $IEL$ with $IEL^-$. Finally, we obtain (weak) soundness and completeness of the intuitionistic epistemic systems of \cite{artpro} w.r.t. our relational semantics.

\begin{corollary}[Completeness of $IEL^-$ and $IEL$ w.r.t. relational semantics]\label{1250}
A formula $\varphi\in Fm_e$ is a theorem of $IEL^-$ (of $IEL$) iff $\varphi$ is true in all relational $IEL^-$-models ($IEL$-models), respectively. 
\end{corollary}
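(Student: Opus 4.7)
The plan is to derive Corollary \ref{1250} by combining the soundness result already established (Theorem \ref{1200}) with a reduction of relational completeness to algebraic completeness, using the algebra-to-frame transfer given in Theorem \ref{1240} and Theorem \ref{1230}. In other words, I view this corollary as essentially a bookkeeping consequence of what was proved in the previous results, so the work is to assemble the pieces carefully rather than to prove anything substantially new.

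For the ``only if'' direction in both cases, I would simply invoke Theorem \ref{1200}: every theorem of $IEL^-$ is true in every relational $IEL^-$-model, and likewise for $IEL$. Nothing else is needed here.

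For the ``if'' direction, I would proceed by contraposition. Suppose $\varphi\in Fm_e$ is not a theorem of $IEL$. Then by the algebraic completeness result Theorem \ref{1230} there exist an algebraic $IEL$-model $\mathcal{M}$ and an assignment $\gamma\in M^V$ with $(\mathcal{M},\gamma)\nvDash\varphi$, i.e. $\gamma(\varphi)\neq f_\top$. Applying Theorem \ref{1240} to this pair $(\mathcal{M},\gamma)$, I obtain a relational $IEL$-model $\mathcal{K}=(\mathcal{F},g)$ such that $(\mathcal{M},\gamma)\vDash\psi \Leftrightarrow \mathcal{K}\vDash\psi$ for all $\psi\in Fm_e$; taking $\psi=\varphi$ gives $\mathcal{K}\nvDash\varphi$, so $\varphi$ fails in some relational $IEL$-model. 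For $IEL^-$, the argument is identical, using the variant of Theorem \ref{1240} for $IEL^-$ noted in the paragraph immediately following its proof (the same construction works after dropping the verification of condition \eqref{50} on $\mathit{BEL}$).

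There is no serious obstacle here: all the technical content, in particular the construction of the canonical frame from a Heyting algebra with the operator $f_K$ and the verification of condition (IntCo), has already been carried out in the proof of Theorem \ref{1240}. The only small point to double-check is that the variant for $IEL^-$ genuinely does not need the clause $f_K(m)\le f_\neg(f_\neg(m))$ at any stage, so that the same algebra-to-frame construction yields an $IEL^-$-frame (rather than an $IEL$-frame) when $\mathcal{M}$ is only an $IEL^-$-model; this is immediate from Definition \ref{1100} and Definition \ref{1220}.
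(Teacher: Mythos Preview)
Your proposal is correct and follows essentially the same approach as the paper: the paper's proof simply cites Theorem \ref{1200} for soundness and Theorems \ref{1230} and \ref{1240} for completeness, and you have spelled out explicitly the contraposition argument that assembles these ingredients.
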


\begin{proof}
Soundness is shown in Theorem \ref{1200}. Completeness follows from the Theorems \ref{1230} and \ref{1240}.
\end{proof}

\section{Final remarks}

We have further investigated a hierarchy of classical modal logics, originally presented in \cite{lewigpl}, for the reasoning about intuitionistic truth (proof), belief and knowledge. The axioms of the S5-style logics of our hierarchy are validated by an extended constructive BHK interpretation. Moreover, we proved soundness and completeness of those S5-style logics w.r.t. a relational semantics based on intuitionistic general frames. These results confirm our modal and epistemic axioms as adequate principles for the reasoning about proof, belief and knowledge. We have seen that the framework of relational semantics can also be used to describe the intuitionistic epistemic logics introduced by Artemov and Protopopescu \cite{artpro}. The precise relationship between our classical S5-style systems and the intuitionistic epistemic logics of \cite{artpro} becomes now explicit within that uniform semantic framework. The verification-based approach to intuitionistic belief and knowledge of \cite{artpro} turns out to be a special case of the justification-based view on belief and knowledge proposed in the present paper. From the epistemic point of view, the essential difference consists in the axiom of intuitionistic co-reflection $\varphi\rightarrow K\varphi$ (besides the fact that \textit{tertium non datur} holds in our modal logics but not in IEL). That axiom corresponds to the semantic condition (IntCo): if $w\in A$, then $A\in E(w)$, i.e. if proposition $A$ is true at world $w$, then $A$ is known at $w$. This condition is not satisfied in general but must be imposed as an additional semantic constraint on our frames. On the other hand, our modal version of co-reflection (CoRe), $\square\varphi\rightarrow\square K\varphi$, is validated by our frame-based semantics without any further assumptions. Indeed, axiom (CoRe) corresponds to the semantic condition 
\begin{equation*}
w_\bot\vDash\varphi\Rightarrow w_\bot\vDash K\varphi,
\end{equation*}
where $w_\bot$ is the bottom world. This is a property of all frames, warranted by the definitions (in fact, $w_\bot\vDash\varphi$ implies $\varphi^*= W\in E(w)$, for any world $w$, by definition of a frame). On the other hand, validity of intuitionistic co-reflection $\varphi\rightarrow K\varphi$ is equivalent to the following stronger condition: 
\begin{equation*}
\text{for all } w\in W, w\vDash\varphi\text{ implies }w\vDash K\varphi.
\end{equation*}
This is not a general property of our frames. It must be forced by the additional condition (IntCo). That is, original intuitionistic co-reflection $\varphi\rightarrow K\varphi$ is strictly stronger than its modal version $\square\varphi\rightarrow\square K\varphi$ (of course, both are interpreted intuitionistically in the sense that satisfaction at the bottom world $w_\bot$ is considered). \\

Recall that the axioms (E4), (E5), (PNB) and (NNB) are not validated by the proposed extended BHK semantics. Instead, they are considered as additional stronger epistemic principles. For instance, (PNB) along with (A2) and rule (AN) implies $K\varphi\equiv\square K\varphi$ which means that these formulas can be replaced by each other in every context. By Theorem \ref{130} (vii), $\square(\square K\varphi\vee\neg \square K\varphi)$ is a theorem of $EL5^-$. By replacements according to $K\varphi\equiv\square K\varphi$, we then obtain $\square(K\varphi\vee\neg K\varphi)$, i.e. $K\varphi\vee\neg K\varphi$ holds intuitionistically under the assumption of (PNB). \\

We finish our investigation by presenting an analogue to \eqref{0} (see the introductory section), i.e. to the main result of \cite{lewjlc2} where it is shown that the map $\varphi\mapsto\square\varphi$ embeds IPC into classical modal logic $L$. Actually, the proof of [Theorem 5.1, \cite{lewjlc2}] works the same way with $L5$ instead of $L$. Thus, we may formulate that result in the following way. For any set $\Phi\cup\{\varphi\}\subseteq Fm_0$:
\begin{equation}\label{1500}
\Phi\vdash_{IPC}\varphi\Leftrightarrow \square\Phi\vdash_{L5}\square\varphi.
\end{equation}

We consider in the following the logic $EL5^* := L5$+(IntCo)+(IntRe)+(KBel) which results from $L5$ by extending the modal language $Fm_1$ to the full language $Fm$ and by adding the axiom schemes $\varphi\rightarrow K\varphi$, $K\varphi\rightarrow\neg\neg\varphi$ and $K(\varphi\rightarrow\psi)\rightarrow (K\varphi\rightarrow K\psi)$. Of course, rule AN then applies also to those epistemic axioms. One recognizes that $EL5^*$ results from $EL5$ by replacing (CoRe) $\square\varphi\rightarrow \square K\varphi$ with the stronger scheme (IntCo) $\varphi\rightarrow K\varphi$. By rule AN along with distribution and Lemma \ref{80}, $\square(\square\varphi\rightarrow \square K\varphi)$ is a theorem of $EL5^*$. Thus, $EL5^*$ is strictly stronger than $EL5$. Moreover, by (IntCo) + (IntRe) and application of (TND),
\begin{equation*}
\text{(TB)     } K\varphi\leftrightarrow \varphi
\end{equation*}
is a theorem scheme of $EL5^*$. That is, the epistemic operator $K$ becomes a truth predicate of the object language, (TB) is an analogue to the Tarski biconditionals (T-scheme) of Tarski's truth theory. Considering the original soundness and completeness proofs of $EL5$, one easily checks that logic $EL5^*$ is sound and complete w.r.t. to the class of those algebraic $EL5$-models which satisfy the additional semantic condition: $m\le f_K(m)$, for all propositions $m$. In such a model, the set of known propositions coincides precisely with the set of classically true propositions, i.e. the set of facts: $\mathit{BEL}=\mathit{TRUE}$. One also verifies that $EL5^*$ is sound and complete w.r.t. relational semantics given by the class of those $EL5$-frames which satisfy the additional semantic condition (IntCo) of our $IEL$-frames introduced in the previous section (recall that condition (IntCo) corresponds to intuitionistic co-reflection, $\varphi\rightarrow K\varphi$). 

Note that $K\varphi\equiv \varphi$ is not a theorem of $EL5^*$, i.e. $K\varphi$ and $\varphi$ generally denote different propositions. In fact, in every model we have: $m\le f_K(m)\le f_\neg(f_\neg(m))$, for all propositions $m$, and there are many Heyting algebras where $f_K$ can be defined in such a way that those inequalities are strict. This means that knowledge and classical truth are equivalent (viewed as predicates, they have the same \textit{extensions}) although the formulas $K\varphi$ and $\varphi$ have, in general, different \textit{intensions, meanings}. The discussion on self-referential propositions of section 3 now can be applied to equations in $EL5^*$ involving operator $K$ as a total truth predicate for classical truth. For instance, the equation $x\equiv K x$ defines a truth-teller: any proposition that satisfies that equation says ``I am (classically) true". There are true and false truth-tellers, the equation is satisfiable in several ways. The liar proposition is claimed by $x\equiv \neg Kx$. Whenever that equation holds in some model, the proposition denoted by $x$ says ``I am not (classically) true", i.e. ``This proposition is (classically) false". Of course, no model satisfies that equation as one easily verifies by the truth conditions of a model. The liar can be stated by an equation, though the liar proposition as a semantic object does not exist. In this sense, we have a solution to the liar paradox. L\"ob's paradox can be viewed as a contingent liar and is expressed by the equation $x\equiv (K x\rightarrow\varphi)$, where $\varphi$ is any formula. If that equation is satisfied, then the proposition denoted by $x$ says ``If this proposition is true, then proposition $\varphi$ holds". The equation is satisfiable in models that satisfy $\varphi$. If $\varphi$ is false, then the equation represents an antinomy as one easily checks. These examples illustrate that the logic is able to deal with semantic antinomies by means of equations. Neither the liar paradox nor contingent liars, such as L\"ob's paradox, give rise to inconsistencies. This solution to semantic paradoxes is possible because we are working with a non-Fregean logic and strictly distinguish between formulas as syntactic objects and propositions as their meaning. Self-reference is expressed by equations on the syntactic level. Propositions (semantic objects) satisfying such equations are self-referential. Paradoxical self-referential propositions cannot exist since the corresponding self-referential equations are unsatisfiable.\footnote{This approach to self-reference was presented in \cite{str}; see also \cite{lewsl0, lewigpl0, zei} for further information.} Tarski's truth theory and many subsequent approaches do not consider such a distinction between sentences and propositions. If the language is sufficiently strong, a total truth predicate of the object language then leads to the construction of the paradoxical liar sentence and thus to the inconsistency of the underlying system.\\

\textit{In the classical extension $EL5^*$ of $IEL$, knowledge operator $K$ becomes a total truth predicate of the object language. That is, the Tarski Biconditionals (TB), formulated in the object language, are valid. Furthermore, $IEL$ corresponds to $EL5^*$ in a similar way as IPC corresponds to $L5$, i.e. the following holds:}

\begin{theorem}\label{2000}
For any set $\Phi\cup\{\varphi\}\subseteq Fm_e$ of propositional epistemic formulas:
\begin{equation*}
\Phi\vdash_{IEL}\varphi\Leftrightarrow \square\Phi\vdash_{{EL5}^*}\square\varphi.
\end{equation*}
In particular, the embedding $\varphi\mapsto\square\varphi$ of IPC into $L5$ (into $L$) extends to an embedding of $\mathit{IEL}$ into $EL5^*$. Thus, $EL5^*$ contains a copy of $\mathit{IEL}$ in the form of $\{\square\varphi\mid \varphi\in Fm_e$ is a theorem of $IEL\}$.
\end{theorem}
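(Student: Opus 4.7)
The plan is to prove both directions separately, with the left-to-right direction going by a routine induction and the right-to-left direction routed through the relational semantics of the previous section.

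For the direction $\Phi\vdash_{IEL}\varphi \Rightarrow \square\Phi\vdash_{EL5^*}\square\varphi$, I will induct on the length of an $IEL$-derivation of $\varphi$ from $\Phi$. In the base case, if $\varphi\in\Phi$ then $\square\varphi\in\square\Phi$ trivially; and if $\varphi$ is an $IEL$-axiom, i.e., an IPC-theorem or an instance of (KBel), (IntCo), (IntRe), then $\varphi$ is already an axiom of $EL5^*$, so rule AN yields $\vdash_{EL5^*}\square\varphi$. For the inductive step (application of MP in $IEL$), suppose $\psi$ and $\psi\rightarrow\varphi$ have been derived; by the induction hypothesis, $\square\Phi\vdash_{EL5^*}\square\psi$ and $\square\Phi\vdash_{EL5^*}\square(\psi\rightarrow\varphi)$. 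Combining (A3) and (A2) by transitivity of implication yields the distribution scheme $\square(\psi\rightarrow\varphi)\rightarrow(\square\psi\rightarrow\square\varphi)$ inside $EL5^*$, so two applications of MP deliver $\square\Phi\vdash_{EL5^*}\square\varphi$.

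For the converse $\square\Phi\vdash_{EL5^*}\square\varphi \Rightarrow \Phi\vdash_{IEL}\varphi$, I argue contrapositively using the relational semantics. Assume $\Phi\nvdash_{IEL}\varphi$. By the completeness of $IEL$ w.r.t. relational semantics (Corollary \ref{1250}), there exists an $IEL$-model $(\mathcal{F},g)$ with $\mathcal{F}=(W,R,P,E)$ such that $w_\bot\vDash\psi$ for every $\psi\in\Phi$ while $w_\bot\nvDash\varphi$. Since every $R$-chain in $\mathcal{F}$ has an upper bound, Zorn's lemma produces an $R$-maximal world $w_T\in W$; let $\mathcal{F}'=(W,R,P,E,w_T)$. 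The underlying $IEL$-frame already satisfies condition (IntCo) and, by Lemma \ref{1120}, the structural requirements of an $E5L5$-frame (hence in particular those of an $EL5$-frame), so $\mathcal{F}'$ is an $EL5$-frame satisfying (IntCo), i.e., an $EL5^*$-frame under the relational characterization of $EL5^*$ stated just before the theorem.

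In the relational $EL5^*$-model $(\mathcal{F}',g)$, the satisfaction clause $w\vDash\square\chi \Leftrightarrow w_\bot\vDash\chi$ depends only on the bottom world, and the clauses for the other connectives and for $K$ coincide with those used for the $IEL$-model. Consequently, for each $\psi\in\Phi$ we have $w_T\vDash\square\psi$ (because $w_\bot\vDash\psi$), while $w_T\nvDash\square\varphi$ (because $w_\bot\nvDash\varphi$). Hence $\square\Phi\nVdash^r_{EL5^*}\square\varphi$, and by the soundness half of the relational completeness of $EL5^*$, $\square\Phi\nvdash_{EL5^*}\square\varphi$. The in-particular clause then follows by taking $\Phi=\varnothing$. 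The main thing to check carefully is the semantic bridge in the last two paragraphs—namely, that the $IEL$-frame really becomes a legitimate $EL5^*$-frame after a maximal world is designated, and that restricting attention to $Fm_e$ ensures the satisfaction values of subformulas at $w_\bot$ transfer without distortion between the two readings; once this is granted, the argument is essentially bookkeeping.
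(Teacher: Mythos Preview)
Your proof is correct and follows essentially the same route as the paper: induction on the $IEL$-derivation for the forward direction, and for the converse, the contrapositive via relational completeness of $IEL$, augmenting the resulting $IEL$-frame with a designated maximal world to obtain an $EL5^*$-model, then invoking soundness of $EL5^*$. Your version is somewhat more explicit (spelling out the MP step via distribution, invoking Lemma~\ref{1120}, and noting that satisfaction of $Fm_e$-formulas at $w_\bot$ transfers unchanged), but the argument is the same; note that both you and the paper implicitly rely on \emph{strong} completeness of $IEL$ with respect to the relational semantics, whereas Corollary~\ref{1250} is stated only in weak form---this follows, however, from Theorem~\ref{1240} together with strong algebraic completeness from \cite{lewigpl}.
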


\begin{proof}
The left-to-right-implication of the Theorem follows straightforwardly by induction on the length of a derivation of $\varphi$ from $\Phi$ in $IEL$. Towards the right-to-left-implication, we suppose $\Phi\nvdash_{IEL}\varphi$. By completeness of $IEL$ w.r.t. our relational semantics, there exists a model $\mathcal{K}$ based on an $IEL$-frame (Definition \ref{1100}) that satisfies $\Phi$ but not $\varphi$. By definition, an $IEL$-frame (with a chosen designated maximal world) is an $EL5$-frame satisfying the additional condition (IntCo). That is, choosing a designated maximal world, we may interpret $\mathcal{K}$ as a model of $EL5^*$ (with our S5-style reading of modal operator $\square$). Since $\mathcal{K}$ viewed as an $IEL$-model satisfies $\Phi$ at the bottom world, $\mathcal{K}$ viewed as an $EL5^*$-model satisfies $\square\Phi$ at every world -- in particular, at the designated maximal world. By similar arguments, $\square\varphi$ is satisfied at no (maximal) world. Thus, $\square\Phi\nVdash_{EL5^*}\square\varphi$. Since $EL5^*$ is sound and complete w.r.t. the class of all $EL5$-frames satisfying the semantic condition (IntCo), we conclude $\square\Phi\nvdash_{EL5^*}\square\varphi$.
\end{proof}

Because of (IntCo) $\varphi\rightarrow K\varphi$, the following Generalization Rule holds in $EL5^*$: ``If $\varphi$ is a theorem of $EL5^*$, then so is $K\varphi$." Note that positive and negative introspection, $K\varphi\rightarrow K K\varphi$ and $\neg K\varphi\rightarrow K\neg K\varphi$, respectively, are particular instances of the scheme of intuitionistic co-reflection and are therefore theorems of $IEL$ as well as of $EL5^*$. It follows that $EL5^*$ contains the classical epistemic logic KT45, i.e. the epistemic logic that corresponds to S5, where the epistemic operator $K$ plays the role of the modal operator $\square$. In fact, if we add \textit{tertium non datur} (TND) to $IEL$, then we obtain KT45+(TB). That is, KT45+(TB) extends $IEL$ in a similar way as CPC extends IPC. In this sense, KT45+(TB) can be seen as the classical counterpart of $IEL$. Let us summarize the preceding discussion:\\

\textit{Logic $EL5^*$ contains classical epistemic logic KT45 and `a copy' of intuitionistic epistemic logic $IEL$ via the embedding $\varphi\mapsto\square\varphi$. Theorem \ref{2000} can be viewed as an analogue to result \eqref{1500} shown in \cite{lewjlc2}.}

\end{document}